\documentclass[11pt]{article}

\usepackage[margin=1in]{geometry}
\usepackage{amsmath,amssymb,amsthm,mathtools}
\usepackage{keytheorems}
\usepackage{array,booktabs,tabularx}
\usepackage{float}
\usepackage{bbold}
\usepackage{mdframed}
\usepackage{bookmark}
\usepackage{hyperref}
\usepackage{orcidlink}
\usepackage[nameinlink,capitalise]{cleveref}

\usepackage[
    backend=biber,
    style=alphabetic,
    url=false,
    doi=false,
    isbn=false,
    sorting=nyt,
    sortcites=true,
    backref=true,
    maxbibnames=10,
    maxcitenames=3,
    mincitenames=1,
]{biblatex}
\usepackage{xcolor}
\hypersetup{
  hidelinks,
  colorlinks,
  linkcolor=[rgb]{0.3,0.3,0.6},
  citecolor=[rgb]{0.2,0.6,0.2},
  urlcolor=[rgb]{0.6,0.2,0.2}
}

\DefineBibliographyStrings{english}{%
  backrefpage  = {cited on page},
  backrefpages = {cited on pages}
}

\newcommand{\doiorurl}{%
  \iffieldundef{doi}
    {\iffieldundef{url}
       {}
       {\strfield{url}}}
    {http://dx.doi.org/\strfield{doi}}%
}

\newcommand{\myhref}[1]{%
  \ifboolexpr{%
    test {\ifhyperref}
    and
    not test {\iftoggle{bbx:url}}
    and
    not test {\iftoggle{bbx:doi}}
  }
    {\href{\doiorurl}{#1}}
    {#1}%
}

\DeclareFieldFormat{title}{\myhref{\mkbibemph{#1}}}
\DeclareFieldFormat
  [article,inbook,incollection,inproceedings,patent,thesis,unpublished]
  {title}{\myhref{\mkbibquote{#1\isdot}}}

\newenvironment{funding}
  {\par\bigskip\begingroup\small\noindent\textbf{Funding.}\ }
  {\par\endgroup}

\newif\ifcomments
\commentsfalse   % Disable draft comments.

\ifcomments
  \newcommand{\comment}[2]{%
    \marginpar{\raggedright\tiny\textbf{#1: }\textit{#2}\par}}
  \newcommand{\important}[1]{{\color{red}#1}}
  \newcommand{\note}[1]{{\color{blue}#1}}
\else
  \newcommand{\comment}[2]{}
  \newcommand{\important}[1]{}
  \newcommand{\note}[1]{}
\fi

\newcommand{\nutan}[1]{\comment{NL}{#1}}

\newcommand{\magnus}[1]{\comment{MH}{#1}}
\newcommand{\manon}[1]{\comment{MB}{#1}}

\numberwithin{equation}{section}

\newkeytheorem{theorem}[numberlike=equation,style=plain]
\newkeytheorem{lemma}[numberlike=equation,style=plain]
\newkeytheorem{corollary}[numberlike=equation,style=plain]
\newkeytheorem{proposition}[numberlike=equation,style=plain]
\newkeytheorem{claim}[
  numberlike=equation,style=plain,
  refname={claim,claims},Refname={Claim,Claims}]

\newenvironment{claimproof}
  {\begin{proof}[Proof of the claim]}
  {\end{proof}}

\newkeytheorem{example}[numberlike=equation,style=plain]
\newkeytheorem{definition}[numberlike=equation,style=plain]
\newkeytheorem{observation}[
  numberlike=equation,style=plain,
  refname={observation,observations},Refname={Observation,Observations}]
\newkeytheorem{fact}[
  numberlike=equation,style=plain,
  refname={fact,facts},Refname={Fact,Facts}]
\newkeytheorem{remark}[numberlike=equation,style=remark]

\newcommand{\inbrace}[1]{\left\{#1\right\}}
\newcommand{\setdef}[2]{\inbrace{{#1}\ \mid\ {#2}}}

\newcommand{\F}{\mathbb{F}}
\newcommand{\N}{\mathbb{N}}
\newcommand{\Z}{\mathbb{Z}}
\newcommand{\R}{\mathbb{R}}
\newcommand{\C}{\mathbb{C}}

\newcommand{\rank}{\operatorname{rank}}
\newcommand{\Det}{\operatorname{\mathsf{Det}}}
\newcommand{\Perm}{\operatorname{\mathsf{Perm}}}
\newcommand{\HH}{\mathsf{H}}

\newcommand{\vecalpha}{\mathrm{\boldsymbol{\alpha}}}
\newcommand{\vecx}{\mathrm{\boldsymbol{x}}}
\newcommand{\vecy}{\mathrm{\boldsymbol{y}}}

\renewcommand{\char}{\textrm{char}}
\renewcommand{\epsilon}{\varepsilon}
\renewcommand{\tilde}{\widetilde}

\newcommand{\Blaser}{Bl{\"{a}}ser}

\newcommand{\IP}{\mathsf{IP}}
\newcommand{\EQ}{\mathsf{EQ}}
\newcommand{\UEQ}{\mathsf{UEQ}}
\newcommand{\SD}{\mathsf{SD}}
\newcommand{\SI}{\mathsf{SI}}
\newcommand{\BF}{\mathsf{BF}}
\newcommand{\HC}{\mathsf{HC}}
\newcommand{\ESym}{\operatorname{\mathsf{ESym}}}

\newcommand{\PCC}{\operatorname{PCC}}
\newcommand{\DCC}{\operatorname{DCC}}
\newcommand{\pep}{polynomial evaluation problem}
\newcommand{\Pep}{Polynomial evaluation problem}

\newcommand{\Asc}{\text{algebraic scanner}}
\newcommand{\APsc}{\text{algebraic probabilistic scanner}}
\newcommand{\ltAsc}{(\ell,t)\text{-scanner}}
\newcommand{\ltAPsc}{(\ell,t)\text{-probabilistic scanner}}

\definecolor{lightergray}{gray}{0.95}
\newenvironment*{problems}[3]
  {\begin{mdframed}[
     linecolor=lightergray,
     linewidth=2pt,
     leftmargin=0.5pt,
     rightline=false,
     bottomline=false,
     topline=false]
   \textbf{#1}:\newline
   \underline{Given:} #2\newline
   \underline{Check:} #3}
  {\end{mdframed}}

\title{Tight Lower Bounds for Algebraic Communication and Applications}
\author{\normalsize
  Manon Blanc\textsuperscript{1}\,\orcidlink{0000-0002-6961-089X}
  \quad
  Prateek Dwivedi\textsuperscript{1}\,\orcidlink{0000-0002-0572-3721}
  \quad
  Magnus Rahbek Dalgaard Hansen\textsuperscript{1}\,\orcidlink{0009-0002-5157-0268}
  \\[0.4em]
  Nutan Limaye\textsuperscript{1}\,\orcidlink{0000-0002-0238-1674}
  \qquad
  Meena Mahajan\textsuperscript{2}\,\orcidlink{0000-0002-9116-4398}}
\date{}

\begin{document}
\maketitle
\begingroup
\renewcommand{\thefootnote}{\arabic{footnote}}
\footnotetext[1]{IT University of Copenhagen, Copenhagen, Denmark.}
\footnotetext[2]{The Institute of Mathematical Sciences, Chennai, India; Homi Bhabha National Institute, Mumbai, India.}
\endgroup
\begin{abstract}
Communication complexity studies how much information must be exchanged to solve a problem whose input is split among two or more parties. The classical setting deals with Boolean inputs split between two parties. In this work, we study an algebraic variant, where the inputs are vectors over a field $\F \in \{\R, \C\}$.

There are two players, Alice and Bob, with inputs $ X\in \F^n$ and $Y\in \F^n$, respectively. We consider two kinds of tasks. In the \emph{polynomial evaluation problem}, the goal is to compute the value of a polynomial $ g\in \F[X,Y]$ on the given input. In the \emph{set-recognition problem}, the goal is to decide whether $ (X,Y)\in S $, for a given set $S\subseteq \F^{n} \times \F^n$. In both settings, Alice and Bob send evaluations of polynomials depending only on their own inputs. In the polynomial evaluation problem, these messages are combined to compute the evaluation $g(X,Y)$. In the set-recognition problem, a referee receives the messages and may apply polynomial tests to the messages received so far; the outcomes of these tests determine acceptance or rejection. The protocols may be deterministic or probabilistic.

This algebraic communication viewpoint, initiated by Abelson (JACM, 1980) and later developed by Grigoriev (Computational Complexity, 2008), has remained comparatively unexplored. In this work, we revisit this model and demonstrate that it continues to offer a rich framework for new lower bound questions. We present three sets of results.
\begin{itemize}
\item \textbf{Upper bounds and reductions.} We give non-trivial upper bounds for a range of natural polynomial evaluation and set-recognition problems, showing that algebraic communication can be significantly more powerful than simply sending all input coordinates.
%\pd{''algebraic communication'' could mean communicating polynomials.} 
We also prove reductions between different problems, which help organize the landscape of the model and identify which problems capture its main difficulties. These results serve two purposes: they illustrate the expressive power of the model, and they show that proving meaningful lower bounds in this setting is interesting.

\item \textbf{A lower bound framework and tight lower bounds.} Our main technical contribution is a general framework for proving lower bounds for algebraic set-recognition problems.
%\pd{The framework is specifically for probabilistic protocols} 
Using this framework, we prove several
%\pd{There is no framework in the deterministic setting. Although a probabilistic lower bound also implies the deterministic lower bound, we could leave it to be implicit.} 
%\magnus{I agree with Prateek; if () are okay maybe one can write "... prove probabilistic (and hence also deterministic) lower bounds..."}
probabilistic lower bounds for natural problems introduced earlier in the paper. In a number of cases, these lower bounds match the corresponding upper bounds, giving tight or near-tight characterizations of their algebraic communication.
 Along the way, we also recover and generalize some of Grigoriev's original lower bound results.

\item \textbf{Applications of the framework.} Finally, we give two applications of our framework. First, we use the communication lower bounds to prove lower bounds for a class of left-to-right algebraic algorithms, which we call algebraic scanners. Second, we show that our lower bound arguments extend beyond polynomials to a more general algebraic computational setting inspired by the Blum--Shub--Smale model.

\end{itemize}

\end{abstract}
\clearpage
\begin{funding}
MB, PD, MH, and NL acknowledge support from the Independent Research Fund Denmark (grant agreement No.\ 10.46540/3103-00116B) and from Basic Algorithms Research Copenhagen (BARC), funded by VILLUM Foundation Grant 54451. NL is supported by the Carlsberg Foundation grant CF25-1645.  MM acknowledges support from the J C Bose fellowship grant JCB/2023/000006 of the Anusandhan National Research Foundation (ANRF), India.
\end{funding}
\begingroup
\setcounter{tocdepth}{2}
\tableofcontents
\endgroup
\clearpage

\section{Introduction}
\label{sec:intro}

Communication complexity studies the amount of communication required to compute a function jointly by two or more parties on an input distributed among them. The standard setup has two players. The inputs are usually Boolean strings, and the two players aim to compute a Boolean function. This model was introduced by Yao~\cite{Yao79}. It is mathematically simple, and yet it captures many important aspects related to communication. It is extremely well-studied and has a large range of applications to other areas of algorithms and complexity theory such as circuit complexity, proof complexity, query complexity, and streaming algorithms~\cite{RaoYehudayoff,MuthukrishnanSurvey}.

A related model for a more general class of functions was introduced by Abelson~\cite{Abelson80} soon after Yao's two-party model for Boolean functions. Here, the two players, henceforth called Alice and Bob, receive inputs from $\F \in \{\R, \C\}$.
So, Alice receives $X = (X_1, \ldots, X_n) \in \F^n$ and Bob receives $Y = (Y_1, \ldots, Y_n) \in \F^n$. The goal is to evaluate a polynomial $g \in \F[X,Y]$ on the given input $(X,Y)$. We will call this the \emph{\pep}. The way they are allowed to do this is as follows: There is a referee receiving messages from Alice and Bob, say $a_1(X), \ldots, a_{r_1}(X) \in \F[X]$ and $b_1(Y), \ldots, b_{r_2}(Y) \in \F[Y]$, respectively. Now, the referee computes $g(X,Y) = P(a_1(X),\ldots, a_{r_1}(X), b_1(Y), \ldots, b_{r_2}(Y))$, where $P \in \F[z_1, \ldots, z_{r_1+r_2}]$. Among all such representations of $g$, we wish to find the one that minimizes $r_1+r_2$.  The problem has both computational and algebraic motivations. From the computational point of view, it is closely related in spirit to Yao’s communication model, since it studies computation under a partition of the input. From the algebraic point of view, it asks how succinctly can a polynomial in variables $X$ and $Y$ be represented using polynomials depending only on $X$ and polynomials depending only on $Y$.
%\magnus{This sentence is hard to understand and seems quite vital. I can't think of any better formulation though.}
%\nutan{Good point. I have tried to rephrase.}
%\magnus{Better!}

More than two decades after this, Grigoriev~\cite{Gri2008} introduced a variant of the above model for decision problems in this setting. These problems are called set-recognition problems\footnote{Simultaneously, \Blaser\ and Vicari~\cite{BV2008} also introduced a closely related model. We will talk about it in detail in the section about related work.}. In this case, the goal is to recognize a set $S \subseteq \F^n \times \F^n$: given $X \in \F^n$ to Alice and $Y \in \F^n$ to Bob, check whether $(X,Y) \in S$ or not. Here, the protocol is visualized as a tree. Every node of the tree belongs to either Alice or Bob. At an Alice node, say $v$, she sends a message, a polynomial $a_v \in \F[X]$, to the referee. Similarly, at a Bob node, say $u$, Bob sends a polynomial $b_u(Y) \in \F[Y]$ to the referee. At each node, the referee applies \emph{a test}\footnote{In the original model~\cite{Gri2008}, each node can perform multiple tests. Here, we assume that each node has only $1$ test. However, all our results hold for the original model as well.} to the messages it has received so far. The tests are also polynomials (and not arbitrary functions). The tree branches based on the result of the test. For $\F = \C$, the protocol tests $\{=0, \neq 0\}$, whereas, for $\F = \R$, the tests are $\{<0, =0, >0\}$.
 The leaves of the protocol are labelled by outputs, either accept or reject; an input is accepted or rejected according to the label of the leaf reached by the protocol on that input. The depth of the tree is the complexity measure, which we want to minimize. A noteworthy aspect of Grigoriev's paper is that it also analyzes a probabilistic communication model. (Formal definitions appear in \cref{sec:prelims}.)
 We write $\DCC_{\F}(S)$ and $\PCC_{\F}(S)$ for the deterministic and probabilistic communication complexities of recognizing $S$, respectively, over the field $\F$.

 Our work focuses on Abelson and Grigoriev's models of communication. We study the polynomial evaluation problem as in Abelson's work, as well as the set-recognition problem, as in Grigoriev's work. We consider both deterministic and probabilistic communication.
 The set-recognition problem is a decision problem and, thus, bears some resemblance to decision problems arising in the communication complexity of Boolean functions, such as those studied in~\cite{KushilevitzNisan, RaoYehudayoff}. While Boolean communication complexity is often amenable to combinatorial techniques, our setting is inherently different: the domain is $\F^n \times \F^n$, rather than a discrete Boolean domain, and the communicated messages are evaluations of polynomials. This calls for a different set of tools, combining algebraic techniques with analytic reasoning.

In~\cite{Gri2008}, the first strong lower bounds for probabilistic communication were proved. In our work, we present a general framework for proving lower bounds for probabilistic communication complexity of set-recognition problems. This involves a substantial generalization of previous techniques~\cite{Gri2008}. The generalization also achieves a quantitative improvement to the lower bounds obtained in~\cite{Gri2008}. Furthermore, we give an extensive analysis of the deterministic model by providing new upper and lower bounds. We also give two applications of our results. The first application uses our lower bounds as a black-box to obtain lower bounds for algorithmic problems. This is similar to the standard communication-to-streaming connection. The second application involves modifying our lower bound proofs further in order to obtain lower bounds for a BSS-inspired communication model, which is more general than the communication model by~\cite{Gri2008}.

\paragraph*{\Pep.} As mentioned above, the problem was first studied in~\cite{Abelson80}. It is straightforward that the complexity of the problem is always upper bounded by $2n$. Alice and Bob can simply send their entire input to the referee and then the referee can compute $g(X,Y)$. Notice that Alice, Bob, and the referee are allowed to compute arbitrary polynomials of their inputs. That is, the complexity of computing $a_i$'s, $b_i$'s and $P$ does not contribute towards the cost of solving the problem.

The main contribution of Abelson's work was an elegant mathematical statement showing that the complexity of the problem is lower bounded by twice the rank of a matrix, which we call the mixed Hessian matrix and denote it by $\HH_{X|Y}(g)$. It is an $n\times n$ matrix, where the $(i,j)$th entry is ${\partial^2 g}/{\partial X_i \partial Y_j}$.

For example, consider the inner product polynomial $\IP_n(X,Y) = \sum_{i=1}^n X_i Y_i$. It is not hard to see that $\HH_{X|Y}(\IP_n) = I_n$, where $I_n$ is the $n\times n$ identity matrix. Thus, Abelson's bound~\cite{Abelson80} gives  a tight lower bound of $2n$ for the \pep\ for $\IP_n(X,Y)$.

\paragraph*{Set-recognition problem.}
Here is another instructive example of the equality polynomial. For this discussion, let us fix $\F = \R$. Let $\EQ_n(X,Y) = \sum_{i=1}^n (X_i-Y_i)^2$. Since a sum of squares over $\R$ vanishes exactly when every summand vanishes,
\[
V(\EQ_n)=\{(X,X)\mid X\in\R^n\}.
\]
The $2n$ deterministic upper bound is not hard to see: Alice sends $X_i$ to the referee, then Bob sends $Y_i$ to the referee, and the referee tests whether $(X_i-Y_i)^2 = 0$ or not. If it is equal to $0$, then the protocol proceeds to check the next coordinate else it rejects and terminates. In our work, we show that $2n$ is tight for deterministic protocols. (We will elaborate on this in \cref{sec:our-results}.\footnote{In~\cite{BV2008} a $2n$ lower bound was proved, but this was for a restricted setting where $a_i$'s and $b_i$'s are homogeneous. We do not need such a restriction.})

In contrast, one can recognize $V(\EQ_n)$ using a probabilistic protocol with $O(1)$ communication. Using public randomness, Alice and Bob choose $r$ uniformly from a fixed set $A\subseteq\R$ of size at least $3n$. Alice computes $a(X) = \sum_{i=1}^n X_i r^i$. Similarly, Bob computes $b(Y) = \sum_{i=1}^n Y_i r^i$. The referee accepts if and only if $a(X) = b(Y)$. As a univariate polynomial of degree at most $n$ can have at most $n$ roots, the probability that the protocol accepts when $X$ and $Y$ are not equal is at most $1/3$. This shows that the probabilistic protocols are provably stronger than deterministic protocols. \magnus{Maybe it would be more precise to write "unbounded gap" instead of "exponentially stronger"? } \nutan{Your concern is valid. Does this change seem okay?}\manon{yes}
A similar exponential gap was proved in~\cite{Gri2008} for a set-recognition problem called the \emph{Orthant problem}. These gaps underline the fact that the model is non-trivial and proving lower bounds for probabilistic communication in this model is significant. The bounds for $V(\EQ_n)$ mentioned above may remind the reader of analogous bounds for the communication complexity of the Boolean equality function. We emphasize, however, that this similarity is only superficial: as we will see below, the techniques involved here are quite different.

In~\cite{Gri2008}, such lower bounds were achieved. In particular, it proved the following statement.

\begin{theorem}[\cite{Gri2008}]
\label{thm:grigoriev}
Let $\IP_n(X,Y) = \sum_{i=1}^n X_i Y_i$, let $S(\IP_n) = \{(X,Y) \in \R^n \times \R^n \mid \IP_n(X,Y) \leq 0\}$, and let $V(\IP_n) = \{(X,Y) \in \C^n \times \C^n \mid \IP_n(X,Y) = 0\}$. Then, $\PCC_{\R}(S(\IP_n))$ is at least $2n-6$ and  $\PCC_{\C}(V(\IP_n))$ is at least $2n-6$.
\end{theorem}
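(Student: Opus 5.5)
The plan is to reduce the probabilistic lower bound to a deterministic statement about a single computation path. Then I show that any short path accepts a set that is too "thin" compared with the target set. I treat the complex case $V(\IP_n)$ first and then adapt the argument to $\R$ with sign tests.

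\textbf{Step 1: fixing a deterministic protocol and a generic leaf.} Let a probabilistic protocol of depth $d<2n-6$ be given. By averaging against a suitable distribution $\mu$, concentrated near the hypersurface $V(\IP_n)$ with positive mass on both sides, there is a deterministic protocol $T$ of depth $d$ that errs on a set of $\mu$-measure at most $1/3$. Fix such a $T$. On any root-to-leaf path the equality tests $=0$ cut out proper subvarieties; points in them form a Zariski-thin set. So the "generic path", on which every $=0$ test fails or holds identically (or, over $\R$, a fixed sign pattern holds on an open set), carries almost all the mass near a generic point of the hypersurface.

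\textbf{Step 2: structure of a path.} Along a path, Alice sends $a_1(X),\dots,a_k(X)$ and Bob sends $b_1(Y),\dots,b_m(Y)$, with $k+m\le d$. Consider the map $\Phi=(a,b)\colon \C^n\times\C^n\to\C^{k}\times\C^m$. Its fibres have the product form $F_A(X)\times F_B(Y)$. Each $F_A$ has dimension at least $n-k$, and each $F_B$ has dimension at least $n-m$. Every point of one fibre reaches the same leaf. The key claim is that if $k\le n-3$ or $m\le n-3$, then near a generic point $(X_0,Y_0)\in V(\IP_n)$ the fibre through it is not contained in $V(\IP_n)$, and it meets both $V(\IP_n)$ and its complement in sets of comparable measure. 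To see this, suppose $k\le n-3$. Then $F_A(X_0)$ has dimension at least $3$. For $\IP_n$ to vanish on all of $F_A(X_0)\times F_B(Y_0)$, the span of $F_A(X_0)$ would have to be orthogonal to the span of $F_B(Y_0)$. Dimension counting (a $\ge 3$-dimensional family against a $\ge n-m$ dimensional one with $m\le d-k$) rules this out once $k+m<2n-6$. Since $d<2n-6$, one of $k\le n-3$ or $m\le n-3$ must hold, so the claim applies on every path. The slack constant $6$ comes from requiring fibre dimension $\ge3$ on each side in order to beat the degeneracies.

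\textbf{Step 3: measure argument.} Since fibres mix accepted and rejected inputs in proportions bounded away from $0$ uniformly, the protocol errs with probability above $1/3$ under $\mu$, which is a contradiction. Over $\R$, for $S(\IP_n)$, I replace varieties by semialgebraic cells. The fibre through a generic point of $\{\IP_n=0\}$ crosses the hypersurface transversally, since the gradient of $\IP_n$ restricted to the fibre is nonzero. So it contains points with $\IP_n<0$ and with $\IP_n>0$ in balls of comparable volume.

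\textbf{Main obstacle.} The hard step is Step 2, together with its quantitative version needed in Step 3. I must show that the restriction of $\IP_n$ to a generic fibre is nonconstant, with bounds uniform enough to integrate against $\mu$. I would first try the tangent-space version. At a generic point, the tangent space of $F_A\times F_B$ is $\ker Da\times\ker Db$. The gradient of $\IP_n$ is $(Y_0,X_0)$. Nonvanishing of the pairing $\langle Y_0,u\rangle+\langle X_0,v\rangle$ on this kernel fails only on a proper algebraic subset when the kernels are large, and this gives transversality; continuity then yields the uniform measure bounds locally.
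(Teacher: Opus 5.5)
\textbf{Steps 1 and 3 (real case).} This is the decisive gap. You fix a distribution $\mu$ first and then argue that every deterministic $T$ of small depth errs on $\mu$-mass more than $1/3$. Over $\R$ that inequality is false for \emph{every} $\mu$: for any probability measure $\mu$ on $\R^{2n}$ and any $\epsilon>0$ there is a deterministic protocol of depth $2$ for $S(\IP_n)$ with $\mu$-error below $\epsilon$. Here is one.
\begin{itemize}
\item Alice sends one polynomial that, on a large box, is within $1/4$ of a continuous function equal to $j$ on the $j$-th cell of a fine grid minus a thin margin. Bob does the same.
\item The referee tests a polynomial that is negative near $(j,l)$ if $\langle c_j,c'_l\rangle\le\eta$ and positive otherwise. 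Here $c_j,c'_l$ are cell centres and $\eta$ bounds $|\langle X,Y\rangle-\langle c_j,c'_l\rangle|$ on the cells.
\item Errors occur only outside the box, in the margins (small $\mu$-mass after a suitable shift of the grid), or on $\{0<\IP_n\le 2\eta\}$. That last set has $\mu$-mass tending to $0$ with the cell size.
\end{itemize}
The fibres of this protocol are at least $(2n-2)$-dimensional and generically cross $V(\IP_n)$. So your claim that transversal fibres force an error bounded away from $0$ uniformly in $T$ is false. A fibre is a null set, and the error region created by a crossing is local, depends on $T$, and can be made arbitrarily thin.

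The lower bound is a pointwise phenomenon with no hard distribution behind it. That is why the paper first fixes the finitely many $C_i$ and then works at a boundary point avoiding the zero sets of all tests not divisible by $\IP_n$ (\cref{claim:containment} with \cref{fact:dim-red}). This forces a testing polynomial $G$ with $\IP_n\mid G$. Then \cref{lem:general-rank,lem:comm-matrix} give depth at least $c(G)\ge 2(n-2)$, which implies the stated $2n-6$.

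Over $\C$ an averaging step is legitimate (\cref{clm:select-protocol}), but $\mu$ must charge $V(\IP_n)$ itself, since there are no ``sides''. What replaces your ``comparable measure'' is that only the all-$\neq0$ path is followed by a set of positive measure (\cref{clm:reject-path}).

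\textbf{Step 2.} This is also not correct as argued. For a single fibre, orthogonality of $\mathrm{span}\,F_A(X_0)$ and $\mathrm{span}\,F_B(Y_0)$ is excluded by dimension only when the two spans have dimensions summing to more than $n$. Your bounds $n-k$ and $\max(1,n-m)$ give this only when roughly $k+m<n$. For example, take $m=n$ and $k=1$, so $k+m<2n-6$ once $n\ge 8$. Nothing in your count then prevents the hypersurface $F_A(X_0)$ from lying in $Y_0^\perp$, and the constant $6$ is never actually derived.

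The real obstruction comes from varying $Y$ over a relatively open subset of $X_0^\perp$ while staying on one path. This forces $\ker Da(X_0)\subseteq (X_0^\perp)^\perp=\mathrm{span}(X_0)$. But it presupposes exactly the paper's first step: near a suitably chosen point of $V(\IP_n)$, some $C_i$ follows different paths on the two sides of the hypersurface.

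Once that is in place, your tangent-space idea is a genuine alternative to the mixed-Hessian step. Take the path followed on the side $\IP_n>0$. The fibres of its messages cannot enter $\{\IP_n<0\}$ near that point, which gives $\ker Da(X)\subseteq\mathrm{span}(X)$. Hence $k\ge n-1$, and symmetrically $m\ge n-1$, so that path already carries at least $2n-2$ messages.
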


\begin{remark}
    The theorem above summarizes Proposition 3.1 and Corollary 3.2 of Grigoriev~\cite{Gri2008}. In fact, these results establish a slightly more general lower bound than the version stated here, and our work generalizes those original statements as well. For ease of exposition, we state only this restricted form in this part of the introduction. We return to the more general formulation later, when discussing the proof techniques underlying our results.
\end{remark}

\subsection{Our results and techniques}
\label{sec:our-results}

We start with a summary of our results and organization. After this brief summary, we discuss the main theorems and the proof techniques used to prove them.

\subsubsection{Summary and organization}

\begin{itemize}
\item We provide non-trivial deterministic and probabilistic upper bounds for a broad array of set-recognition problems. These bounds show that both deterministic and probabilistic models are non-trivial. Additionally, we prove a tight deterministic lower bound for equality. These initial bounds also provide a gentle introduction to this model of computation. We present these results in \cref{sec:upper-bounds}.

\item We extend and improve \cref{thm:grigoriev} above in two ways. In the statement above, we have $F_n(X,Y) = \IP_n(X,Y)$ and the lower bounds are for the sets $S(F_n) = \{(X,Y) \mid F_n(X,Y) \leq 0\}$ over $\R$ and $V(F_n) = \{(X,Y) \mid F_n(X,Y) = 0\}$ over $\C$.

As our first extension, we prove that $F_n$ can be more general and not necessarily restricted to $\IP_n$. In fact, we show that if  $F_n$  satisfies a set of conditions (we formally provide these below), then we can prove strong lower bounds. Thus, this provides a new framework for lower bounds; to prove a lower bound, one can simply check these conditions. These results are presented in \cref{sec:framework}.

We also provide tight lower bounds for several set-recognition problems in deterministic and probabilistic models. Along the way, we show that the lower bound in \cref{thm:grigoriev} can be improved to $2n-4$ (see \cref{thm:inner-product-set-recognition}). %\pd{I believe we can improve it to $2n-4$. See \cref{thm:inner-product-set-recognition}. 
See \cref{sec:new-lower-bounds}.

\item Finally, we give two applications of our results. First, we show that the lower bounds above can be used to prove lower bounds for a certain class of algebraic algorithms. These algorithms scan the input from left to right. We call them algebraic scanners. They are reminiscent of streaming algorithms, but with crucial differences. These results appear in \cref{sec:scanner}.

Second, we show that our lower bound proofs work for a more general setting. Specifically, we show that our lower bound arguments extend beyond polynomials to a more general algebraic computational model, inspired by the Blum--Shub--Smale model~\cite{BCSS98}, for Alice, Bob, and the referee. These results appear in \cref{sec:bss}.
\end{itemize}

% \magnus{This is a bit sloppy notation in the table: $S \subseteq \{\R^n \times \R^n, \C^n \times \C^n$ implies that $S$ is either the whose $\R^n \times \R^n$ or $\C^n \times \C^n$}

% \nutan{I have carefully avoided these terms in the introduction. So, maybe they could move elsewhere?}

\subsubsection{Communication bounds and reductions}
\label{sec:intro-bounds}

We study the deterministic and probabilistic communication complexity of several natural sets. We prove reductions among these problems, allowing lower bounds for one set to be transferred to others.
%\pd{added def of exact equality here.}
For exact equality  some field $\F$, we write
\[
V(X-Y)
\coloneqq \{(X,X)\mid X\in\F^n\}
=V(X_1-Y_1,\ldots,X_n-Y_n).
\]
Thus, $V(X-Y)=V(\EQ_n)$ over $\R$, whereas $V(X-Y)\subsetneq V(\EQ_n)$ over $\C$ when $n\geq2$.

\begin{table}[H]
\centering
\renewcommand{\arraystretch}{1.35}
\begin{tabularx}{\textwidth}{@{}>{\centering\arraybackslash}p{2.4cm}>{\raggedright\arraybackslash}X@{}}
\toprule
\textbf{Problem} & \textbf{Description} \\
\midrule
$\EQ_n$ & The inputs are equal: $X=Y$. \\
\addlinespace
$\UEQ_n$ & The inputs agree up to a permutation of their coordinates: there exists $\sigma\in S_n$ such that $X_i=Y_{\sigma(i)}$ for every $i\in[n]$. \\
\addlinespace
$\SD_{n,k}$ & The inputs differ in at most $k$ coordinates: $\#\setdef{i\in[n]}{X_i\neq Y_i}\leq k$. \\
\addlinespace
$\IP_n$ & Over $\R$, recognize the threshold condition $\sum_{i=1}^nX_iY_i\leq0$; over $\C$, recognize the zero condition $\sum_{i=1}^nX_iY_i=0$. \\
\addlinespace
$\BF_{n,A}$ & For a fixed public matrix $A$, recognize $X^TAY\leq0$ over $\R$ where $A \in \R^{n \times n}$ and $X^TAY=0$ over $\C$, where $A \in \C^{n \times n}$. \\
\addlinespace
$\SI_n$ & The coordinate sets intersect: $X_i=Y_j$ for some $i,j\in[n]$, equivalently, $\prod_{i,j\in[n]}(X_i-Y_j)=0$. \\
\bottomrule
\end{tabularx}
\caption{Set-recognition problems considered in this paper. All problems are considered over $\F\in\{\R,\C\}$.}
\end{table}
%\pd{I've added exact equality in the table.}
Let $V(\UEQ_n)$ denote the set of $(X,Y)$ such that they agree up to a permutation of their coordinates. Similarly, let $V(\SD_{n,k})$ denote the set of inputs $(X,Y)$ such that they differ in at most $k$ coordinates.
\begin{theorem}
\label{thm:intro-upper-bound}
The probabilistic communication complexity of $V(\UEQ_n)$ is  $O(1)$  and that of $V(\SD_{n,k})$ is also $O(1)$ when $k = O(1)$.
\end{theorem}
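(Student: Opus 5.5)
For both sets I will follow the equality protocol from the introduction: use public randomness to pick a random linear or polynomial fingerprint of each party's input, and have the referee perform a single polynomial test on the resulting constant number of messages.

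\emph{Unordered equality.} The condition $X_i=Y_{\sigma(i)}$ for some $\sigma\in S_n$ says exactly that the multisets $\{X_i\}$ and $\{Y_i\}$ coincide. Equivalently, the univariate polynomials $p_X(t)=\prod_{i=1}^n(t-X_i)$ and $p_Y(t)=\prod_{i=1}^n(t-Y_i)$ are equal in $\F[t]$. I will pick $r$ uniformly from a fixed finite set $A\subseteq\R$ with $|A|\ge 3n$. Alice sends $a(X)=\prod_i(r-X_i)$, which is a polynomial in $X$ once $r$ is fixed. Bob sends $b(Y)=\prod_i(r-Y_i)$, and the referee tests $a-b=0$.

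If $(X,Y)\in V(\UEQ_n)$, the protocol always accepts. Otherwise $p_X-p_Y$ is a nonzero polynomial of degree less than $n$. It therefore has at most $n-1$ roots, so the error probability is at most $1/3$. This works verbatim over $\C$, since $r$ is drawn from a real set. The cost is $2$.

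\emph{Symmetric difference with at most $k$ positions.} Here the set is coordinatewise: at most $k$ indices satisfy $X_i\ne Y_i$. Let $D=X-Y$; the condition is that the Hamming weight of $D$ is at most $k$. My plan is to sketch $D$ linearly with a random matrix $R\in A^{m\times n}$, where $m=2k$ or $2k+1$. Alice sends the $m$ values $(RX)_j$ and Bob sends $(RY)_j$, so the referee knows $s=RD$. If $R$ has the property that every $2k$ columns are linearly independent, then $D$ of weight at most $k$ is uniquely determined by $s$. The difficulty is that the referee may only apply polynomial tests with a bounded number of branches. Deciding ``$s$ lies in the union of the $\binom{n}{k}$ $k$-dimensional subspaces $R\,\F^{S}$'' is a union of linear varieties. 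That union is the zero set of one polynomial, namely a product over $S$ of (sums of squares of) minors expressing $s\in\mathrm{span}(R_S)$. Over $\C$ I would use the ideal membership via a product of determinants of a suitable minor.

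I expect this to be the main obstacle. The referee's test is a fixed polynomial in the messages whose coefficients may depend on $r$, so depth stays $O(1)$. The cost is that the test polynomial has huge degree, which the model does not charge for. So I will carry out the following steps.
\begin{enumerate}
\item Choose $R_{j,i}=r^{ij}$, a Vandermonde-type matrix in a single random $r$, so that column independence holds for all but finitely many $r$.
\item Define the test $T(s)=\prod_{|S|=k}\Phi_S(s)$, where $\Phi_S(s)=0$ iff $s\in\mathrm{span}\{R_{\cdot,i}:i\in S\}$. Over $\R$, $\Phi_S$ is the sum of squares of the $(k+1)$-minors of $[R_S\mid s]$. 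Over $\C$, since a single complex test must express a union of varieties, I would take $m=k+1$ so that each membership is a single determinant $\det[R_S\mid s]$ (a hyperplane).
\item Bound the error. In the yes case, the product vanishes. In the no case, $D$ has weight greater than $k$. For $m=k+1$ and the random $r$, I then need $RD\notin$ any span of $k$ columns. Fixing $S$, this is $\det[R_S\mid RD]\ne0$ as a polynomial in $r$. I would argue it is nonzero using distinct leading exponents coming from an index of $D$ outside $S$. I would then union bound over $\binom nk$ sets $S$, using $|A|$ polynomial in $n^k$ times the degree; this is fine since only the message count matters.
\end{enumerate}
The total communication is $2(k+1)=O(1)$.
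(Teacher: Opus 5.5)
Your proposal is correct. For $V(\UEQ_n)$ it is essentially the paper's argument. The paper sends $\ESym(X)$ and $\ESym(Y)$ through the cost-$2$ equality fingerprint, and your value $\prod_i(r-X_i)$ is just a fingerprint of that same coefficient vector.

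For $V(\SD_{n,k})$, your final protocol (with $m=k+1$) is also the paper's protocol in disguise. Indeed $(RX)_j=\sum_i X_i(r^j)^i=q_X(r^j)$, so your test $\prod_S\det[R_S\mid s]$ is the paper's $\prod_S\Det M_S(\vecalpha)$ at $\alpha_j=r^j$, with the same cost $2k+2$. What differs is the nonvanishing argument.

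The paper draws $\alpha_1,\ldots,\alpha_{k+1}$ independently. Nonvanishing then follows black-box from Kayal's evaluation criterion (linear independence of $t^{s_1},\ldots,t^{s_k},q_{\mathrm{diff}}$) together with the multivariate PIT lemma. That criterion says nothing about the restriction to the curve $(r,r^2,\ldots,r^{k+1})$, since a nonzero polynomial in the $\alpha_j$ can vanish on it. So with a single random $r$ you must prove nonvanishing in $r$ directly, and your one-line sketch must rule out cancellation between different indices outside $S$.

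It does go through:
\begin{itemize}
\item By multilinearity, $\det[R_S\mid RD]=\sum_{i\notin S}D_i\det[R_S\mid R_{\cdot,i}]$.
\item Let $e_1<\cdots<e_{k+1}$ be the sorted elements of $S\cup\{i\}$. Then $\det[R_S\mid R_{\cdot,i}]=\pm\prod_{l}r^{e_l}\prod_{l<l'}(r^{e_{l'}}-r^{e_l})$, a Vandermonde determinant in the variables $r^{e_l}$.
\item This has leading coefficient $\pm1$ and degree $\sum_l e_l+\sum_{l<l'}e_{l'}$. Replacing $i$ by a larger $i'\notin S$ raises $\sum_l e_l$ by $i'-i>0$ and does not decrease any $\max(s,i)$ term, so the degree is strictly increasing in $i$.
\item Hence the largest $i\notin S$ with $D_i\neq0$ contributes an uncancelled top coefficient $\pm D_i$.
\end{itemize}

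Your route uses a single random field element and a univariate root bound, with total degree at most $\binom{n}{k}\,n\binom{k+2}{2}$, at the price of this explicit monomial computation. The paper's route is more modular. Finally, with $m=k+1$ the single-determinant test works verbatim over both $\R$ and $\C$. So the sum-of-squares and ``every $2k$ columns'' detour is unnecessary.
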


Note that $V(\UEQ_n)$ is simply asking whether $X$ and $Y$ are equal as multisets. %\magnus{Should be "equal as multisets", right?}
We show that checking this can be reduced to checking whether $n$ symmetric functions (polynomials) of $X$ and $Y$ are equal or not. Symmetric polynomials appear naturally in the argument because we are checking for a symmetric property. Thus, this observation allows us to obtain the same upper bound for $V(\UEQ_n)$ and $V(X-Y)$.
%\pd{changed $V(\EQ_n)$ to $V(X-Y)$ in the last sentence.}
To prove the upper bound for $V(\SD_{n,k})$, there are several technical insights. The main argument hinges on creating $k+1$ vectors in $\F^{k+1}$ such that they are linearly dependent if $X =Y$, but they are linearly independent with high probability if $X$ and $Y$ differ at $\geq k+1$ coordinates.

Next, we consider bilinear forms, $\BF_{n,A} = X^T A Y$, where $A$ is an $n \times n$ matrix over $\F$ and $X,Y \in \F^n$. It is clear that when $A = I_n$, this is exactly the same as  $\IP_n$. In fact, if $A$ is a rank $r$ matrix then, for any $(X,Y)$, there exists $(\widetilde{X}, \widetilde{Y})$ such that $\BF_n(X,Y) = \IP_r(\widetilde{X}, \widetilde{Y})$. We observe that this conversion is local to Alice and Bob, i.e., given $X$ to Alice (and $Y$ to Bob), Alice can compute $\widetilde{X}$ herself (and Bob can compute $\widetilde{Y}$ himself). Thus, the bounds for set-recognition problems related to $\IP_r$ transfer to that of $\BF_{n,A}$, where rank of $A$ is $r$. This yields a family of problems whose complexity is neither necessarily as large as the full $2n$ bound nor as small as  $O(1)$.

Finally, we consider exact equality, $V(X-Y)$. A natural approach to a constant-cost equality protocol is a polynomial encoding $\phi:\F^n\to\F$: Alice and Bob send $\phi(X)$ and $\phi(Y)$, and the referee compares them.
For its correctness, such an encoding must be injective, which is impossible when $n>1$; see \cref{prop:no-dimension-reducing-injection}.
On restricted domains, however, injective polynomial encodings may exist. In particular, we construct one for $\Z^n$ in \Cref{obs:eq-integers}, yielding a constant-cost equality protocol for integral inputs. Such restrictions are generally unnatural for equality over $\R$ or $\C$.
% \magnus{Maybe mention that $\Z^n$ is a restricted domain that works. And i don't think $\Z^n$ is an unnatural domain.}
% \pd{Added that in the paragraph. Does it read alright?}
% \nutan{I like this paragraph.}
Thus, equality on unrestricted inputs should, in general, be expected to be hard.
We prove a tight lower bound of $2n$ on the deterministic communication complexity of $V(X-Y)$.

\begin{theorem}[note={Deterministic communication complexity of equality},
  store=deterministicequalitytheorem,
  restate-keys={note={Restated}},
  label=thm:deterministic-equality]
  \label{thm:EQ-lbd}
For every $n\geq1$ and $\F\in\{\R,\C\}$, the deterministic communication complexity of $V(X-Y)$ is $2n$.
\end{theorem}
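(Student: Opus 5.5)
The upper bound of $2n$ follows from the protocol in the introduction. Alice and Bob alternately send $X_i$ and $Y_i$. The referee tests $X_i-Y_i=0$ (or $(X_i-Y_i)^2$ over $\R$) and moves to the next coordinate only on equality. So all the work is in the lower bound: every deterministic protocol recognizing $V(X-Y)$ must have depth at least $2n$.

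The plan is a dimension/genericity argument along the accepting diagonal path. Fix a protocol tree $T$ of depth $d$. Look at the inputs $(X,X)$ with $X$ generic, say outside a proper Zariski-closed set over $\C$, or in a small open ball over $\R$. Only finitely many branches exist, and each test outcome set is either a Zariski-closed set or its complement (over $\R$, a semialgebraic sign condition). Hence there is a leaf $\ell$ and a nonempty open set $U\subseteq\F^n$ such that every $(X,X)$ with $X\in U$ reaches $\ell$. I would choose $\ell$ so that along its path every test of type $=0$ holds identically on the diagonal near $U$, while every $\neq 0$ or strict-sign test holds strictly on $U$. Let $a_1,\dots,a_p\in\F[X]$ be Alice's messages on this path and $b_1,\dots,b_q\in\F[Y]$ Bob's, with $p+q\le d$. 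The key claim is that the set $R_\ell$ of inputs reaching $\ell$ contains every $(X,Y)$ with $X,Y$ near $U$, $a(X)=a(X_0)$ and $b(Y)=b(X_0)$, for some suitable $X_0\in U$. The path is determined only by the message values: the strict tests stay satisfied under small perturbation, and the $=0$ tests depend only on the message vector, which is unchanged. This requires choosing the path adaptively, but messages at later nodes are fixed polynomials once the path is fixed, so it works.

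Now suppose $p<n$. By the constant rank theorem, applied at a generic $X_0\in U$ where the map $a=(a_1,\dots,a_p)$ has locally maximal rank $\le p<n$, the fibre $a^{-1}(a(X_0))$ near $X_0$ is a manifold of positive dimension. So it contains some $X'\neq X_0$ arbitrarily close to $X_0$. Then $(X',X_0)$ reaches $\ell$, which accepts, yet $X'\neq X_0$. This is a contradiction, so $p\ge n$, and symmetrically $q\ge n$, giving $d\ge p+q\ge 2n$. Over $\C$ the same argument works with holomorphic constant rank, or with algebraic fibre dimension $\ge n-p$ at generic points.

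The main obstacle is making the "generic leaf" step rigorous over $\R$. There the tests are sign conditions on polynomials in the messages received so far, and the diagonal $\{(X,X)\}$ could a priori pass a node through a boundary test $=0$ that is not an identity in the message values. So I must check that perturbing $X$ to $X'$ inside Alice's fibre keeps the message tuple fixed, and hence keeps every test outcome fixed. That holds because tests are functions of message values only, which makes the claim clean. The remaining subtlety is that the fibre is taken of the \emph{path-specific} map $a$. This map is well defined because the path to $\ell$ is fixed before the perturbation, and the perturbed input follows the same path by induction on depth, since each test outcome depends only on already-equal message values. With that induction in place, the proof reduces to the constant rank theorem.
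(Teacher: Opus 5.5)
Your proposal is correct and follows the paper's proof. The paper also fixes an open set $U$ of diagonal inputs that all follow one accepting path: it takes $X_0$ off the zero set of the product of all tests whose diagonal restriction is not identically zero, which is exactly your choice of path. It then uses the same transcript-swapping observation, namely that $(X',X_0)$ with equal Alice-messages produces the same transcript as $(X_0,X_0)$, to show each party's path map is injective on $U$. The only difference is the last step: the paper cites the fact that no polynomial map from a nonempty open subset of $\F^n$ to $\F^r$ with $r<n$ is injective, while you prove this directly with the constant rank theorem at a point of locally maximal Jacobian rank, which is a valid self-contained substitute.
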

The theorem generalizes a lower bound from~\cite{BV2008}. Their proof uses ideas from projective geometry and requires that all the polynomials in the communication protocol are homogeneous.

We do not assume homogeneity. Our proof is an adaptation of the classical \emph{fooling-set argument} from communication complexity to the algebraic setting.

The idea can be described as follows. For a deterministic protocol $C$, and a root-to-leaf path $\pi$ in $C$, let $T_\pi$ denote the set of inputs that end up at the leaf of $\pi$. First, we observe that there must be an open set $U$ such that the diagonal inputs $(X,X)$ follows $\pi$ for every $X \in U$.
At a high level, this is a pigeonhole argument. Once we have such a leaf, then we can form a fooling set $\{(X,X) \mid X \in U \}$.

\subsubsection{Framework for probabilistic communication}
\label{sec:intro-framework}
To describe the results in this section, we need some notation.

\noindent \textbf{Notations.} Recall that for a polynomial $p(X,Y) \in \F[X,Y]$, $\HH_{X|Y}(p)$ is the mixed Hessian matrix. We will drop the suffix $X|Y$, if the partition of variables is clear from the context. Let $\nabla_Xp$ denote an $n \times 1$ vector in $\F[X,Y]^n$ such that the $i$th entry of the vector is $\partial p / \partial X_i$. Similarly define $\nabla_Yp$. We use $R_p$ to denote the determinant of the following $(n+1) \times (n+1)$ matrix.
\[R_p = \Det \begin{pmatrix}
    \HH_{X|Y}(p) & \nabla_Xp \\
    \left(\nabla_Yp\right)^T & 0
  \end{pmatrix}\]

We are now ready to state the new lower bound framework for probabilistic communication. To ease the exposition, we will focus on $\F = \R$ for the rest of the section.

\begin{theorem}[note={Tight lower bound for real communication},
  store=realthresholdtheorem,
  restate-keys={note={Restated}},
  label=thm:framework-R-threshold]
Let $F\in\R[X,Y]$ be irreducible, and let
$S(F)=\left\{(X,Y)\in\R^n\times\R^n\mid F(X,Y)\leq0\right\}.$
If there exists $u\in\R^n\times\R^n$ such that
\[
  F(u)=0
  \qquad\text{and}\qquad
  R_F(u)\neq0,
\]
then, the probabilistic communication complexity of $S(F)$ over $\R$ is $\geq 2n$, i.e., $\PCC_{\R}(S(F)) \geq 2n$.
\end{theorem}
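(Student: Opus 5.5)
We follow Grigoriev's strategy for \cref{thm:grigoriev}: a probabilistic protocol of small depth must have, for some fixed random choice, a deterministic protocol that is correct on a large subset of every neighbourhood of boundary points. We then show that such a protocol must ``see'' enough independent directions, and the condition $R_F(u)\neq 0$ guarantees that it needs $2n$ of them.

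\textbf{Step 1: localization.} Since $F$ is irreducible, $F(u)=0$, and $R_F(u)\neq0$ is an open condition, we first move to a nearby regular point. There we arrange $F(u)=0$, $\nabla F(u)\neq0$ (if $\nabla_XF$ and $\nabla_YF$ both vanished, the last row and column of the matrix would vanish and $R_F=0$), and $R_F\neq0$ on a neighbourhood $W$ of $u$. Near $u$, the hypersurface $Z=\{F=0\}\cap W$ is then a smooth real $(2n-1)$-manifold, and it is the common boundary of $S(F)$ and its complement. Irreducibility makes $F$ change sign across $Z$, so points on both sides of $Z$ lie arbitrarily close to $Z$.

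\textbf{Step 2: averaging.} Suppose the depth is $d<2n$ with error at most $1/3$. Take a probability measure $\mu$ supported near $Z$ that puts comparable mass on both sides, for instance Lebesgue measure on thin tubular shells around $Z\cap W$. By averaging, some deterministic protocol $C$ in the support is correct on a set of $\mu$-measure at least $2/3$.

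For each leaf $\ell$ of $C$, its region is semialgebraic and is cut out by tests on the messages $a_1(X),\dots,a_{k}(X),b_1(Y),\dots,b_m(Y)$ along the path, with $k+m\le d$. On an open dense subset of the generic leaves, the map
\[
\Phi=(a_1,\dots,a_k,b_1,\dots,b_m)
\]
has locally constant rank $\le d$. Hence the region is, locally, a union of fibres of a map of product form $\Phi(X,Y)=(\alpha(X),\beta(Y))$ with $\alpha:\R^n\to\R^{k}$ and $\beta:\R^n\to\R^{m}$.

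\textbf{Step 3: a fibre must cross $Z$ transversally.} Since $k+m<2n$, either $k<n$ or $m<n$; say $k<n$. Then the fibres of $\alpha$ near a generic point $(x_0,y_0)$ contain a curve $x(t)$ with $x'(0)=v\neq0$ and $D\alpha\, v=0$, and the leaf is constant along $(x(t),y)$. Perturbing $y$ inside the $\beta$-fibre gives more directions, and the leaf is constant along this whole product of fibres.

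We must show that some such product fibre genuinely crosses $Z$, i.e.\ that $F$ is not constant in sign along it. The key computation is the following. Suppose $F$ restricted to $\{x\in\alpha^{-1}(\cdot)\}\times\{y\in\beta^{-1}(\cdot)\}$ stays tangent to $Z$ along the whole of $Z$ near $u$. Differentiating $F(x,y)=0$ along the fibres then gives
\[
v^T\nabla_XF=0 \quad\text{on } Z.
\]
Differentiating this once more in the $Y$-directions that remain free, and combining with the analogous relation for $\beta$, yields a nonzero vector in the kernel of
\[
\begin{pmatrix}\HH(F)&\nabla_XF\\ (\nabla_YF)^T&0\end{pmatrix}.
\]
Its first block is $v$, and its last entry is a Lagrange-multiplier scalar coming from the tangency constraint. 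This contradicts $R_F\neq0$.

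This linear algebra is the main obstacle. The delicate point is that the constraint is only ``$F=0$'', not identically zero, so the bordered Hessian with multiplier, rather than $\HH$ alone, is the right object. I would first work out the model case $F=\IP_n$, where $R_F=-\IP_n(X,Y)\cdot(\text{unit})$ up to sign, and then redo it for general $F$.

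\textbf{Step 4: counting.} Once some leaf's region contains a connected set $\Gamma$ that meets both $S(F)$ and its complement in sets of comparable $\mu$-measure, $C$ errs on the smaller of the two pieces. Summing over the leaves meeting $W$ shows that $C$ errs on a constant fraction of $\mu$, contradicting the $2/3$ correctness. To obtain exactly $2n$ rather than $2n-O(1)$, the argument must not lose constants to the leaf-depth bookkeeping. We use one test per node and no separate ``transversality budget''; this is the point where we improve on $2n-6$. Hence $\PCC_{\R}(S(F))\ge 2n$.
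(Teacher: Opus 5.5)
\textbf{Genuine gap in Step~4.} Once Step~2 reduces to a single deterministic protocol $C$ that is correct on $\mu$-measure at least $2/3$, you must show that \emph{every} protocol of depth $<2n$ errs on \emph{more than} $1/3$ of $\mu$. ``Comparable mass on both sides'' and ``errs on a constant fraction of $\mu$'' do not give this. A $1{:}3$ split between the two sides of a leaf region gives error only $1/4$.

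Step~3 also does not supply what Step~4 needs. As phrased, it rules out tangency along \emph{all} of $Z$ near $u$, so it only yields transversality at some point. Step~4 needs three further things, none of which the proposal supplies:
\begin{itemize}
\item transversality at almost every point of $Z\cap W$;
\item transversality for every path $\pi$ followed by nearby inputs;
\item a measure comparison of the two halves of each leaf region, for example a reflection along the $\Phi_\pi$-fibres exchanging $\{F=s\}$ and $\{F=-s\}$ whose Jacobian determinant has absolute value tending to $1$ on thin shells, so that the error tends to $1/2$.
\end{itemize}
Knowing that ``some leaf's region contains a connected set $\Gamma$'' cannot control a global error fraction.

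\textbf{Step~3 is fine once done correctly.} The computation you flag as the main obstacle does go through, and more simply than you describe, because only the short side matters. Suppose $v(x)\neq0$ lies in $\ker D\alpha(x)$ and $v^T\nabla_XF=0$ on an open piece of $Z$. Differentiating at fixed $x$ in directions $w\perp\nabla_YF$ gives $\HH_{X|Y}(F)^Tv=\lambda\nabla_YF$. Hence $(v,-\lambda)$ lies in the kernel of the transposed bordered matrix, so $R_F=0$ there.
\begin{itemize}
\item No relation for $\beta$ is needed; its fibres may be points when $m\geq n$.
\item The model case $\IP_n$ will not help: $R_{\IP_n}=-\IP_n$ vanishes on all of $Z$, which is exactly why only $2n-4$ is known there.
\item Exactly $2n$ comes from $k+m\leq 2n-1\Rightarrow\min(k,m)<n$ together with $R_F\neq0$, not from test bookkeeping.
\end{itemize}
Since the non-transversality locus is semialgebraic, this gives transversality off a subset of $Z\cap W$ of dimension $\leq 2n-2$.

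\textbf{A repair that avoids the measure.}
\begin{enumerate}
\item Take a generic $z\in Z\cap W$. Every composed test of every $C_i$ that is not identically zero is then either nonzero at $z$ or equal to $F^aH$ with $H(z)\neq0$.
\item So near $z$, each $C_i$ follows one path $\pi_i^-$ on $\{F<0\}$ and one path $\pi_i^+$ on $\{F>0\}$.
\item The transversal $\Phi_{\pi_i^-}$-fibre through an inside point reaches an outside point with the same transcript, forcing $\pi_i^-=\pi_i^+$ for all $i$.
\item Hence the acceptance probability is the same on both sides of $Z$ near $z$, contradicting bounded error.
\end{enumerate}

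This would be a genuinely different proof from the paper's, which uses no fibres or measures. The paper's route is:
\begin{itemize}
\item a continuity argument puts $\partial S(F)$ inside the union of the test hypersurfaces;
\item a dimension count, using $\dim(\partial S(F)\cap V(F))=2n-1$ from the regular point $u$, forces $F$ to divide some test polynomial $G$;
\item \cref{lem:full-rank-criterion} applied to $G^2$, together with \cref{lem:comm-matrix}, gives $c(G)\geq 2n$.
\end{itemize}
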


\noindent \textbf{Proof technique.} The proof generalizes several technical ideas from~\cite{Gri2008}.

In fact, Grigoriev's theorem can be stated more generally than the statement of \cref{thm:grigoriev}, where the set one wants to recognize does not need to be $S(F)$ for a polynomial $F$, but instead can be any semialgebraic set $S$. Specifically, the general statement says that, for a semialgebraic set $S$, if $\dim(\partial S \cap V(\IP_n)) = 2n-1$, then its probabilistic communication complexity is at least $2n-6$.

Our first step is to observe that there is nothing special about $\IP_n$ in this setting. As long as the polynomial $F(X,Y)$ is irreducible and $F\nmid R_F$, where $R_F$ is the determinant polynomial defined above, then, for any semialgebraic set $S$ such that $\dim(\partial S \cap V(F)) = 2n-1$, we can obtain a lower bound for the set-recognition problem for $S$. This makes our framework widely applicable.

The second improvement comes from the fact we present a better analysis of the connection between the rank of $\HH_{X|Y}(F)$ and the communication complexity of communication protocols. As mentioned above, Abelson proves that if $\HH_{X|Y}(F)$ is high then the communication complexity of evaluating $F$ is also high. Grigoriev presents a technique to \emph{lift} lower bounds for polynomial evaluation problems to set-recognition problems for a related set. We fine-tune this lift to obtain better parameters and thus tight lower bounds.

The framework also has a complex analogue, with the boundary condition replaced with Zariski closure. This yields the following theorem.

\begin{theorem}[note={Tight lower bound for communication over $\C$},
  store=complexhypersurfacetheorem,
  restate-keys={note={Restated}},
  label=thm:framework-C-hypersurface]
Let $F\in\C[X,Y]$ be irreducible and let
$V(F)=\setdef{(X,Y)\in\C^n\times\C^n}{F(X,Y)=0}.$ If there exists $u\in\C^n\times\C^n$ such that
\[
  F(u)=0
  \qquad\text{and}\qquad
  R_F(u)\neq0,
\]
then $\PCC_{\C}(V(F)) \geq 2n$.
\end{theorem}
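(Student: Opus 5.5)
We plan to follow Grigoriev's strategy, adapted to a general irreducible $F$ and tuned so that no constant is lost. Suppose $\PCC_{\C}(V(F)) = d < 2n$. Fix a probabilistic protocol of depth $d$ recognizing $V(F)$ with error at most $1/3$. We will locate one deterministic protocol in its support together with a ``generic'' point of $V(F)$ at which that protocol behaves well.

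\textbf{Step 1: a good deterministic protocol and leaf.} By irreducibility, $V(F)$ is an irreducible hypersurface of dimension $2n-1$. The hypothesis $R_F(u)\neq 0$ for some $u\in V(F)$ means $F\nmid R_F$, so $R_F\neq 0$ on a Zariski-open dense subset $W$ of $V(F)$. Consider the finitely many leaves of each deterministic protocol in the support. Since $V(F)$ is irreducible, the points of $V(F)$ reaching a given leaf form a constructible set, and exactly one leaf's set is Zariski-dense in $V(F)$. This generic leaf is characterized by all tests along its path being ``$\neq 0$'' except those that vanish identically on $V(F)$.

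An averaging argument over the random choice, using error $\le 1/3$ against a suitable measure concentrated near $V(F)$, should give a deterministic protocol $C$ with two properties. First, its generic leaf on $V(F)$ accepts. Second, that leaf rejects a set of points off $V(F)$ that accumulates on a dense open part of $V(F)$; otherwise the protocol would accept a full neighborhood of a generic point, and would thus err with large probability. Concretely, the region near $V(F)$ off the hypersurface must mostly be rejected. Hence the ``$=0$'' tests along the generic path must carve out $V(F)$ locally. Call the messages along this path $a_1(X),\dots,a_{r_1}(X)$ and $b_1(Y),\dots,b_{r_2}(Y)$, with $r_1+r_2\le d$.

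\textbf{Step 2: reduce to local evaluation.} Near a generic point $w\in W$, the leaf region equals $V(F)$ intersected with an open set. It is cut out by equality tests of the form $P_j(a(X),b(Y))=0$. Locally, by irreducibility and dimension count, $V(F)$ coincides with the zero set of some $P(a(X),b(Y))$, so $F$ divides a nonzero composition $P(a,b)$ vanishing on $V(F)$. Picking a test whose gradient is nonzero at $w$ on the hypersurface, we get locally
\[
P(a(X),b(Y)) = h(X,Y)\,F(X,Y)
\]
with $h(w)\neq 0$, or more generally a defining function of the germ of $V(F)$ that factors through the messages.

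\textbf{Step 3: the rank bound (main obstacle).} We need an Abelson-type argument showing that any local equation $G=\phi(a(X),b(Y))$ of the hypersurface $V(F)$ at a point where $R_F\neq 0$ forces $r_1+r_2\ge 2n$. Abelson alone would give only $r\ge 2\rank \HH(G)$, and $\HH(hF)$ differs from $h\HH(F)$ by the rank-two correction $\nabla_X h\,\nabla_Y F^T+\nabla_X F\,\nabla_Y h^T$ on $V(F)$. This correction is what cost Grigoriev the additive constant.

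The fix is the bordered determinant. On $V(F)$, $R_{hF}=h^{n+1}R_F$, since the bordering kills the correction terms by column operations. Hence the bordered matrix of $G$ has full rank $n+1$ at $w$. We then show directly that the bordered matrix of $\phi(a,b)$ factors as
\[
\begin{pmatrix} J_a^T & 0\\ 0 & 1\end{pmatrix} M \begin{pmatrix} J_b & 0\\ 0 & 1\end{pmatrix},
\]
up to terms controlled through $\nabla_X G=J_a^T\nabla_a\phi$, where $J_a$ and $J_b$ are the Jacobians of the messages. This factorization should give $n+1\le \min(r_1,r_2)+1$, and hence, more carefully by splitting, $r_1\ge n$ and $r_2\ge n$. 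Indeed, $\rank J_a<n$ would make the $X$-block rows of the bordered matrix dependent.

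Summing gives $d\ge r_1+r_2\ge 2n$, a contradiction. The delicate parts are in Step 1: choosing the distribution so that averaging produces a single deterministic protocol whose generic path genuinely defines $V(F)$ near $w$. The rank computation in Step 3 is where tightness to exactly $2n$ comes from.
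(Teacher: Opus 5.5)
\textbf{Gap in Step 2.} The gap is in Step~2, and it undermines how you use Step~3. You pick a test $T=P(a(X),b(Y))$ whose gradient is nonzero at a generic $w\in V(F)$, so that $T=hF$ with $h(w)\neq0$. Nothing forces such a test to exist. What Step~1 actually gives is only $F\mid T$, i.e.\ $T=F^mH$ with $F\nmid H$ and possibly $m\geq2$. For example, the protocol that broadcasts all $2n$ coordinates and then tests $F(h)^2=0$ recognizes $V(F)$, and its only test divisible by $F$ is $F^2$.

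When $m\geq2$, $\nabla T$ vanishes on $V(F)$ and $\HH_{X|Y}(T)$ has rank at most one there. So the bordered matrix of $T$ at $w$ has rank at most $1$, and evaluating at $w$ gives no bound. Your fallback, a multiplicity-one local equation of $V(F)$ that factors through the messages, is asserted without argument. There is no evident way to produce it without already knowing that $(a,b)$ is locally an immersion, which is what you are trying to prove.

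\textbf{Repair and comparison.} Run Step~3 over $\C(X,Y)$ instead of at a point of $V(F)$. For $T=F^mH$, subtract $\tfrac{m-1}{F}\,\partial F/\partial X_i$ times the last row from row $i$. This gives
\[
  R_T=F^{(m-1)(n+1)}\Det\begin{pmatrix} J & K_X\\ K_Y^T & 0\end{pmatrix},
  \qquad K=mH\nabla F+F\nabla H,
  \qquad J_{ij}=\frac{\partial (K_Y)_j}{\partial X_i}.
\]
Your multiplicity-one row and column operations, carried out in $\C[X,Y]_{\langle F\rangle}$ where $H$ is a unit, show that the last determinant is congruent to $(mH)^{n+1}R_F\not\equiv0$ modulo $F$. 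Hence $R_T\neq0$. Your factorization of the bordered matrix through $\operatorname{diag}(J_a^T,1)$ and $\operatorname{diag}(J_b,1)$ is an exact identity, so it then gives $r_1,r_2\geq n$ over $\C(X,Y)$.

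Once repaired, your argument is a genuine alternative to the paper's. The paper handles multiplicity by squaring: $T^2=F^{2m}H^2$ has multiplicity at least two, \cref{lem:full-rank-criterion} gives $\rank\HH_{X|Y}(T^2)=n$, and the plain Abelson bound together with $c(T)\geq c(T^2)$ finishes. The two computations are linked by
\[
  \Det\HH_{X|Y}(G^2)=-2^nG^{n-1}\Det\begin{pmatrix}\HH_{X|Y}(G) & \nabla_XG\\ (\nabla_YG)^T & -G\end{pmatrix}.
\]
So your bordered version of Abelson's lemma replaces the squaring step and the full-rank lemma with a bordered-rank bound applied directly to the test. Your Step~1 is essentially the paper's averaging argument in \cref{clm:select-protocol,clm:reject-path,claim:F-div-Gk}.
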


\subsubsection{New lower bounds}
\label{sec:new-lbds}
Our framework can be applied to obtain several new lower bounds.

Let $\EQ_n(X,Y)=\sum_{i=1}^n(X_i-Y_i)^2$ and $\EQ_{n,\varepsilon}=\EQ_n-\varepsilon$. For $\F=\R$, the polynomial $\EQ_n$ evaluates to $0$ if and only if $X=Y$. The set $S(\EQ_{n,\varepsilon})$ consists of pairs $(X,Y)$ whose squared Euclidean distance is at most $\varepsilon$. We show that its probabilistic communication complexity is $2n$. This contrasts the $O(1)$ upper bound for $V(\EQ_n)$. For $\F=\C$, we show that the probabilistic communication complexity of $V(\EQ_{n,\varepsilon})$ is also $2n$.

We also analyze $S(\IP_n)$ over $\R$ and $V(\IP_n)$ over $\C$, proving that their probabilistic communication complexities are at least $2n-4$. This slightly improves the lower bound of $2n-6$ obtained in~\cite{Gri2008}.

Next, we consider $V(\SI_n)$, where $\SI_n(X,Y) = \prod_{i \in [n]} \prod_{j \in [n]} (X_i - Y_j)$.
Note that $\SI_n(X,Y) = 0$ if and only if $X,Y$ have a non-empty intersection when thought of as sets. Grigoriev analyzed $V(\SI_n)$ and proved a probabilistic communication lower bound of $n/2$ over $\C$ and a lower bound of $n$ over $\R$.
We prove a lower bound of $2n-6$ for recognizing two closely related sets. Specifically, we show that the probabilistic communication complexity of $S(\SI_{n,\varepsilon})$ over $\R$ and $V(\SI_{n,\varepsilon})$ over $\C$ is at least $2(n-3)$, where $\SI_{n, \varepsilon} = \SI_n - \varepsilon$.
%\magnus{First we write we prove tight lower bound of $2n$ and then next sentence write $2(n-3)$?}
%\nutan{Good catch. The previous sentence was outdated. Now changed it.}

\subsubsection{Applications of the lower bounds and the framework}
\label{sec:intro-app}
We present two applications of the above communication lower bounds.
\paragraph*{Algebraic Scanners.}
We define a class of algebraic algorithms, which we call \emph{algebraic scanners}.
An algebraic scanner processes its input from left to right. At each step, it can compute an univariate polynomial evaluation of the current input and store the resulting value in a memory register. In addition, it can perform polynomial tests on the contents of its memory registers, with the outcomes of these tests determining its subsequent actions. After completing a full scan of the input, the scanner either accepts or rejects. An $(\ell,t)$ algebraic scanner stores at most $\ell$ elements in its memory and makes at most $t$ tests. The complexity of the scanner is $(\ell+t)$. We also consider probabilistic variants of algebraic scanners.

We show that communication lower bounds for set-recognition problems lead to lower bounds on the complexity of the algebraic scanners. This connection is closely inspired by the well-known relationship between communication complexity of Boolean functions and lower bounds for streaming algorithms~\cite{MuthukrishnanSurvey,AlonMatiasSzegedy}.

We prove lower bounds for the following two natural geometric problems: (a) Given $X, Y \in \F^n$, check whether the $\ell_2$ distance between $X$ and $Y$ is at most $\varepsilon$, for some fixed $\varepsilon$. (b) Given $n$ green points followed by $n$ red points, check if there exists a bichromatic pair that collides, i.e. there are two points of different color with the same value.
\paragraph*{Alice, Bob, the referee, and BSS machines.} So far, we have assumed that Alice and Bob communicate evaluations of polynomials and the referee performs polynomial tests. Thus, implicitly, the internal computations of all three parties can be carried out by algebraic circuits, possibly of exponential size. We ask whether our framework continues to apply when Alice, Bob, and the referee are allowed to use a more general computational model internally. In particular, we allow the parties to compute rational functions, and, additionally, allow Alice and Bob to perform local tests involving rational functions of their respective inputs. These extensions bring the internal computational model closer to the Blum--Shub--Smale (BSS) model.
We prove that our lower bound framework continues to apply even in this more general setting. The proofs require a careful adaptation of the techniques we developed in \cref{sec:hessian} and \cref{sec:framework}. One interesting observation is that \emph{lifting} lower bounds from polynomial evaluation problems to set-recognition problems continues to hold even when rational functions are allowed in place of polynomials.

\subsection{Related work}

As mentioned above, Abelson~\cite{Abelson80} first studied the communication complexity of the polynomial evaluation problem. This was further developed by Luo and Tsitsiklis~\cite{LT93}
in the setting of distributed algebraic computation. These papers study the amount of information that must be exchanged by parties who jointly compute algebraic functions whose inputs are distributed between them.

Grigoriev~\cite{Gri2008}, as well as \Blaser\ and Vicari~\cite{BV2008}, studied the set-recognition problem. While our work mainly builds on Grigoriev’s model, \Blaser\ and Vicari also give a compelling formulation of algebraic communication. In their model, the parties may exchange arbitrary real or complex numbers. In particular, messages are not necessarily restricted to evaluations of polynomials. Moreover, Alice and Bob communicate with each other directly, rather than with a referee.  Consequently, their later messages may depend not only on their own inputs, but also on the previous transcripts, including messages received from the other party. Some variants of their model allow comparisons and zero-tests. The final decision may be made by Alice or Bob and may depend on parts of the input that are never communicated with the other party throughout the protocol. A main distinction from our setting (which is also Grigoriev's setting) is that \Blaser\ and Vicari~\cite{BV2008} focus on deterministic communication complexity, whereas we also study probabilistic protocols.

Prior to~\cite{BV2008}, Kraj{\'{\i}}cek had also introduced communication games where two parties exchange arbitrary messages rather than polynomial evaluations. This work aimed at generalizing the well-studied Karchmer-Wigderson games~\cite{KW90} and at connecting proof complexity to real communication complexity and monotone real formulas and circuits \cite{Krajicek98}.

Monotone real circuits form another related line of work connecting real algebraic computation, communication, and proof complexity. Pudlák introduced monotone real circuits in the context of proof complexity, and later work of Hrubeš and Pudlák studied their relationship to monotone Boolean complexity and real communication protocols \cite{Pudlak97,HrubesPudlak17}. More recently, Applebaum, Beimel, Nir, Peter, and Pitassi~\cite{ABNPP}
connected monotone real circuits and separable variants of them to secret-sharing schemes.

Finally, we mention Ben-Or's algebraic computation tree lower bound~\cite{BenOr1983}, another foundational result concerning exact decision problems over the reals.
Our work differs from Ben-Or's in both the underlying model of computation and in the techniques used to prove lower bounds. Ben-Or's proofs rely on combinatorial properties of the set being recognized. Such arguments do not directly capture the difficulty in our model, where the main challenge is to bound the amount of algebraic interaction between the two sides of the input.

% \pd{We can consider shortening this paragraph. }
% \nutan{Done. Please check.}

\subsection{Discussion and open problems}

We end the introduction with a discussion and several open problems.

\begin{itemize}
\item In our work, we have been able to show many tight lower bounds in deterministic as well as probabilistic communication models. Two problems which both Grigoriev's work and our work study are $S(\IP_n)$ over $\R$ and $V(\IP_n)$ over $\C$. Can we prove a lower bound of $2n$ for these problems? We are not able to apply our general framework for the polynomial $\IP_n$ because when $F = \IP_n$, there is no $u$ such that $F(u)=0$ and $R_F(u) \neq 0$. We prove the lower bound $2n-4$ by refining individual steps of the proof and analyzing them for these two sets. Could we instead use some other properties of $\IP_n$ to obtain a tight lower bound?
\item Our lower bound framework is quite easy to state and appears to be broadly applicable. It would be interesting to identify further natural set-recognition problems to which the framework applies, and to prove lower bounds for them.
\item As mentioned above, the communication model studied in~\cite{BV2008} is different from the model we study here. An interesting direction is to develop lower bound techniques for probabilistic communication in their model.
\end{itemize}

% \pd{If the BSS application goes through, then I feel we can remove this point.}

\section{Preliminaries}
\label{sec:prelims}

For $n \in \N$ and $n \geq 1$, let $X = \{X_1, \dots, X_{n}\}$ and $Y = \{Y_1, \dots, Y_{n}\}$ be two sets of variables throughout this paper. We denote by $S_n$ the symmetric group on $[n]$, and by $\F[X, Y]$ the ring of polynomials in the variables $X$ and $Y$ with coefficients over $\F \in \{\R,\C\}$.

For polynomials $p_1,\ldots,p_m\in\F[X,Y]$, their \emph{common zero set} (or variety) is defined by
\[
    V(p_1,\ldots,p_m)
    \coloneqq
    \setdef{(X,Y) \in \F^n\times\F^n}{p_j(X,Y)=0\text{ for every }j\in[m]}.
\]
When $m=1$, we simply write $V(p)$.
Likewise, the \emph{nonzero set} (or distinguished set) of $p$ is defined as the complement of $V(p)$:
$$
    D(p) \coloneqq \{(X,Y) \in \F^n\times\F^n \mid p(X,Y)\neq0\}.
$$
Finally, for $p \in \R[X,Y]$ the threshold set (also known as the sublevel set) of $p$ is defined as
$$
    S(p) \coloneqq \setdef{(X,Y) \in \R^n \times \R^n}{p(X,Y) \leq 0}.
$$
%Furthermore, for any geometric set $S \subseteq \R^n\times\R^n$, its \emph{ideal} is defined as the set of all polynomials that vanish completely on $S$: \pd{perhaps not needed. Can be removed later.}
%\magnus{I have removed the definition of $I(S)$. It seemed to not be used.}

\begin{definition}[Constructible set]
A set $W\subseteq\C^N$ is \emph{constructible} if it is a finite union of
sets of the form
\[
  \left\{z\in\C^N:
  f_1(z)=\cdots=f_k(z)=0,\quad g(z)\neq0\right\},
\]
where $f_1,\ldots,f_k,g\in\C[z_1,\ldots,z_N]$.
\end{definition}

\begin{definition}[Boundary of a set]
  Let $S \subseteq \mathbb{R}^N$. The \emph{Euclidean boundary} of $S$, denoted $\partial S$, is defined as the set of all points $x \in \mathbb{R}^N$ such that every open ball centered at $x$ intersects both $S$ and its complement $\mathbb{R}^N \setminus S$. Formally,
\[
  \partial S = \{ x \in \mathbb{R}^N \mid \forall r > 0, \, B_r(x) \cap S \neq \emptyset \text{ and } B_r(x) \cap (\mathbb{R}^N \setminus S) \neq \emptyset \},
\]
where $B_r(x)$ denotes the open Euclidean ball of radius $r$ centered at $x$.
\end{definition}

The following proposition records the dimension of a hypersurface over $\R$ and $\C$.
The real case follows from the Implicit Function Theorem and the dimension theory of real algebraic sets; see \cite[Proposition~3.3.10, 3.3.11]{bochnak1998} and \cite[Theorem~5-1]{Spivak1965}.
The complex case follows from the Krull Principal Ideal Theorem.

\begin{proposition}
\label{prop:real_hypersurface}
Let $\F\in\{\R,\C\}$, and let $p\in\F[x_1,\ldots,x_N]$ be a non-constant irreducible polynomial.
If $\F=\C$, then
\[
  \dim V(p)=N-1.
\]
If $\F=\R$ and if there exists $x_0\in V(p)$ such that
$\nabla p(x_0)\ne\mathbf 0$, then
\[
  \dim V(p)=N-1.
\]
If, moreover, $\nabla p(x)\ne\mathbf 0$ for every
$x\in V(p)$, then $V(p)$ is a smooth real algebraic hypersurface.
\end{proposition}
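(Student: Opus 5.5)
The plan is to prove the two cases separately, since they rest on different standard facts. In both cases, if $p$ is irreducible and non-constant, then $V(p)$ is a proper subset of the ambient space. Over $\C$, Hilbert's Nullstellensatz gives $V(p)\neq\emptyset$, and $V(p)\neq\C^N$ because $p\neq 0$.

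\emph{Complex case.} Irreducibility of $p$ makes the principal ideal $(p)$ prime, since $\C[x_1,\ldots,x_N]$ is a UFD. Hence $V(p)$ is an irreducible affine variety with coordinate ring $\C[x]/(p)$. By Krull's Principal Ideal Theorem, every minimal prime over $(p)$ has height at most $1$. Here $(p)$ itself is the unique minimal prime, and it has height exactly $1$ because it is nonzero and $\C[x]$ is a domain. Since $\C[x]$ is a finitely generated domain over a field, it is catenary with all maximal chains of length $N$. Therefore $\dim \C[x]/(p)=N-1$, which gives $\dim V(p)=N-1$.

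\emph{Real case.} Take $x_0\in V(p)$ with $\nabla p(x_0)\neq 0$; say $\partial p/\partial x_N(x_0)\neq0$. By the Implicit Function Theorem \cite[Theorem~5-1]{Spivak1965}, there is an open neighbourhood $U$ of $x_0$ such that $V(p)\cap U$ is the graph of a smooth function of $(x_1,\ldots,x_{N-1})$ over an open set of $\R^{N-1}$. So $V(p)$ contains a semialgebraic subset homeomorphic (indeed Nash-diffeomorphic) to an open subset of $\R^{N-1}$, and $\dim V(p)\ge N-1$ in the semialgebraic sense of \cite{bochnak1998}. For the upper bound, $V(p)\neq\R^N$ because $p$ is a nonzero polynomial and cannot vanish on a nonempty open set. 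A real algebraic subset of $\R^N$ of dimension $N$ contains a nonempty open set \cite[Proposition~3.3.10]{bochnak1998}. Hence $\dim V(p)\le N-1$, and so $\dim V(p)=N-1$.

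For smoothness, suppose $\nabla p$ vanishes nowhere on $V(p)$. The Implicit Function Theorem argument then applies at every point, so $V(p)$ is a $C^\infty$ submanifold of codimension one. It remains to check that $p$ generates the vanishing ideal $I(V(p))$ over $\R$, so that the algebraic notion of a nonsingular point in \cite[Proposition~3.3.11]{bochnak1998} coincides with the condition $\nabla p\neq 0$. Since $p$ is irreducible and $V(p)$ contains a regular point, $p$ changes sign near $x_0$. By the sign-changing criterion \cite[Theorem~4.5.1]{bochnak1998}, $(p)$ is then real, so $I(V(p))=(p)$. Every point of $V(p)$ is therefore nonsingular of dimension $N-1$.

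The main obstacle is this real-ideal step: without a regular point, an irreducible real polynomial such as $x_1^2+x_2^2$ can have a zero set of low dimension. I will handle it by citing the sign-change criterion rather than reproving it.
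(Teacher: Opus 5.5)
Your proposal is correct and takes the route the paper indicates: Krull's Principal Ideal Theorem for the complex case, and the Implicit Function Theorem together with the dimension theory of real algebraic sets from Bochnak--Coste--Roy for the real case. The paper only cites these results, so your additional sign-changing step, which gives $I(V(p))=(p)$ and thereby makes $\nabla p\neq\mathbf 0$ the algebraic nonsingularity condition, usefully spells out what the paper leaves to the citation.
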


One might hope to make communication highly efficient by uniquely encoding an input using fewer polynomial coordinates.
The following proposition rules out such a compression on any nonempty open set.

\begin{proposition}[\protect{\cite[Corollary 1.6.3]{Tao2014}}]
\label{prop:no-dimension-reducing-injection}
Let $\F\in\{\R,\C\}$, let $U\subseteq\F^n$ be a nonempty Euclidean open set, and let $0\leq r<n$.
No polynomial map $\Phi:U\to\F^r$ is injective.
\end{proposition}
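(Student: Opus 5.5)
We plan to prove \cref{prop:no-dimension-reducing-injection} by a purely topological argument: a polynomial map is in particular continuous, and no continuous injection from a nonempty open subset of a Euclidean space can land in a space of strictly smaller dimension. The tool is Brouwer's \emph{invariance of domain}: if $W\subseteq\R^m$ is open and $f:W\to\R^m$ is continuous and injective, then $f(W)$ is open in $\R^m$. We assume this theorem.

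First we reduce the complex case to the real case. Identify $\C^n$ with $\R^{2n}$ and $\C^r$ with $\R^{2r}$ via real and imaginary parts. Under this identification a Euclidean open $U\subseteq\C^n$ becomes an open subset of $\R^{2n}$. A complex polynomial map $\Phi:U\to\C^r$ becomes a map $U\to\R^{2r}$ whose components are real polynomials in the real and imaginary parts of the coordinates, so it is continuous. Injectivity is unaffected by this change of coordinates, and $r<n$ gives $2r<2n$. It therefore suffices to show that for $m>k$ there is no continuous injection from a nonempty open $W\subseteq\R^m$ into $\R^k$.

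For this, suppose $\Phi:W\to\R^k$ is continuous and injective with $k<m$. Compose with the linear embedding $\iota:\R^k\to\R^m$, $z\mapsto(z,0,\ldots,0)$. The map $\iota\circ\Phi:W\to\R^m$ is again continuous and injective. By invariance of domain its image $\iota(\Phi(W))$ is open in $\R^m$. It is also nonempty, since $W\neq\emptyset$. However, this image is contained in the proper linear subspace $\R^k\times\{0\}^{m-k}$, which has empty interior in $\R^m$ because $m-k\geq1$. This contradiction rules out such a $\Phi$, and hence no polynomial map $U\to\F^r$ is injective.

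The only substantial step is invariance of domain, which we take as a black box. If a more self-contained, differential argument were preferred, we would proceed as follows. Choose a point of $U$ where the rank $\rho$ of the Jacobian of $\Phi$ attains its maximum over $U$; we have $\rho\leq r<n$. Since rank is lower semicontinuous and $\rho$ is the maximum, the rank equals $\rho$ on a neighbourhood of that point. The constant rank theorem then makes $\Phi$ locally equivalent to a linear projection of rank $\rho<n$, whose fibres have positive dimension, so $\Phi$ is not injective.
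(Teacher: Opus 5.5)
Your argument is correct and is essentially the route the paper relies on: the paper gives no proof of its own and cites Tao's Corollary~1.6.3, the standard consequence of Brouwer's invariance of domain. Your derivation (identify $\C^n\cong\R^{2n}$, compose with the inclusion $\R^{k}\hookrightarrow\R^{m}$, and note that a nonempty open set cannot lie in a proper linear subspace) is that argument, and the constant-rank alternative you sketch is also valid.
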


In this work, we will also use the following classical bound on the number of zeros of a nonzero multivariate polynomial over a grid~\cite{Ore1922,DeMilloLipton1978,Zippel1979,Schwartz1980,LidlN1996}.

\begin{lemma}[Polynomial identity testing lemma]
\label{lem:pit}
Let $0\ne f\in\F[z_1,\ldots,z_m]$ have total degree at most $d$, and
let $A\subseteq\F$ be finite. If $\alpha_1,\ldots,\alpha_m$ are chosen
independently and uniformly from $A$, then
\[
  \Pr\left[f(\alpha_1,\ldots,\alpha_m)=0\right]
  \leq \frac{d}{|A|}.
\]
\end{lemma}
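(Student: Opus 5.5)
We plan to prove the lemma by induction on the number of variables $m$. For the base case $m=1$, a nonzero univariate polynomial $f\in\F[z_1]$ of degree at most $d$ has at most $d$ roots in the field $\F$. Hence at most $d$ elements of $A$ are zeros of $f$, and a uniformly random $\alpha_1\in A$ is a zero with probability at most $d/|A|$. (If $d\geq|A|$ the bound is trivial in any case, so there is nothing extra to check there.)

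For the inductive step, suppose the claim holds for polynomials in $m-1$ variables, and let $0\neq f\in\F[z_1,\ldots,z_m]$ have total degree at most $d$. Expand $f$ in powers of the last variable:
\[
  f=\sum_{i=0}^{k} z_m^{i}\, f_i(z_1,\ldots,z_{m-1}),
\]
where $k$ is the largest index with $f_k\neq0$. Since each monomial $z_m^k\cdot(\text{monomial of }f_k)$ appears in $f$, we get $\deg f_k\leq d-k$. Write $\alpha'=(\alpha_1,\ldots,\alpha_{m-1})$ and let $E$ be the event $f_k(\alpha')=0$. Applying the induction hypothesis to $f_k$ gives $\Pr[E]\leq (d-k)/|A|$. (When $k=d$, $f_k$ is a nonzero constant, so $\Pr[E]=0$; this case is covered automatically.)

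Now condition on any fixed value of $\alpha'$ outside $E$. The polynomial $g(z_m)=f(\alpha',z_m)$ is then a univariate polynomial of degree exactly $k$, and it is nonzero because its leading coefficient is $f_k(\alpha')\neq0$. Since $\alpha_m$ is independent of $\alpha'$ and uniform on $A$, the base case gives $\Pr[f(\alpha)=0\mid \alpha',\ \neg E]\leq k/|A|$. Averaging over $\alpha'$ and adding the two contributions,
\[
  \Pr[f(\alpha)=0]\leq \Pr[E]+\Pr[f(\alpha)=0\wedge\neg E]\leq \frac{d-k}{|A|}+\frac{k}{|A|}=\frac{d}{|A|}.
\]

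The only delicate points are the following. First, the degree bound on $f_k$ must be taken from the total degree of $f$. Second, the conditioning step requires that the variables are sampled independently, so that after fixing $\alpha'$ the last coordinate $\alpha_m$ is still uniform on $A$. Once these are handled, the rest is routine.
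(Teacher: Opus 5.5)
Your proof is correct. It is the standard Schwartz--Zippel induction on the number of variables: you peel off the leading power $z_m^k$, split on whether its coefficient $f_k(\alpha_1,\ldots,\alpha_{m-1})$ vanishes, and add the bounds $(d-k)/|A|$ and $k/|A|$. The paper gives no proof of this lemma and simply cites it as classical (Ore, DeMillo--Lipton, Zippel, Schwartz); your argument is essentially the proof in those sources.
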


%\begin{lemma}[\protect{\cite[Theorem 3.2]{Lang2002}}]\magnus{This citations leads to a result about solvable groups, not this result?}
%\pd{I can recheck the citation.}
%\pd{The irreducibility claims are now proved more directly. Refer \cref{lem:epsilon-equality-irreducible,lem:p_epsilon-irreducible}. }
\subsection{Communication complexity of a \pep}

Let $\F\in\{\R,\C\}$.
Alice holds $X=(X_1,\ldots,X_n)$ and Bob holds $Y=(Y_1,\ldots,Y_n)$.
To evaluate $g$ at $(X,Y)$, Alice communicates the values of finitely many polynomials in $X$, and Bob communicates the values of finitely many polynomials in $Y$.
The value of $g(X,Y)$ is then computed by applying a polynomial $P$ to all the communicated values.
The cost is the total number of values communicated.

\begin{definition}[Communication Complexity of a Polynomial]
\label{def:comm-complexity}
    For $r_1,r_2\in\mathbb{N}$, a \emph{polynomial-evaluation representation} of $g\in\F[X,Y]$ consists of polynomials
    $a_1, \dots, a_{r_1} \in \F[X]$,
    $b_1, \dots, b_{r_2} \in \F[Y]$, and
    a polynomial $P$ over $\F$ in $r_1+r_2$ variables such that
    \[
        g(X, Y) \quad = \quad P\Big(a_1(X), \dots, a_{r_1}(X), b_1(Y), \dots, b_{r_2}(Y)\Big).
    \]
    The cost of this representation is $r_1+r_2$. The \emph{communication complexity} of $g$, denoted $c(g)$, is the minimum cost of any such representation. In particular, constant polynomials have communication complexity $0$.
\end{definition}

\begin{example}[Inner product]
  \label{ex:inner-product}
Consider the inner product polynomial $\IP_n(X,Y)=\sum_{i=1}^n X_iY_i$.
Alice communicates the $n$ values $a_i(X) = X_i$, and Bob communicates the $n$ values $b_i(Y) = Y_i$.
Taking
\[
  P(z_1,\ldots,z_{2n}) \quad =\quad \sum_{i=1}^n z_i z_{n+i},
\]
we obtain $\IP_n(X,Y) = P\left(a_1(X),\ldots,a_n(X),b_1(Y),\ldots,b_n(Y)\right)$.
Thus $c(\IP_n)\le 2n$.
\end{example}
In fact, for any polynomial $g \in \F[X,Y]$, $c(g)$ is bounded by $2n$.
In the upcoming sections, we will see how Grigoriev uses linear algebra to prove a matching lower bound.

\subsection{Protocol for set-recognition}\label{subsec:porto_Set_rec}

While the previous section dealt with evaluating a polynomial, we now turn
to the communication complexity of \emph{recognizing a set}. Let
$\F\in\{\R,\C\}$. Alice receives $X\in\F^n$, and Bob receives
$Y\in\F^n$. Let $S \subseteq \F^n \times \F^n$ be the set that they want to recognize, i.e., they want to check whether their input belongs to this set or not.

\begin{definition}[Deterministic set-recognition protocol]
\label{def:set-recognition-protocol}
A deterministic communication protocol over $\F$ is a finite rooted tree, with the root at level $1$, whose leaves are labeled Accept or Reject.
Each internal node $v$ at level $r$ is assigned
\begin{itemize}
    \item a message polynomial $q_v$, associated with one of the parties:
    either $q_v\in\F[X]$ or $q_v\in\F[Y]$; and
    \item a testing polynomial $P_v\in\F[z_1,\ldots,z_r]$.
\end{itemize}

On input $(X,Y)$, for a $v$ at level $r$, the associated party broadcasts $h_r = q_v(X)$ if the node $v$ belongs to Alice or $h_r = q_v(Y)$ if $v$ belongs to Bob.  The outgoing edge is determined by evaluating $P_v$ on the transcript $h=(h_1,\ldots,h_r)$, where $h_i$ are the message polynomials along a path from the root to $v$.
Over $\R$, the three possible outcomes are $P_v(h)<0$, $P_v(h)=0$, and $P_v(h)>0$. Over $\C$, the two possible outcomes are $P_v(h)=0$ and $P_v(h)\ne0$.
The outgoing edges are labeled by these outcomes, so every input determines a unique root-to-leaf
path.\footnote{Grigoriev~\cite{Gri2008} considers the more general model
in which each node may have a finite family of testing polynomials. All
results in this paper continue to hold in that model as well.}

The protocol \emph{recognizes} a set $S\subseteq\F^n\times\F^n$ if this path ends at an Accept leaf exactly when $(X,Y)\in S$.
The cost of the protocol is the maximum number of internal nodes on a root-to-leaf path.
The \emph{deterministic communication complexity} $\DCC_{\F}(S)$ is the minimum cost of a protocol recognizing $S$. We drop the suffix when the field is clear from the context.
\end{definition}

To familiarize ourselves with the model, consider the equality polynomial
\[
  \EQ_n(X,Y)=\sum_{i=1}^n(X_i-Y_i)^2
\]
and its zero set
\[
  V(\EQ_n)=\setdef{(X,Y)\in\F^n\times\F^n}{\EQ_n(X,Y)=0}.
\]
Over $\R$, $V(\EQ_n)=V(X-Y)$; over $\C$, $V(X-Y)\subsetneq V(\EQ_n)$ when $n\geq2$. Nevertheless, the following protocol recognizes $V(\EQ_n)$ over either field.

First, Alice broadcasts $h_i=X_i$ for $1\le i\le n$, and Bob then broadcasts $h_{n+i}=Y_i$ for $1\le i\le n$. The testing polynomials at the first $2n-1$ nodes are fixed nonzero constants, so the protocol simply continues. After the final message, the referee evaluates
\[
  P_{2n}(h)=\sum_{i=1}^n(h_i-h_{n+i})^2.
\]
The protocol accepts if this value is zero and rejects otherwise. Thus, it recognizes $V(\EQ_n)$ with cost $2n$ over both $\R$ and $\C$.

\begin{definition}[Probabilistic set-recognition protocol]\label{def:probabilistic-set-recognition-protocol}
A probabilistic communication protocol $\mathcal{C}$ consists of deterministic protocols
$C_1,\ldots,C_N$, for some finite $N$, as in \cref{def:set-recognition-protocol}, together with
probabilities $\mu_1,\ldots,\mu_N>0$ satisfying
$\sum_{i=1}^N\mu_i=1$. Here, $\mu$ is a probability distribution over the deterministic protocols $C_1, \ldots, C_N$.
The protocol recognizes
$S\subseteq\F^n\times\F^n$ with
bounded error if every input is classified correctly with probability at
least $2/3$; that is,
\[
  \Pr_{\mu}\left[\mathcal{C} \text{ accepts }(X,Y)\iff(X,Y)\in S\right]
  \geq \frac{2}{3}.
\]
Its cost is $\max_i\operatorname{cost}(C_i)$, and the \emph{probabilistic
communication complexity} $\PCC_{\F}(S)$ is the minimum cost of a
probabilistic protocol recognizing $S$ with bounded error. We use $\PCC(S)$ when the field is clear from the context.
\end{definition}

The following example demonstrates that probabilistic protocols can be significantly more efficient than deterministic protocols. (We will later show a tight lower bound for the same problem in the deterministic setting.)

\begin{example}[Equality]
  % \magnus{Move to upper bounds section?}
The exact-equality set $V(X-Y)$ admits a probabilistic protocol of cost $2$.
\end{example}
\begin{proof}
Fix a set $A\subseteq\F$ of size $3n$, and choose $r\in A$ uniformly at
random. For each choice of $r$, Alice and Bob respectively broadcast
\[
  h_1=\sum_{i=1}^n X_i r^i
  \qquad\text{and}\qquad
  h_2=\sum_{i=1}^n Y_i r^i.
\]
% \nutan{Same. $x \rightarrow X$, $y \rightarrow Y$.}
The first testing polynomial is constant, while the second is
$P(z_1,z_2)=z_1-z_2$. The protocol accepts exactly when $P(h_1,h_2)=0$.
Thus every deterministic protocol in the distribution has cost $2$.

If $X=Y$, the protocol always accepts. If $X\ne Y$, then
\[
  Q(z)=\sum_{i=1}^n(X_i-Y_i)z^i
\]
is a nonzero polynomial of degree at most $n$. It has at most $n$ roots,
so the probability that $Q(r)=0$ is at most $n/(3n)=1/3$. Hence the
protocol rejects unequal inputs with probability at least $2/3$.
\end{proof}

\section{Communication bounds and reductions}
\label{sec:upper-bounds}

In this section, we prove communication bounds for natural sets such as equality and sparse difference. The discussion in this section will also serve as a warm-up to the communication models.

% \nutan{Define problems wrt $\F$. Use suffixes for $n, \varepsilon$ etc.}

\subsection{Upper bounds for equality and its variant}

In \cref{subsec:porto_Set_rec}, we discussed a deterministic protocol for the set $V(\EQ_n)$.
We now turn to exact equality, $V(X-Y)$, and show that integral inputs admit a constant-cost deterministic protocol.
The following was also observed as a remark in \cite{Vicari2008}.

\begin{observation}[Equality over integers]
\label{obs:eq-integers}
Let $\F\in\{\R,\C\}$ and $n \geq 1$. Under the promise that $X,Y\in\Z^n$, exact equality admits a deterministic protocol over $\F$ of complexity exactly $2$.
\end{observation}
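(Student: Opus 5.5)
We give the protocol explicitly and then argue that no cheaper protocol exists; the lower bound is the only part needing care.

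\textbf{Upper bound.} The plan is to use an injective polynomial encoding of $\Z^n$ into $\Z$. Negative entries are handled by the square map $x\mapsto x^2$ after a shift. Set $\psi(x)=(2x)^2+ (x<0)$; this is not polynomial, so instead we use the polynomial $\psi(x)=2x^2+x$. On $\Z$ it is injective, since $\psi(x)=\psi(y)$ with $x\neq y$ gives $2(x+y)+1=0$, which is impossible for integers. It also takes values in $\N$: for $x\ge0$ this is clear, and for $x\le -1$ we have $2x^2+x=x(2x+1)\ge0$.

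Next we combine coordinates with an injective polynomial pairing $\N^2\to\N$, namely the Cantor pairing $\pi(a,b)=\tfrac{(a+b)(a+b+1)}{2}+b$, which is a polynomial with rational coefficients. Iterating gives
\[
\phi(X)=\pi(\psi(X_1),\pi(\psi(X_2),\ldots,\psi(X_n)\ldots)),
\]
an injective polynomial map $\Z^n\to\N$. Alice broadcasts $h_1=\phi(X)$, and the first test is a nonzero constant. Bob broadcasts $h_2=\phi(Y)$, and the referee tests $P(z_1,z_2)=z_1-z_2$, accepting iff $P(h_1,h_2)=0$. Under the promise $X,Y\in\Z^n$, injectivity gives $\phi(X)=\phi(Y)\iff X=Y$. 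This costs $2$ over both $\R$ and $\C$, since only the $=0$ outcome is used.

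\textbf{Lower bound.} We must show that cost $1$ (and cost $0$) cannot work. A protocol with a single internal node has exactly one message, say $q(X)$ from Alice, and its test $P_1(h_1)$ depends only on $X$. The accept/reject decision is therefore a function of $X$ alone, and such a protocol cannot recognize equality on $\Z^n\times\Z^n$. To see this, take $X=(0,\ldots,0)$: the input $Y=X$ must be accepted while $Y=(1,0,\ldots,0)$ must be rejected, yet the protocol gives the same answer on both. The case where the single message comes from Bob is symmetric. A protocol of cost $0$ outputs a constant, which is also wrong.

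\textbf{Main obstacle.} The only delicate step is making the encoding a genuine polynomial that is injective on all of $\Z^n$, negative integers included. The choice $2x^2+x$ handles the sign issue, and Cantor pairing handles the combination of coordinates.
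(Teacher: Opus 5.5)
Your proof is correct and follows essentially the same route as the paper. Both use an injective quadratic map $\Z\to\Z_{\geq0}$ (your $2x^2+x$, the paper's $2z^2-z$) and an injective polynomial pairing on $\Z_{\geq0}^2$ (your Cantor pairing, the paper's $(u+v)^2+v$), iterated over the coordinates and followed by one comparison test, and both prove the lower bound by noting that a single message depends on only one party's input. The only thing to fix is the abandoned first attempt $\psi(x)=(2x)^2+(x<0)$, which you should delete.
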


\begin{proof}
  We will first construct an injective polynomial map for Alice and Bob, and then use it to construct a depth-$2$ protocol for recognizing equality over the integers.

  \textbf{Injective polynomial map.} We first construct an injective polynomial
  $\Phi_n:\Z^n\to\Z_{\geq0}$. Define
  \[
    \varphi(z)=2z^2-z
    \qquad\text{and}\qquad
    \psi(u,v)=\left(u+v\right)^2+v.
  \]
  The map $\varphi:\Z\to\Z_{\geq0}$ is injective: if
  $\varphi(z)=\varphi(w)$, then
  \[
    (z-w)\left(2(z+w)-1\right)=0,
  \]
  and the second factor cannot vanish for integers $z,w$. The map
  $\psi:\Z_{\geq0}^2\to\Z_{\geq0}$ is also injective. Suppose
  $\psi(u,v)=\psi(u',v')$, and set $s=u+v$ and $s'=u'+v'$. If $s>s'$,
  then $s\geq s'+1$, and therefore
  \[
    \psi(u,v)\geq s^2\geq(s'+1)^2
    >s'^2+s'\geq\psi(u',v').
  \]
  Since $\psi(u,v) > \psi(u',v')$, we have a contradiction. By symmetry, $s<s'$ is also impossible, so $s=s'$.
  It follows that $s^2+v=s^2+v'$, hence $v=v'$ and then $u=u'$.

  Now define recursively
  \[
    \Phi_1(z_1)=\varphi(z_1),
    \qquad
    \Phi_k(z_1,\ldots,z_k)
    =\psi\left(\Phi_{k-1}(z_1,\ldots,z_{k-1}),\varphi(z_k)\right).
  \]
  Since $\varphi$ and $\psi$ are injective polynomial maps, so is
  $\Phi_n$. These maps have integer coefficients and therefore define valid
  message polynomials over both $\R$ and $\C$.

  \textbf{Set-recognition protocol.}
  Consider the following depth-$2$ protocol. At the root, Alice uses the
  message polynomial $q_1(X)=\Phi_n(X)$ and broadcasts
  $h_1=q_1(X)$. The testing polynomial at this node is the constant
  polynomial $1$, so the protocol continues to the second node. There,
  Bob uses the message polynomial $q_2(Y)=\Phi_n(Y)$ and broadcasts
  $h_2=q_2(Y)$. The testing polynomial is
  \[
    P(z_1,z_2)=z_1-z_2.
  \]
  The protocol accepts on the zero branch and rejects otherwise. Hence, on
  the promised inputs, it accepts precisely when
  $\Phi_n(X)=\Phi_n(Y)$, which, by injectivity, is equivalent to $X=Y$.

  Finally, a depth-$1$ protocol has only one message, belonging to one
  party, and its output therefore depends only on that party's input. It
  cannot recognize equality even on the promised inputs, since the answer
  depends on both inputs. Thus the complexity is exactly $2$.
\end{proof}

Next, we consider unordered equality, in which Alice's and Bob's inputs are viewed as multisets rather than ordered vectors.
Equivalently, their inputs are equal up to a permutation of the coordinates.

% \pd{UEQ should not have EQ. Makes the result weak.}
\begin{observation}[Unordered equality reduces to equality]
\label{obs:unordered-equality}
Let $\F$ be any field, and define
\begin{equation}
  V(\UEQ_n)
  =\setdef{(X,Y)\in\F^n\times\F^n}
  {\text{there exists }\sigma\in S_n\text{ such that }
    X_i=Y_{\sigma(i)}\text{ for all }i\in[n]}.
    \label{eq:ueq-def}
\end{equation}
Every deterministic or probabilistic protocol recognizing $V(X-Y)$ gives a protocol of the same cost recognizing $V(\UEQ_n)$. Consequently,
\[
  \DCC_{\F}\left(V(\UEQ_n)\right)
  \leq \DCC_{\F}\left(V(X-Y)\right)
  \qquad\text{and}\qquad
  \PCC_{\F}\left(V(\UEQ_n)\right)
  \leq \PCC_{\F}\left(V(X-Y)\right).
\]
\end{observation}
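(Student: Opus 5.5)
The plan is a local reduction through elementary symmetric polynomials, in the spirit of the remark after \cref{thm:intro-upper-bound}. Alice replaces her input $X$ by $\widetilde X=(e_1(X),\ldots,e_n(X))$, and Bob replaces $Y$ by $\widetilde Y=(e_1(Y),\ldots,e_n(Y))$. Here $e_k$ denotes the $k$th elementary symmetric polynomial; I write it out explicitly as $\sum_{i_1<\cdots<i_k}\prod_j X_{i_j}$ rather than introduce a new macro. Each map is a polynomial map computed by one party from its own input alone.

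\textbf{Step 1: the reduction is exact.} I will check that $(X,Y)\in V(\UEQ_n)$ if and only if $(\widetilde X,\widetilde Y)\in V(X-Y)$. By Vieta's formulas,
\[
\prod_{i=1}^n(z-X_i)=z^n+\sum_{k=1}^n(-1)^k e_k(X)z^{n-k},
\]
and similarly for $Y$. If $X_i=Y_{\sigma(i)}$ for all $i$, the two monic polynomials are equal, so $\widetilde X=\widetilde Y$. Conversely, if $\widetilde X=\widetilde Y$, the two polynomials coincide. Their roots counted with multiplicity then agree, because $\F[z]$ is a UFD and linear factors are irreducible. This works over any field, with no algebraic closure needed, since both polynomials already split into linear factors. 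Hence $X$ and $Y$ are equal as multisets, which gives a permutation $\sigma$.

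\textbf{Step 2: transform the protocol.} Let $C$ be a deterministic protocol for $V(X-Y)$. Build $C'$ on the same tree with the same testing polynomials. At each Alice node, replace the message $q_v(X)$ by $q_v(\widetilde X)$, which is again a polynomial in $X$ because polynomials compose. Do the same at Bob nodes. On input $(X,Y)$, the transcript of $C'$ equals the transcript of $C$ on $(\widetilde X,\widetilde Y)$. Both protocols therefore follow the same path and reach the same leaf, and the cost is unchanged. By Step 1, $C'$ recognizes $V(\UEQ_n)$.

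For a probabilistic protocol, apply this transformation to each $C_i$ and keep the weights $\mu_i$. The success probability on $(X,Y)$ equals the success probability of the original protocol on $(\widetilde X,\widetilde Y)$, which is at least $2/3$. Taking minima over protocols gives both inequalities.

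The only point needing care is the converse in Step 1, since it is where the multiset structure is recovered. For arbitrary $\F$ it rests on unique factorization. No step looks like a real obstacle.
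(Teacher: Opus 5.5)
Your proposal is correct and follows essentially the same route as the paper. You map each input to its vector $\ESym$ of elementary symmetric polynomials and use $\prod_i(t-Z_i)$ with unique factorization in $\F[t]$ to get $(X,Y)\in V(\UEQ_n)$ if and only if $\ESym(X)=\ESym(Y)$; you then precompose each Alice (resp.\ Bob) message polynomial with $\ESym(X)$ (resp.\ $\ESym(Y)$), leaving the tree, testing polynomials, and weights unchanged.
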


\begin{proof}
For $Z\in\F^n$, let $\ESym_k(Z)$ denote the $k$th elementary symmetric polynomial evaluated at $Z$, and define
\[
  \ESym(Z)=\left(\ESym_1(Z),\ldots,\ESym_n(Z)\right).
\]
The vector $\ESym(Z)$ determines the monic polynomial with roots
$Z_1,\ldots,Z_n$:
\begin{equation}
  \prod_{i=1}^n(t-Z_i)
  \quad =\quad t^n-\ESym_1(Z)t^{n-1}+\cdots+(-1)^n\ESym_n(Z). \label{eq:esym-generator}
\end{equation}
Consequently, the coordinates of $X$ and $Y$ agree as multisets if and only if $\ESym(X)=\ESym(Y)$. Thus
\[
  (X,Y)\longmapsto \left(\ESym(X),\ESym(Y)\right)
\]
transforms an instance of unordered equality into an instance of equality, with
\[
  (X,Y)\in V(\UEQ_n)
  \quad\Longleftrightarrow\quad
  \left(\ESym(X),\ESym(Y)\right)\in V(X-Y).
\]
Indeed, the forward implication follows because every elementary symmetric polynomial is invariant under permutations.
Conversely, if $\ESym(X)=\ESym(Y)$, then comparing coefficients in \cref{eq:esym-generator} gives
\[
  \prod_{i=1}^n(t-X_i)=\prod_{i=1}^n(t-Y_i).
\]
By unique factorization in $\F[t]$, these polynomials have the same roots with the same multiplicities.
Thus the coordinates of $X$ and $Y$ agree as multisets, so $(X,Y)\in V(\UEQ_n)$.

The reduction is local: Alice computes $\ESym(X)$ and Bob computes $\ESym(Y)$.
Given a protocol for $V(X-Y)$, replace each Alice-message polynomial $q(X)$ by $q(\ESym(X))$ and each Bob-message polynomial $q(Y)$ by $q(\ESym(Y))$.
These remain one-party polynomials, so the resulting protocol recognizes $V(\UEQ_n)$.
\end{proof}

\subsection{Deterministic lower bound for equality}

We now prove that deterministic recognition of $V(X-Y)$ requires each party to communicate $n$ polynomial values. This set captures exact equality over both $\R$ and $\C$.
The argument begins by finding one accepting path followed by an open set of diagonal inputs which traverse that path.

\begin{lemma}[A common accepting path]
\label{lem:equality-common-path}
Let $\F\in\{\R,\C\}$, and let $C$ be a deterministic protocol recognizing $V(X-Y)$.
There exist a nonempty Euclidean open set $U\subseteq\F^n$ and a root-to-leaf path $\pi$ in $C$ such that $(X,X)$ follows $\pi$ for every $X\in U$.
\end{lemma}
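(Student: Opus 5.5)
The plan is a pigeonhole argument over the finitely many root-to-leaf paths, combined with the fact that the set of diagonal inputs following a given path is a semialgebraic (over $\R$) or constructible (over $\C$) subset of $\F^n$.

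\textbf{Step 1: description of the path sets.} For each root-to-leaf path $\pi$ of $C$, let
\[
  D_\pi=\setdef{X\in\F^n}{(X,X)\text{ follows }\pi}.
\]
Along $\pi$, the nodes $v_1,\ldots,v_k$ carry message polynomials $q_{v_j}$ and testing polynomials $P_{v_j}$. Following $\pi$ means that for each $j$ the composite polynomial
\[
  G_j(X)=P_{v_j}\bigl(q_{v_1}(X),\ldots,q_{v_j}(X)\bigr)
\]
satisfies the sign or zero condition labelling the outgoing edge of $v_j$. Here each $q_{v_i}$ is evaluated at $X$, whether it is Alice's or Bob's, since both inputs equal $X$. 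Hence $D_\pi$ is defined by finitely many conditions of the form $G_j<0$, $G_j=0$, $G_j>0$ over $\R$, or $G_j=0$, $G_j\neq0$ over $\C$. The sets $D_\pi$ partition $\F^n$, and there are finitely many of them.

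\textbf{Step 2: a path set with interior.} I will show that some $D_\pi$ contains a nonempty Euclidean open set. For each $\pi$, consider the conditions $G_j=0$ that occur on $\pi$.

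If at least one such $G_j$ is a nonzero polynomial, then $D_\pi\subseteq V(G_j)$, which is a proper algebraic subset and hence closed with empty interior.

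Otherwise every equality condition on $\pi$ involves the zero polynomial and holds identically. In that case $D_\pi$ is defined only by strict conditions $G_j<0$, $G_j>0$, or $G_j\ne0$, so $D_\pi$ is Euclidean open. Call such paths \emph{open paths}.

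Suppose no open path has nonempty $D_\pi$. Then $\F^n$ is a finite union of sets each contained in a proper hypersurface $V(G)$ with $G\neq0$. This contradicts the fact that a finite union of such zero sets is $V$ of their product, which is again a nonzero polynomial. A nonzero polynomial cannot vanish on all of $\F^n$; this follows, for instance, from \cref{lem:pit} applied with a large grid. Therefore some open path $\pi$ has $U:=D_\pi$ nonempty and open, and $(X,X)$ follows $\pi$ for every $X\in U$.

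\textbf{Step 3: the path accepts.} Every $X\in U$ gives $(X,X)\in V(X-Y)$, so the leaf of $\pi$ must be labelled Accept.

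\textbf{Main obstacle.} The delicate point is Step 2: justifying that path sets carrying a nontrivial equality constraint have empty interior, while the remaining sets are open. Once the composite polynomials $G_j$ are written down, this reduces to the elementary facts that a nonzero polynomial's zero set has empty interior and that a finite union of such zero sets is not all of $\F^n$.
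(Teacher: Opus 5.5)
Your proof is correct and takes essentially the same route as the paper. The paper directly picks a point $X_0$ outside the zero sets of all non-identically-zero diagonal restrictions $\widetilde P_v$ of the testing polynomials, then intersects continuity balls along the path of $(X_0,X_0)$; that path is exactly one of your \emph{open paths}, so you are packaging the same two facts (a nonzero product polynomial cannot vanish on all of $\F^n$, and strict sign or nonvanishing conditions are open) as a pigeonhole over the partition $\{D_\pi\}$ instead of a direct construction.
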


\begin{proof}
For every internal node $v$, let
\[
  A_v(X)=\left(a_{v,1}(X),\ldots,a_{v,s_v}(X)\right)
  \quad\text{and}\quad
  B_v(Y)=\left(b_{v,1}(Y),\ldots,b_{v,t_v}(Y)\right)
\]
collect Alice's and Bob's message polynomials along the path from the root to $v$, including the message at $v$. Let
\[
  \left(h_{v,1}(X,Y),\ldots,h_{v,r_v}(X,Y)\right),
  \qquad r_v=s_v+t_v,
\]
be their chronological interleaving. Thus, each $h_{v,j}$ is either some $a_{v,k}(X)$ or some $b_{v,\ell}(Y)$. Define the restriction of the testing polynomial at $v$ to the diagonal by
\[
  \widetilde P_v(X)
  \coloneqq P_v\left(h_{v,1}(X,X),\ldots,h_{v,r_v}(X,X)\right)\in\F[X].
\]
Let
\[
  P(X)=\prod_{\substack{v\text{ internal}\\ \widetilde P_v\not\equiv0}}
  \widetilde P_v(X),
\]
where the empty product is $1$.
The tree is finite, hence $P$ is well-defined. Moreover, by definition, $P$ is a nonzero polynomial. %\nutan{By definition, the polynomial is non-zero.}
Choose $X_0\in\F^n$ such that $P(X_0)\ne0$.

Let $\pi=(v_1,\ldots,v_\ell)$ be the path followed by $(X_0,X_0)$.
For each $i\in[\ell]$ with $\widetilde P_{v_i}\not\equiv0$, continuity gives an open ball $B_i$ containing $X_0$ on which $\widetilde P_{v_i}$ has the same sign as at $X_0$ over $\R$, or remains nonzero over $\C$.
If $\widetilde P_{v_i}\equiv0$, set $B_i=\F^n$.
Then
\[
  U=\bigcap_{i=1}^{\ell}B_i
\]
is a nonempty Euclidean open set, since $\pi$ is finite and every $B_i$ is an open set containing $X_0$.
Every $(X,X)$, with $X\in U$, takes the same branch as $(X_0,X_0)$ at each node and therefore it follows $\pi$.
Finally, $\pi$ is accepting because $(X_0,X_0)\in V(X-Y)$.
\end{proof}

Let $a_1,\ldots,a_s\in\F[X]$ and $b_1,\ldots,b_t\in\F[Y]$ be the transcript messages  along $\pi$.
Define
\[
  A_\pi(X)=\left(a_1(X),\ldots,a_s(X)\right),
  \qquad
  B_\pi(X)=\left(b_1(X),\ldots,b_t(X)\right).
\]
Next we show that the common path forces each of these maps to distinguish all inputs in $U$.

\begin{lemma}
\label{lem:equality-message-injectivity}
Let $U$ and $\pi$ be given by \cref{lem:equality-common-path}.
The maps $A_\pi:U\to\F^s$ and $B_\pi:U\to\F^t$ are injective.
\end{lemma}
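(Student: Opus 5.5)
We argue by the classical fooling-set (rectangle) principle, specialized to the single path $\pi$. Suppose, for contradiction, that $A_\pi$ is not injective on $U$. Then there are $X, X' \in U$ with $X \neq X'$ and $A_\pi(X) = A_\pi(X')$. We will show that the off-diagonal input $(X, X')$ also follows $\pi$. Since $\pi$ is accepting, the protocol would then accept $(X, X') \notin V(X-Y)$, which is a contradiction. The argument for $B_\pi$ is symmetric: take $(X', X)$ with $B_\pi(X) = B_\pi(X')$.

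The key step is an induction along $\pi = (v_1, \ldots, v_\ell)$. We claim that for each $i$, the transcript of $(X, X')$ up to $v_i$ equals the transcript of $(X', X')$ up to $v_i$, and that $(X, X')$ reaches $v_i$.

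For the base case, both inputs start at the root $v_1$. For the inductive step, suppose both inputs are at $v_i$. The message at $v_i$ is either an Alice message $a_k$ or a Bob message $b_k$.

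In the Alice case, the message on $(X, X')$ is $a_k(X)$. On $(X', X')$ it is $a_k(X')$. These agree because $a_k$ is a coordinate of $A_\pi$ and $A_\pi(X) = A_\pi(X')$.

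In the Bob case, the message on both inputs is $b_k(X')$, so the two messages agree trivially.

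Hence the two full transcripts $h = (h_1, \ldots, h_r)$ at $v_i$ coincide. Therefore $P_{v_i}(h)$ is the same for both inputs, and both take the same outgoing edge. Since $X' \in U$, \cref{lem:equality-common-path} says $(X', X')$ follows $\pi$, so it moves to $v_{i+1}$, and so does $(X, X')$. At the end of the induction, $(X, X')$ reaches the leaf of $\pi$, which is accepting.

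The main point needing care is the bookkeeping. Every party's message along $\pi$ must be a fixed polynomial determined by the node, not by the input. This holds because the path $\pi$ is fixed, and the messages $a_1, \ldots, a_s$ and $b_1, \ldots, b_t$ are exactly the polynomials attached to the nodes of $\pi$. Given that, the proof is a direct mixing argument with no analytic content.

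This lemma is what feeds \cref{prop:no-dimension-reducing-injection}. Applied to the injective maps $A_\pi$ and $B_\pi$ on the nonempty open set $U$, it yields $s \geq n$ and $t \geq n$, and hence the $2n$ lower bound of \cref{thm:EQ-lbd}.
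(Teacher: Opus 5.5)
Your proof is correct and uses the same mixing argument as the paper: the cross input $(X,X')$ produces the same transcript as $(X',X')$ along $\pi$, so it reaches the accepting leaf, and this forces $X=X'$. For $B_\pi$ the paper compares $(X,X')$ with $(X,X)$, the mirror image of your $(X',X)$ versus $(X',X')$, and your explicit induction along $\pi$ spells out the step the paper states in one line.
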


\begin{proof}
Suppose that $X,X'\in U$ and $A_\pi(X)=A_\pi(X')$.
Along $\pi$, the cross-input $(X,X')$ produces the same transcript as $(X',X')$: Alice's messages agree by assumption, and Bob has input $X'$ in both instances.
Thus $(X,X')$ reaches the same accepting leaf.
Since $C$ recognizes equality, $X=X'$, and hence $A_\pi$ is injective on $U$.

If $B_\pi(X)=B_\pi(X')$, then $(X,X')$ produces the same transcript as $(X,X)$ and is likewise accepted.
Therefore $X=X'$, so $B_\pi$ is also injective on $U$.
\end{proof}

We now have all the ingredients for the lower bound.

\getkeytheorem{deterministicequalitytheorem}

\begin{proof}
Broadcasting all coordinates of $X$ and $Y$ gives a deterministic protocol of cost $2n$.

Conversely, let $C$ be any deterministic protocol recognizing $V(X-Y)$.
By \cref{lem:equality-common-path}, some nonempty open set $U\subseteq\F^n$ follows one accepting path $\pi$.
Let $s$ and $t$ be the numbers of Alice- and Bob-messages along $\pi$.
By \cref{lem:equality-message-injectivity}, the corresponding polynomial maps $A_\pi:U\to\F^s$ and $B_\pi:U\to\F^t$ are injective.
By \cref{prop:no-dimension-reducing-injection}, this gives $s\geq n$ and $t\geq n$.
Hence $\pi$ contains at least $s+t\geq2n$ messages, so the depth of $C$ is at least $2n$.
\end{proof}

By restricting the inputs, the tight lower bound for $V(X-Y)$ can be lifted to $V(\UEQ_n)$ defined in \cref{eq:ueq-def}.

\begin{theorem}
\label{thm:deterministic-unordered-equality}
For every $n\geq1$ and $\F\in\{\R,\C\}$,
\[
  \DCC_{\F}\left(V(\UEQ_n)\right)=2n.
\]
\end{theorem}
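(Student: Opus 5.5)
The upper bound is immediate: broadcasting all coordinates of $X$ and $Y$ and letting the referee test $\prod_{i}(h_i)$-style symmetric comparisons works. Concretely, Alice and Bob send all $2n$ coordinates, and the final testing polynomial is $\sum_{k=1}^n\left(\ESym_k(h_1,\ldots,h_n)-\ESym_k(h_{n+1},\ldots,h_{2n})\right)\overline{\cdot}$. Over $\R$ we use the sum of squares. Over $\C$ we instead test the $n$ differences one at a time along a path of zero-tests: the first $n$ nodes are Alice's and carry constant tests, and each of the last $n$ nodes (Bob's) tests one difference $\ESym_k(h_{\le n})-\ESym_k(\cdot)$. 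Since testing polynomials at level $r$ may only use $h_1,\ldots,h_r$, all tests are placed at the final node or later; the multiple-test variant is avoided by placing a single test at level $2n$ that is a product-free combination. Over $\C$ the cleanest choice is to reorder so that Bob's last message comes at level $2n$. It is simpler, though, to just cite \cref{obs:unordered-equality} together with \cref{thm:EQ-lbd}: $\DCC(V(\UEQ_n))\le\DCC(V(X-Y))=2n$. I will use this citation for the upper bound.

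For the lower bound, I restrict the inputs to a region where multiset equality coincides with ordered equality, and then rerun the fooling-set argument from the proof of \cref{thm:EQ-lbd}. Let $C$ recognize $V(\UEQ_n)$. The key point is that \cref{lem:equality-common-path} and \cref{lem:equality-message-injectivity} only used the behaviour of $C$ on diagonal inputs $(X,X)$ and on cross-inputs $(X,X')$ with $X,X'$ in an open set $U$. So I redo \cref{lem:equality-common-path} for $C$: diagonal inputs lie in $V(\UEQ_n)$, and the same nonvanishing product $P$ gives a point $X_0$ and an open ball $U$ around it on which all diagonal inputs follow one accepting path $\pi$. I additionally choose $X_0$ so that it also avoids the ordering hyperplanes; over $\R$, for instance, I take $X_0$ with strictly increasing coordinates. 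This is possible because $P\cdot\prod_{i<j}(X_i-X_j)$ is a nonzero polynomial. I then shrink $U$ to lie inside the open cone $\{X_1<X_2<\cdots<X_n\}$ over $\R$. Over $\C$, I instead shrink $U$ so that any two points of $U$ that agree as multisets must be equal. This holds for a small enough ball around a point with distinct coordinates, since a nontrivial permutation moves a point of the ball a distance bounded below by $\min_{i\neq j}|X_{0,i}-X_{0,j}|$ minus twice the radius.

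On this $U$, the injectivity argument goes through verbatim. Suppose $A_\pi(X)=A_\pi(X')$ for $X,X'\in U$. Then $(X,X')$ follows $\pi$ and is accepted, so $X$ and $X'$ agree as multisets, and by the choice of $U$ this forces $X=X'$. The same holds for $B_\pi$. \Cref{prop:no-dimension-reducing-injection} then gives $s,t\ge n$, so the depth of $C$ is at least $2n$. The only delicate step is choosing $U$ so that multiset equality implies ordered equality on it. This is a small continuity argument around a point with pairwise distinct coordinates, which is easy once the generic point $X_0$ is chosen to avoid the discriminant.
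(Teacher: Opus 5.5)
Your proof is correct and follows the paper's route. The upper bound comes from \cref{obs:unordered-equality} together with \cref{thm:EQ-lbd}, and the lower bound restricts to an open set on which $V(\UEQ_n)$ coincides with $V(X-Y)$ and runs the fooling-set argument there. The paper takes $U=U_1\times\cdots\times U_n$ with pairwise disjoint $U_i$ and just says \cref{thm:EQ-lbd} applies; you instead rerun \cref{lem:equality-common-path} and \cref{lem:equality-message-injectivity} inside a small ball around a generic point with distinct coordinates, which spells out the localization the paper leaves implicit (and you were right to drop your direct $\C$ upper-bound protocol, which does not work as written, in favour of the citation).
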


\begin{proof}
The upper bound follows from \cref{obs:unordered-equality} and \cref{thm:deterministic-equality}.

For the lower bound, choose pairwise disjoint nonempty open sets
$U_1,\ldots,U_n\subseteq\F$ and set $U=U_1\times\cdots\times U_n$.
For $X,Y\in U$, if $X_i=Y_{\sigma(i)}$, then, they belong to both $U_i$ and $U_{\sigma(i)}$, disjointness forces $\sigma(i)=i$.
Thus
\[
  \text{For all }X,Y\in U \qquad (X,Y)\in V(\UEQ_n)
  \quad\Longleftrightarrow\quad
  X=Y.
\]
In other words, on $U\times U$, $V(\UEQ_n)$ coincides with $V(X-Y)$. Hence, \cref{thm:deterministic-equality} applies and proves that $\DCC(V(\UEQ_n)) \geq 2n$.
\end{proof}

\subsection{Complexity of sparse difference}
\label{sec:com-SD}

Let $\F\in\{\R,\C\}$.
For integers $n\geq 1$ and $1\leq k\leq n$, define the \emph{$k$-sparse difference set over $\F$} by:
\[
  V(\SD_{n,k})
  \coloneqq
  \setdef{(X,Y)\in\F^n\times\F^n}
  {\#\{i\in[n]:X_i\ne Y_i\}\leq k}.
\]
We start with the observation that any set-recognition protocol for $V(\SD_{n+k,k})$ would give an equally efficient protocol for $V(X-Y)$.

\begin{observation}[Equality reduces to sparse difference]
\label{thm:equality-to-sparse-difference}
For every $n\geq1$ and $k\geq1$,
\[
  \DCC_{\F}\left(V(X-Y)\right)
  \leq \DCC_{\F}\left(V(\SD_{n+k,k})\right)
  \qquad\text{and}\qquad
  \PCC_{\F}\left(V(X-Y)\right)
  \leq \PCC_{\F}\left(V(\SD_{n+k,k})\right).
\]
\end{observation}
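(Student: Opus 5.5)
The plan is a local padding reduction. Given $(X,Y)\in\F^n\times\F^n$ for $V(X-Y)$, Alice forms $X'=(X,0,\ldots,0)\in\F^{n+k}$ and Bob forms $Y'=(Y,c,\ldots,c)\in\F^{n+k}$, where $c\neq0$ is a fixed public constant (say $c=1$). The last $k$ coordinates of $X'$ and $Y'$ therefore always differ, so they use up the entire budget of $k$ allowed differences. The key equivalence to verify is
\[
  (X',Y')\in V(\SD_{n+k,k})
  \quad\Longleftrightarrow\quad
  \#\{i\in[n]:X_i\neq Y_i\}+k\leq k
  \quad\Longleftrightarrow\quad
  X=Y .
\]
This holds because the number of differing coordinates of $(X',Y')$ is exactly $k$ plus the number of differing coordinates among the first $n$.

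Next I would transform the protocol. Take any deterministic protocol $C$ recognizing $V(\SD_{n+k,k})$. Replace each Alice message polynomial $q(X_1,\ldots,X_{n+k})$ by $q(X_1,\ldots,X_n,0,\ldots,0)\in\F[X]$, and each Bob message polynomial $q(Y_1,\ldots,Y_{n+k})$ by $q(Y_1,\ldots,Y_n,c,\ldots,c)\in\F[Y]$. The testing polynomials and the leaf labels stay unchanged. These substitutions are still one-party polynomials, so the result is a valid protocol in the sense of \cref{def:set-recognition-protocol}. On input $(X,Y)$ it produces exactly the transcript that $C$ produces on $(X',Y')$, so it reaches the same leaf. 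By the equivalence above, it recognizes $V(X-Y)$. The tree is unchanged, so the cost is unchanged, which gives the $\DCC$ inequality.

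For the probabilistic inequality, apply the same substitution to every $C_i$ in the support of a probabilistic protocol and keep the weights $\mu_i$. For each $(X,Y)$, the acceptance probability equals that of the original protocol on $(X',Y')$. Since membership of $(X,Y)$ in $V(X-Y)$ coincides with membership of $(X',Y')$ in $V(\SD_{n+k,k})$, the bounded-error guarantee of $2/3$ transfers, and the cost $\max_i\operatorname{cost}(C_i)$ is preserved.

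There is no real obstacle here. The only point needing care is that the padding must be chosen so that the padded coordinates \emph{always} differ, independently of the inputs. A fixed nonzero constant does this over both $\R$ and $\C$, and it keeps the substitution polynomial and local to each party.
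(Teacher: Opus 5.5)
Your proposal is correct and follows the paper's proof: pad Alice's input with $k$ zeros and Bob's with $k$ ones, so that $(\widehat X,\widehat Y)\in V(\SD_{n+k,k})$ exactly when $X=Y$. You also spell out the substitution into the message polynomials and the transfer of the probabilistic guarantee. The paper leaves both of these implicit, and your version of them is sound.
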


\begin{proof}
Given $(X,Y)\in\F^n\times\F^n$, define
\[
  \widehat{X}=\left(X_1,\ldots,X_n,\underbrace{0,\ldots,0}_{k}\right)
  \qquad\text{and}\qquad
  \widehat{Y}=\left(Y_1,\ldots,Y_n,\underbrace{1,\ldots,1}_{k}\right).
\]
The appended coordinates contribute exactly $k$ disagreements, and hence
\[
  \left|\setdef{i\in[n+k]}{\widehat{X}_i\ne\widehat{Y}_i}\right|
  =k+\left|\setdef{i\in[n]}{X_i\ne Y_i}\right|.
\]
Therefore,
\[
  X=Y
  \quad\Longleftrightarrow\quad
  \left(\widehat{X},\widehat{Y}\right)\in V(\SD_{n+k,k}).
\]
\end{proof}

Together with the deterministic lower bound for equality, this reduction gives the following lower bound for sparse difference:

\begin{theorem}[Deterministic lower bound for $V(\SD_{n,k})$]
\label{thm:deterministic-sparse-difference}
For every $n\geq1$, $1\leq k\leq n$, and $\F\in\{\R,\C\}$,
\[
  \DCC_{\F}\left(V(\SD_{n,k})\right)\geq 2(n-k).
\]
\end{theorem}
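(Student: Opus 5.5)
The plan is to combine the reduction of \cref{thm:equality-to-sparse-difference} with \cref{thm:deterministic-equality}. If $k=n$ the bound $2(n-k)=0$ is trivial, so assume $1\le k<n$ and set $m=n-k\geq1$. Applying \cref{thm:equality-to-sparse-difference} with $m$ in place of $n$ gives
\[
  \DCC_{\F}\left(V(\SD_{n,k})\right)
  =\DCC_{\F}\left(V(\SD_{m+k,k})\right)
  \geq \DCC_{\F}\left(V(X-Y)\text{ on }\F^m\times\F^m\right).
\]
By \cref{thm:deterministic-equality}, the right-hand side equals $2m=2(n-k)$, which is the desired bound.

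The one step needing care is that the reduction really yields a protocol of the same cost. Given a deterministic protocol $C$ for $V(\SD_{m+k,k})$, we build a protocol for equality on $\F^m$. Alice replaces each message polynomial $q(X_1,\ldots,X_{m+k})$ by $q(X_1,\ldots,X_m,0,\ldots,0)\in\F[X_1,\ldots,X_m]$. Bob does the same, substituting $1$ into his last $k$ coordinates. The tree and the testing polynomials are kept unchanged.

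These substituted messages are still one-party polynomials. On input $(X,Y)$, the new protocol produces exactly the transcript of $C$ on $(\widehat X,\widehat Y)$, so it reaches the same leaf with the same depth. By the equivalence $X=Y\iff(\widehat X,\widehat Y)\in V(\SD_{m+k,k})$ shown in the proof of the observation, the new protocol recognizes $V(X-Y)$. The observation states this, but I would spell out the substitution here since its proof only records the set equivalence. No real obstacle is expected.
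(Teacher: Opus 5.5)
Your proof is correct and matches the paper's: you handle $k=n$ trivially, then apply the padding reduction of \cref{thm:equality-to-sparse-difference} with equality on $\F^{n-k}$ and invoke \cref{thm:deterministic-equality}. Your explicit substitution (Alice plugs in zeros, Bob plugs in ones, testing polynomials unchanged) spells out why the reduction preserves cost, which the paper's proof leaves implicit.
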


\begin{proof}
For $k=n$, the claim is immediate.
For $k<n$, apply \Cref{thm:equality-to-sparse-difference} with equality on $\F^{n-k}$.
By \cref{thm:EQ-lbd},
\[
  \DCC\left(V(\SD_{n,k})\right)
  = \DCC\left(V(\SD_{(n-k)+k, ~k})\right)
  \geq \DCC\left(V(X-Y)\right) = 2(n-k).
\]
\end{proof}

In the next theorem, we show that $V(\SD_{n,k})$ admits an efficient probabilistic protocol.
For the correctness of the protocol, we will need the following standard evaluation-rank criterion:

\begin{proposition}[\protect{\cite[Claim~7]{Kayal2011}}]
\label{prop:kayal-evaluation-criterion}
Let $f_1,\ldots,f_\ell\in\F[t]$. Then, the matrix
\[
  \begin{pmatrix}
    f_1(\alpha_1) & \cdots & f_\ell(\alpha_1) \\
    \vdots & & \vdots \\
    f_1(\alpha_\ell) & \cdots & f_\ell(\alpha_\ell)
  \end{pmatrix}
\]
has nonzero determinant as a polynomial in
$\alpha_1,\ldots,\alpha_\ell$ if and only if
$f_1,\ldots,f_\ell$ are linearly independent over $\F$.
\end{proposition}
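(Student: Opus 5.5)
Both directions reduce to the Vandermonde-type identity in the multivariate variables $\alpha_1,\ldots,\alpha_\ell$. Set $D(\alpha)=\Det\bigl(f_j(\alpha_i)\bigr)_{i,j\in[\ell]}\in\F[\alpha_1,\ldots,\alpha_\ell]$. By multilinearity of the determinant in its columns, $D$ depends linearly on each $f_j$. This is the tool for both implications.

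\textbf{Dependence implies $D\equiv0$.} Suppose there are scalars $c_1,\ldots,c_\ell\in\F$, not all zero, with $\sum_j c_jf_j=0$ in $\F[t]$. Then for every choice of $\alpha$, the vector $c=(c_1,\ldots,c_\ell)^T$ lies in the kernel of the evaluation matrix, since its $i$th row applied to $c$ gives $\sum_jc_jf_j(\alpha_i)=0$. So the matrix is singular at every point. Since $\F$ is infinite, a polynomial vanishing on all of $\F^\ell$ is the zero polynomial, so $D\equiv0$.

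\textbf{Independence implies $D\not\equiv0$.} Suppose $f_1,\ldots,f_\ell$ are linearly independent. I would apply Gaussian elimination to the coefficient vectors. Invertible column operations with constant coefficients multiply $D$ by a nonzero scalar, and permuting the columns only changes its sign. Using such operations, replace $(f_1,\ldots,f_\ell)$ by polynomials $g_1,\ldots,g_\ell$ with pairwise distinct degrees $d_1<\cdots<d_\ell$ and leading coefficients $1$ (echelon form with respect to the monomial basis). Independence guarantees that none of the $g_j$ is zero.

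It then suffices to show that $D_g(\alpha)=\Det(g_j(\alpha_i))$ is nonzero. I would look at the monomial $\alpha_1^{d_1}\alpha_2^{d_2}\cdots\alpha_\ell^{d_\ell}$. Expanding the determinant as $\sum_\sigma \operatorname{sgn}(\sigma)\prod_i g_{\sigma(i)}(\alpha_i)$, the term $\prod_i g_{\sigma(i)}(\alpha_i)$ has $\alpha_i$-degree at most $d_{\sigma(i)}$. So a contribution to this monomial requires $d_{\sigma(i)}\geq d_i$ for all $i$. Because the $d$'s are strictly increasing and $\sigma$ is a bijection, this forces $\sigma=\mathrm{id}$: the largest index must map to itself, and one proceeds by downward induction. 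The identity term contributes coefficient exactly $1$. Hence $D_g\not\equiv0$, and so $D\not\equiv0$.

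The only delicate point is this leading-monomial argument, specifically checking that no other permutation reaches the same monomial. The downward induction on indices handles it cleanly. An alternative route for this direction is induction on $\ell$ via Laplace expansion along the last row, viewing $D$ as a polynomial in $\alpha_\ell$ whose coefficients are smaller determinants. The echelon argument is more direct, so I would use that.
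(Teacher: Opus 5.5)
Your proof is correct. Dependence kills $D$ through the column relation. For independence, after the constant change of basis to $g_1,\ldots,g_\ell$ with $d_1<\cdots<d_\ell$, only $\sigma=\mathrm{id}$ can produce the monomial $\alpha_1^{d_1}\cdots\alpha_\ell^{d_\ell}$: the conditions $d_{\sigma(i)}\geq d_i$ together with $\sum_i d_{\sigma(i)}=\sum_i d_i$ force equality. So that monomial appears in $D$ with a nonzero coefficient, namely the inverse of the determinant of the change-of-basis matrix.

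The paper does not prove this statement; it only cites Kayal's Claim~7, so there is no in-paper argument to match. The standard proof of such evaluation criteria is the alternative you mention. Expand $D$ along the last row as $\sum_j c_j\, f_j(\alpha_\ell)$. Each cofactor $c_j\in\F[\alpha_1,\ldots,\alpha_{\ell-1}]$ is nonzero by induction, since any $\ell-1$ of the $f_j$ are still independent. Independence of the $f_j$ over $\F$ persists over $\F(\alpha_1,\ldots,\alpha_{\ell-1})$, so this combination is a nonzero polynomial in $\alpha_\ell$.

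What each route buys:
\begin{itemize}
\item Your echelon argument needs a normalization step but gives an explicit witness monomial in $D$. It also extends to multivariate $f_j$, with each $\alpha_i$ a tuple of fresh variables, once degrees are replaced by leading monomials under a monomial order. The disjointness of the variable tuples again forces $\sigma=\mathrm{id}$.
\item The Laplace induction is basis-free and shorter, but it only certifies that $D$ is nonzero and does not identify a term.
\end{itemize}

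One small remark on the easy direction: you do not need $\F$ to be infinite. The relation $\sum_j c_j\cdot(\text{column } j)=0$ already holds in $\F[\alpha_1,\ldots,\alpha_\ell]^\ell$, so $D\equiv0$ over any field.
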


We now apply this criterion to obtain an efficient probabilistic protocol
for recognizing $V(\SD_{n,k})$.

\begin{theorem}[note={Probabilistic protocol for $V(\SD_{n,k})$},
  label=thm:probabilistic-sparse-difference]
Let $\F\in\{\R,\C\}$. For $n\geq1$ and $1\leq k\leq n$, let
\[
  V(\SD_{n,k})
  =\setdef{(X,Y)\in\F^n\times\F^n}
  {\#\{i\in[n]:X_i\neq Y_i\}\leq k}.
\]
Then,
\[
  \PCC_{\F}\left(V(\SD_{n,k})\right)\leq 2k+2.
\]
In particular, for fixed $k$, the probabilistic communication complexity is independent of $n$.
\end{theorem}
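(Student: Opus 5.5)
We plan a one-round protocol in which Alice and Bob each send $k+1$ random linear sketches of their inputs. The referee then applies a single polynomial test that vanishes exactly when the sketched difference looks $k$-sparse. Write $D=X-Y$, so that $(X,Y)\in V(\SD_{n,k})$ iff $|\mathrm{supp}(D)|\leq k$.

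\textbf{Protocol.} Fix a finite $A\subseteq\F$ of size to be chosen. With public randomness, pick $r_1,\ldots,r_{k+1}\in A$ independently and uniformly. Alice broadcasts $h_j=\sum_{i=1}^n X_i r_j^{\,i}$ and Bob broadcasts $h_{k+1+j}=\sum_{i=1}^n Y_i r_j^{\,i}$, for $j\in[k+1]$; all but the last testing polynomial are constant. Set $w_j=h_j-h_{k+1+j}=\sum_i D_i r_j^{\,i}$, and let $M$ be the $(k+1)\times n$ matrix $M_{j,i}=r_j^{\,i}$, so $w=MD$. For each $k$-subset $S\subseteq[n]$, let $\Delta_S(w)=\Det\bigl[M_S\mid w\bigr]$, a polynomial in $w$ with coefficients depending on the (known) $r_j$. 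The final test is $P(h)=\prod_{|S|=k}\Delta_S(w)$, and the protocol accepts iff $P(h)=0$. The cost is $2k+2$.

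\textbf{Completeness.} If $|\mathrm{supp}(D)|\leq k$, choose a $k$-set $S\supseteq\mathrm{supp}(D)$. Then $w=M_SD_S$ lies in the column span of $M_S$, so $\Delta_S(w)=0$ and the protocol always accepts.

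\textbf{Soundness.} Suppose $|\mathrm{supp}(D)|\geq k+1$. Fix a $k$-set $S$ and view $\Delta_S$ as a polynomial in $r_1,\ldots,r_{k+1}$. It is exactly the determinant in \cref{prop:kayal-evaluation-criterion} for the $k+1$ univariate polynomials $t^i$ ($i\in S$) and $g_D(t)=\sum_i D_i t^i$. These are linearly independent over $\F$, because $g_D$ has a nonzero coefficient at some $t^i$ with $i\notin S$. Hence $\Delta_S\not\equiv0$, so the product $Q=\prod_S\Delta_S$ is a nonzero polynomial in the $r_j$. Its degree is at most $\binom{n}{k}\cdot(k+1)n$, since each entry has degree at most $n$. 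Taking $|A|\geq 3\binom{n}{k}(k+1)n$, \cref{lem:pit} shows that the protocol wrongly accepts with probability at most $1/3$.

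\textbf{Main obstacle.} The crux is ensuring that a single referee test can separate the two cases. A naive approach would use Hankel/Prony moments, which would cost about $4k+2$ messages. The union-of-subspaces product test, combined with the choice of Vandermonde-type sketch rows so that \cref{prop:kayal-evaluation-criterion} certifies non-vanishing of each $\Delta_S$, is what makes $k+1$ messages per party suffice. The remaining care is in bookkeeping the degree bound to size $A$. The protocol tree is finite because the $r_j$ range over a finite grid, as required by \cref{def:probabilistic-set-recognition-protocol}.
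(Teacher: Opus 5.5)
Your proposal is correct and is essentially the paper's proof. It uses the same Vandermonde-type sketches $\sum_i X_i r_j^{\,i}$ for $j\in[k+1]$ and the same product test $\prod_{|S|=k}\Det[M_S\mid w]$ over all $k$-subsets. Soundness likewise comes from \cref{prop:kayal-evaluation-criterion} applied to $t^{s}$ ($s\in S$) together with the difference polynomial, and the same degree bound $(k+1)n\binom{n}{k}$ is fed into \cref{lem:pit}.
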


\begin{proof}
\textbf{Setup. }
For inputs $X,Y\in\F^n$, define
\[
  q_X(t)=\sum_{i=1}^n X_it^i,
  \qquad
  q_Y(t)=\sum_{i=1}^n Y_it^i,
  \qquad
  q_{\mathrm{diff}}(t)=q_X(t)-q_Y(t).
\]
Let
\[
  D=(k+1)n\binom{n}{k}.
\]
% \nutan{Here you need $\binom{n}{\leq k}$}
% \pd{$\binom{n}{k}$ should suffices right? Since every support of size at most $k$ is contained in a $k$-element subset of $[n]$.}

Choose a finite set $A\subseteq\F$ with $|A|\geq3D$.
Choose
\[
  \vecalpha=(\alpha_1,\ldots,\alpha_{k+1})\in A^{k+1}.
\]
uniformly at random.
% \manon{I find the notation $\binom{[n]}{k}$ (with the [n] a bit confusing, is it common? I don't have a better one yet though.}
For a set $S=\{s_1,\ldots,s_k\}\subseteq[n]$ of size $k$,

% \nutan{Indeed this is not a problem, because suppose they differ at 4th and 25th place. Then for a subset of size $k$ that contains $4,25$, you will obtain the required dependence. But perhaps this needs to be said with one line. }
% \pd{added this in the correctness part.}
\[
  M_S(\vecalpha)=
  \begin{pmatrix}
    \alpha_1^{s_1} & \cdots & \alpha_1^{s_k} & q_{\mathrm{diff}}(\alpha_1) \\
    \vdots & & \vdots & \vdots \\
    \alpha_{k+1}^{s_1} & \cdots & \alpha_{k+1}^{s_k}
      & q_{\mathrm{diff}}(\alpha_{k+1})
  \end{pmatrix}.
\]
The sampled value of $\vecalpha$ determines the following deterministic
protocol $T_{\vecalpha}$.

\textbf{Protocol. }
Alice broadcasts $q_X(\alpha_j)$ and Bob broadcasts $q_Y(\alpha_j)$ for
every $j\in[k+1]$. All intermediate testing polynomials are constant. At
the final node, the protocol computes the transcript differences
\[
  q_{\mathrm{diff}}(\alpha_j)=q_X(\alpha_j)-q_Y(\alpha_j)
  \qquad (j\in[k+1]).
\]
Define the testing polynomial
\[
  P(\vecalpha)
  =
  \prod_{S\in\binom{[n]}{k}}\Det M_S(\vecalpha).
\]
This is a polynomial in the transcript values $q_X(\alpha_j),q_Y(\alpha_j)$, and the protocol accepts exactly when it evaluates to zero.
Thus each $T_{\vecalpha}$ has depth $2k+2$.

\textbf{Correctness.}
The input $(X,Y)\in V(\SD_{n,k})$ if and only if $q_{\mathrm{diff}}$ has at most $k$ nonzero
coefficients.
Then, for some $S\in\binom{[n]}{k}$ the support of $q_{\mathrm{diff}}$ is contained in $S$.
Indeed, if the support has fewer than $k$ elements, we may extend it to a $k$-element subset of $[n]$.
For this $S$, the last column of $M_S(\vecalpha)$ is a linear combination of the first $k$ columns. Hence, $P(\vecalpha)=0$ for every choice of $\vecalpha$, so the sampled protocol always accepts.

Conversely, suppose $(X,Y)\notin V(\SD_{n,k})$.
Then, $q_{\mathrm{diff}}$ has more than $k$ nonzero coefficients. For every $S=\{s_1,\ldots,s_k\}$, the polynomials
\[
  t^{s_1},\ldots,t^{s_k},q_{\mathrm{diff}}(t)
\]
are linearly independent, since $q_{\mathrm{diff}}$ has a nonzero coefficient in a
degree outside $S$.
By \cref{prop:kayal-evaluation-criterion}, each $\Det M_S(\vecalpha)$ is a nonzero polynomial in $\vecalpha$; hence, their product is nonzero.
Moreover,
\[
  \deg\Det M_S(\vecalpha)
  \leq n+\sum_{s\in S}s
  \leq(k+1)n,
\]
% \nutan{Here a factor of $n$ choose $k$ is missing?}
% \pd{I think it is correct, because $\sum s$ is at most $kn$.}
and therefore $\deg P\leq D$. By \cref{lem:pit},
\[
  \Pr
  \left[T_{\vecalpha}\text{ accepts }(X,Y)\right]
  =
  \Pr_{\vecalpha\in A^{k+1}}\left[P(\vecalpha)=0\right]
  \leq\frac{D}{|A|}\leq\frac13.
\]
Thus the protocol rejects with probability at least $2/3$. Alice and Bob
each communicate $k+1$ values, so the total cost is $2k+2$.
\end{proof}

\subsection{Complexity of bilinear forms}
\label{sec:bilinear-forms}
Let $A\in\F^{n \times n}$ be a fixed public matrix known to both Alice and Bob, and let $X=(X_1,\ldots,X_n)$ and $Y=(Y_1,\ldots,Y_n)$ be vectors of variables.
The matrix $A$ defines the bilinear form
\[
  \BF_{n,A}(X,Y)=X^TAY.
\]

% \nutan{This provides a family of polynomials for which we can prove lower bounds.}
Consider the real threshold set for a fixed $A\in\R^{n \times n}$
\[
  S(\BF_{n,A})
  =\setdef{(X,Y)\in\R^n\times\R^n}{\BF_{n,A}(X,Y)\leq0}.
\]
Over $\C$, we consider the zero set for a fixed $A\in\C^{n \times n}$
\[
  V(\BF_{n,A})
  =\setdef{(X,Y)\in\C^n\times\C^n}{\BF_{n,A}(X,Y)=0}.
\]
%\end{enumerate} \magnus{Maybe remove this list of applications? Bilinear forms are universal enough?}

Recall the notation for the inner product problems from \cref{thm:grigoriev}.
Taking $A=I_n$ gives $\BF_{n,I_n}=\IP_n$.
Thus, the family of set-recognition problems for bilinear-form threshold sets includes the inner-product problem.
Moreover, we observe the following equivalence in the complexity of these two problems.

\begin{observation}[Equivalence with inner product]
\label{obs:bilinear-inner-product-equivalence}
Let $A\in\R^{n\times n}$ and $r=\rank_{\R}(A)\geq1$.
Then,
\[
  \DCC_\R\left(S(\BF_{n,A})\right)
  =\DCC_\R\left(S(\IP_r)\right),
  \qquad
  \PCC_\R\left(S(\BF_{n,A})\right)
  =\PCC_\R\left(S(\IP_r)\right).
\]
For the complex case, let $A\in\C^{n\times n}$ and $r=\rank_{\C}(A)\geq1$.
Then,
\[
  \DCC_\C\left(V(\BF_{n,A})\right)
  =\DCC_\C\left(V(\IP_r)\right),
  \qquad
  \PCC_\C\left(V(\BF_{n,A})\right)
  =\PCC_\C\left(V(\IP_r)\right).
\]
\end{observation}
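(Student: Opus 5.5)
The plan is to prove both inequalities in each equation by local, invertible-on-the-relevant-part linear changes of variables. Decompose $A$ via rank factorization: since $\rank(A)=r$, there are invertible matrices $P,Q\in\F^{n\times n}$ with $A=P^T\begin{pmatrix} I_r & 0\\ 0 & 0\end{pmatrix}Q$. Equivalently, $A=L^TR$ with $L,R\in\F^{r\times n}$ of full row rank $r$ (the first $r$ rows of $P$ and $Q$). Then
\[
  \BF_{n,A}(X,Y)=(LX)^T(RY)=\IP_r(LX,RY),
\]
with the same identity over $\R$ and over $\C$ (for $\C$ we use the plain transpose, not the conjugate transpose, since $\IP_r$ and $\BF_{n,A}$ are bilinear).

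For the upper bound $\DCC(S(\BF_{n,A}))\le\DCC(S(\IP_r))$, take any protocol for $S(\IP_r)$ on inputs $(\tilde X,\tilde Y)\in\F^r\times\F^r$. Replace each Alice message $q(\tilde X)$ by $q(LX)$ and each Bob message $q(\tilde Y)$ by $q(RY)$; these are still one-party polynomials. Keep all testing polynomials and leaf labels unchanged. On input $(X,Y)$ the new protocol traverses exactly the path the old one traverses on $(LX,RY)$. It therefore accepts iff $\IP_r(LX,RY)\le0$ (resp.\ $=0$), i.e.\ iff $\BF_{n,A}(X,Y)\le 0$ (resp.\ $=0$), and its cost is unchanged. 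Applying this substitution to each $C_i$ of a probabilistic protocol, with the same weights $\mu_i$, preserves the per-input success probability, so the $\PCC$ inequality follows too. This is the same local-reduction pattern as in \cref{obs:unordered-equality}.

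For the reverse direction we need a local map from $\F^r$ into $\F^n$ embedding $\IP_r$ into $\BF_{n,A}$. Since $L$ and $R$ have full row rank, they have right inverses $L^+,R^+\in\F^{n\times r}$ with $LL^+=RR^+=I_r$. For $\tilde X,\tilde Y\in\F^r$, set $X=L^+\tilde X$ and $Y=R^+\tilde Y$. Then $\BF_{n,A}(X,Y)=(LL^+\tilde X)^T(RR^+\tilde Y)=\IP_r(\tilde X,\tilde Y)$. Now substitute $q(L^+\tilde X)$ and $q(R^+\tilde Y)$ into the messages of a protocol for $\BF_{n,A}$, exactly as before; this gives a protocol for $\IP_r$ of the same cost, in both the deterministic and probabilistic settings. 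Combining the two directions yields the equalities.

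No step is a real obstacle. The only points needing care are that the factorization $A=L^TR$ with full-row-rank $L,R$ exists over the given field (standard Gaussian elimination / rank factorization, valid over both $\R$ and $\C$), and that the substitution argument applies to every deterministic protocol in the support of a probabilistic one, so that the error guarantee transfers input by input. The hypothesis $r\ge1$ only ensures that $\IP_r$ is a meaningful problem. The case $r=0$ is excluded because then $\BF_{n,A}\equiv0$ and the set is trivial.
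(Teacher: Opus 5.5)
Your proposal is correct and follows essentially the paper's argument: you put $A$ in the normal form $\begin{pmatrix} I_r & 0\\ 0 & 0\end{pmatrix}$ up to invertible factors and transport protocols through local linear maps. In one direction these are $X\mapsto LX$, $Y\mapsto RY$, corresponding to the paper's truncations of $P^{-1}X$ and $Q^{-1}Y$. In the other direction they are the right inverses, corresponding to the paper's $P(\tilde X,\mathbf{0})$ and $Q(\tilde Y,\mathbf{0})$. You also make explicit that the substitution is applied to each deterministic protocol in the support with unchanged weights, which the paper leaves implicit.
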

\begin{proof}
First, let us assume that $\F = \R$. Choose $P,Q\in\operatorname{GL}_n(\R)$ such that
\[
  P^TAQ=
  \begin{pmatrix}
    I_r & \mathbf{0} \\
    \mathbf{0} & \mathbf{0}
  \end{pmatrix}.
\]
Let $\widetilde{X}\in\R^r$ and $\widetilde{Y}\in\R^r$ be the truncations of $P^{-1}X$ and $Q^{-1}Y$, respectively, to their first $r$ coordinates.
Then,
\begin{align*}
  \BF_{n,A}(X,Y)
  \quad &= \quad\left(P^{-1}X\right)^T P^TAQ \left(Q^{-1}Y\right)\\
  \quad &= \quad\left(P^{-1}X\right)^T
  \begin{pmatrix}
    I_r & \mathbf{0} \\
    \mathbf{0} & \mathbf{0}
  \end{pmatrix}
  \left(Q^{-1}Y\right)\\
  \quad &=\quad\widetilde{X}^{T}\widetilde{Y}
  \quad=\quad\IP_r\left(\widetilde{X},\widetilde{Y}\right).
\end{align*}
Consequently,
\[
  (X,Y)\in S(\BF_{n,A})
  \quad\Longleftrightarrow\quad
  (\widetilde{X},\widetilde{Y})\in S(\IP_r).
\]
By a similar argument for $\F= \C$ we also get the following:
\[
  (X,Y)\in V(\BF_{n,A})
  \quad\Longleftrightarrow\quad
  (\widetilde{X},\widetilde{Y})\in V(\IP_r).
\]
Here Alice computes $\widetilde{X}$ from $X$, while Bob computes $\widetilde{Y}$ from $Y$, so this transformation is local.

Conversely, let $(\widetilde{X}, \widetilde{Y})\in\F^r\times\F^r$ be an input to the inner product problem, where $\F\in\{\R,\C\}$.
Alice and Bob locally construct
\[
  X=P\left(\widetilde{X},\mathbf{0}_{n-r}\right),
  \qquad
  Y=Q\left(\widetilde{Y},\mathbf{0}_{n-r}\right).
\]
Then,
\[
  \BF_{n,A}(X,Y)
  =\left(\widetilde{X},\mathbf{0}_{n-r}\right)^TP^TAQ
    \left(\widetilde{Y},\mathbf{0}_{n-r}\right)
  =\widetilde{X}^T\widetilde{Y}
  =\IP_r\left(\widetilde{X},\widetilde{Y}\right).
\]
Thus, a protocol for the bilinear-form problem can recognize inner product with the same cost and error probability.
Together with the forward reduction, this proves the claimed equivalences.
\end{proof}

The observation transfers communication bounds for $r$-dimensional inner product directly to bilinear forms of rank $r$.
In particular, after proving the inner product lower bound in \cref{sec:inner-product}, we use it to obtain the same $2r-4$ lower bound for both the real threshold set $S(\BF_{n,A})$ and the complex zero set $V(\BF_{n,A})$.

\section{Communication complexity via mixed Hessian rank}
\label{sec:hessian}

We now return to the communication complexity of computing a polynomial, as defined in \cref{def:comm-complexity}; see \cref{ex:inner-product} for the basic example.
Abelson~\cite{Abelson80} proved that such a computation is constrained by the rank of the matrix of mixed second partial derivatives.

% The mixed Hessian matrix is formed from Hessian matrix of $g$ by taking only the off-diagonal blocks corresponding to the mixed second partial derivatives. \nutan{off-diagonal is not clear here.}

\begin{definition}[Mixed Hessian matrix]
\label{def:comm-matrix}
    Let $\F \in \{\R,\C\}$. For a polynomial $g \in \F[X,Y]$, its $n\times n$ \emph{mixed Hessian matrix} $\HH_{X|Y}(g)$ is defined by
    \[
        [\HH_{X|Y}(g)]_{ij} \quad = \quad
        \frac{\partial^2 g}{\partial X_i\partial Y_j}.
    \]
\end{definition}

% \nutan{Have we defined communication complexity of computing a polynomial? If not, we should.}
% \pd{We have defined it in \cref{def:comm-complexity}}
The rank of the mixed Hessian matrix defined above is a lower bound on the communication complexity of computing a polynomial.
% \nutan{Is this only true for $\R$?}
\begin{lemma}[note={\protect{\cite[Lemma 2.2]{Gri2008}}},
  store=commmatrixlemma,
  restate-keys={note={Restated}},
  label=lem:comm-matrix]
Let $\F\in\{\R,\C\}$, and let $g\in\F[X,Y]$. Suppose that
\[
  g(X,Y) = Q\bigl(a_1(X),\ldots,a_{r_1}(X), b_1(Y),\ldots,b_{r_2}(Y) \bigr)
\]
is a minimum-size representation of $g$, where $a_1,\ldots,a_{r_1}\in\F[X]$ and $b_1,\ldots,b_{r_2}\in\F[Y]$.
Then $\rank \HH_{X|Y}(g)\leq \min\{r_1,r_2\}$ and
\[
  c(g)=r_1+r_2\geq 2 \cdot \rank \HH_{X|Y}(g).
\]
Here the rank of a polynomial matrix is taken over $\F(X,Y)$.
\end{lemma}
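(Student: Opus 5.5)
We prove this by the chain rule, factoring the mixed Hessian of $g$ through the Jacobians of Alice's and Bob's message maps. Write $z=(u_1,\ldots,u_{r_1},w_1,\ldots,w_{r_2})$ for the variables of $Q$, and set
\[
A(X)=(a_1(X),\ldots,a_{r_1}(X)),\qquad B(Y)=(b_1(Y),\ldots,b_{r_2}(Y)).
\]
Differentiating $g(X,Y)=Q(A(X),B(Y))$ first in $Y_j$ gives
\[
\frac{\partial g}{\partial Y_j}=\sum_{\ell=1}^{r_2}\frac{\partial Q}{\partial w_\ell}(A(X),B(Y))\,\frac{\partial b_\ell}{\partial Y_j}(Y).
\]
The second factor does not depend on $X$, so differentiating in $X_i$ only hits $\partial Q/\partial w_\ell$:
\[
\frac{\partial^2 g}{\partial X_i\partial Y_j}=\sum_{k=1}^{r_1}\sum_{\ell=1}^{r_2}\frac{\partial a_k}{\partial X_i}(X)\,\frac{\partial^2 Q}{\partial u_k\partial w_\ell}(A(X),B(Y))\,\frac{\partial b_\ell}{\partial Y_j}(Y).
\]

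In matrix form this reads
\[
\HH_{X|Y}(g)=J_A(X)^{T}\,M(X,Y)\,J_B(Y),
\]
with the following dimensions and entries:
\begin{itemize}
\item $J_A$ is the $r_1\times n$ Jacobian of $A$;
\item $J_B$ is the $r_2\times n$ Jacobian of $B$;
\item $M$ is the $r_1\times r_2$ matrix with entries $\partial^2 Q/\partial u_k\partial w_\ell$, evaluated at $(A(X),B(Y))$.
\end{itemize}
All entries lie in $\F[X,Y]\subseteq\F(X,Y)$. By the standard inequality $\rank(PMQ)\le\min(\rank P,\rank M,\rank Q)$ over the field $\F(X,Y)$, the rank of $\HH_{X|Y}(g)$ is at most $r_1$ and at most $r_2$. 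Hence $\rank\HH_{X|Y}(g)\le\min\{r_1,r_2\}$.

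Summing the two inequalities gives $r_1+r_2\ge 2\min\{r_1,r_2\}\ge 2\rank\HH_{X|Y}(g)$. Because the representation is of minimum size, $c(g)=r_1+r_2$. Nothing here actually uses minimality beyond identifying the cost with $c(g)$: the rank bound holds for every representation, and therefore in particular for an optimal one.

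The only point needing care is the differentiation step. We must check that differentiating in $X$ does not touch the factors $\partial b_\ell/\partial Y_j$, which holds because Bob's messages depend on $Y$ alone. This is exactly where the $X|Y$ partition of the representation enters. The degenerate cases are consistent: if $r_1=0$ or $r_2=0$, then $g$ depends on only one side, so $\HH_{X|Y}(g)=0$ and the bound holds trivially.
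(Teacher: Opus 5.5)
Your proposal is correct and takes essentially the same route as the paper's proof in the appendix. Both apply the chain rule to factor $\HH_{X|Y}(g)$ as the transposed Jacobian of Alice's messages, times the mixed Hessian of $Q$ evaluated at the messages, times the Jacobian of Bob's messages, and then read off $\rank\HH_{X|Y}(g)\le\min\{r_1,r_2\}$ from the inner dimensions. Differentiating in $Y_j$ before $X_i$, rather than the reverse, is immaterial for polynomials.
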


For completeness, we give the proof of the lemma in \cref{app:comm-matrix-proof}.
The rank lemma can be used to obtain tight bounds for communication complexity of computing the inner product polynomial.

\begin{example}[Tightness for inner product]\label{ex:tight-innter-product}
Recall from \cref{ex:inner-product} that the inner product polynomial
\[
  \IP_n(X,Y)=\sum_{i=1}^n X_iY_i
\]
satisfies $c(\IP_n)\leq 2n$.
Clearly, $\HH_{X|Y}(\IP_n)$ is the identity matrix, hence, $\rank\HH_{X|Y}(\IP_n)=n$.
By \cref{lem:comm-matrix}, $c(\IP_n)\geq 2n$. Therefore, $c(\IP_n)=2n$.
\end{example}

Grigoriev observed that the rank of the mixed Hessian matrix is dictated by the irreducible factors of the polynomial. The following lemma generalizes \cite[Lemma 2.4]{Gri2008}.

\begin{lemma}[Rank of multiples]
\label{lem:general-rank}
Over a field $\F$ of characteristic zero\footnote{Over arbitrary fields $\F$, we require $\char(\F) \nmid m$}, let $p\in\F[X,Y]$ be irreducible, and let $r$ be an integer with $1\leq r\leq n$.
Suppose that some $r\times r$ minor of $\HH_{X|Y}(p)$ is
not divisible by $p$.
Let
\[
  g=p^m h,
  \qquad m\ge1,
  \qquad p\nmid h.
\]
Then,
\[
  \rank \HH_{X|Y}(g)\ge r-2.
\]
\end{lemma}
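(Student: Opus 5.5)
The plan is to compute $\HH_{X|Y}(g)$ explicitly by the product rule. It should split as a nonzero scalar multiple of a matrix that agrees with $h\,\HH_{X|Y}(p)$ modulo $p$, plus a correction of rank at most $2$. Rank subadditivity over $\F(X,Y)$ then finishes the argument.

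\textbf{Step 1: expand the mixed Hessian.} Write $\nabla_X p,\nabla_Y p,\nabla_X h,\nabla_Y h$ as column vectors. Differentiating $g=p^m h$ first in $X_i$ and then in $Y_j$ gives
\[
  \HH(g)= m(m-1)p^{m-2}h\,\nabla_Xp\,(\nabla_Yp)^T + m p^{m-1}h\,\HH(p) + m p^{m-1}\bigl(\nabla_Xp\,(\nabla_Yh)^T+\nabla_Xh\,(\nabla_Yp)^T\bigr) + p^m\HH(h).
\]
When $m=1$ the first term is absent, since its coefficient $m(m-1)$ vanishes. We group the terms as follows.
\begin{itemize}
\item Every term whose left factor is $\nabla_Xp$ forms a single outer product $\nabla_Xp\, w^T$, where $w=m(m-1)p^{m-2}h\nabla_Yp+mp^{m-1}\nabla_Yh$.
\item The remaining cross term is $mp^{m-1}\nabla_Xh\,(\nabla_Yp)^T$.
\end{itemize}
This yields
\[
  \HH(g)=m p^{m-1}\,M + L,\qquad M\coloneqq h\,\HH(p)+\tfrac{p}{m}\HH(h),
\]
with $L$ a sum of two rank-one matrices, so $\rank L\le 2$. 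Dividing by $m$ uses that $m$ is invertible in $\F$, which is exactly the characteristic hypothesis in the footnote.

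\textbf{Step 2: show $\rank M\ge r$.} Let $\Delta$ be the $r\times r$ minor of $\HH(p)$ that is not divisible by $p$, with rows $I$ and columns $J$. Expanding the determinant of the corresponding $I\times J$ submatrix of $M$ gives
\[
  \det M_{I,J}\equiv h^r\Delta \pmod p.
\]
Since $\F[X,Y]$ is a UFD, the irreducible $p$ is prime. By hypothesis $p\nmid h$ and $p\nmid\Delta$, so $p\nmid h^r\Delta$. Hence $\det M_{I,J}\neq 0$, and therefore $\rank M\ge r$ over $\F(X,Y)$. Because $mp^{m-1}$ is a nonzero element of $\F(X,Y)$, the matrix $mp^{m-1}M$ also has rank at least $r$.

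\textbf{Step 3: conclude.} Subadditivity of rank gives
\[
  \rank\HH(g)\ \ge\ \rank(mp^{m-1}M)-\rank L\ \ge\ r-2.
\]

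\textbf{Main obstacle.} The only delicate point is the bookkeeping in Step 1. All terms carrying the extra power $p^{m-2}$, or involving gradients of $h$, must be absorbed either into the two rank-one pieces or into the $p\cdot\HH(h)$ part of $M$. Otherwise the reduction modulo $p$ in Step 2 would fail. In particular, $\HH(g)$ itself need not be divisible by $p^{m-1}$, which is exactly why the two rank-one pieces are split off first.
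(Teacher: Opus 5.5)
Your proposal is correct and follows the paper's proof step for step. It uses the same product-rule expansion, the same two rank-one corrections $m\,\nabla_Xp\,(\nabla_Y(p^{m-1}h))^T$ and $mp^{m-1}\nabla_Xh\,(\nabla_Yp)^T$, the same reduction of an $r\times r$ minor of the main part modulo the prime $p$, and the same subadditivity bound; the only cosmetic difference is that you factor the scalar $m$ out of $M$, whereas the paper keeps $M=mh\,\HH_{X|Y}(p)+p\,\HH_{X|Y}(h)$ and uses that $m$ is a unit when checking $p\nmid(mh)^r\mu$.
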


% \nutan{I basically started by trying to prove this once I got the idea of "lifting" one of Ben-Or's lower bounds from decidion tree to communication. This is quite standard in the Boolean literature. }

\begin{proof}
For each $i,j\in[n]$, we first differentiate with respect to $Y_j$ and then with respect to $X_i$, following the convention in \cref{def:comm-matrix}.
Differentiating $g=p^m h$ and grouping the rank-one terms gives
\begin{align*}
\HH_{X|Y}(g)
&=m p^{m-1}h\,\HH_{X|Y}(p)+p^m\HH_{X|Y}(h) \\
&\quad +m(\nabla_Xp)\left(\nabla_Y(p^{m-1}h)\right)^T
     +m p^{m-1}(\nabla_Xh)(\nabla_Yp)^T.
\end{align*}
After factoring out the common scalar $p^{m-1}$ from the first two terms, the main matrix is
\begin{align*}
  M &= mh\,\HH_{X|Y}(p)+p\,\HH_{X|Y}(h)\\
  &\equiv mh\,\HH_{X|Y}(p)\pmod p.
\end{align*}

Let $\mu$ be an $r\times r$ minor of $\HH_{X|Y}(p)$ that is not divisible by
$p$, and let $\mu_M$ be the corresponding minor of $M$. Then
\[
  \mu_M
  \equiv (mh)^r\mu
  \pmod p.
\]
% \pd{verify the calculations.}
% \manon{all good.}
Since $p$ is irreducible, it is prime in the unique factorization domain
$\F[X,Y]$. Also, $m$ is a nonzero scalar in $\F$ and hence a unit.
Together with the assumptions $p\nmid h$ and $p\nmid\mu$, this gives $p\nmid (mh)^r\mu$.
Thus, $\mu_M\ne0$, so $M$, and hence
$p^{m-1}M$, has rank at least $r$.

The remaining terms in $\HH_{X|Y}(g)$ are two matrices of rank at most one:
\[
  \HH_{X|Y}(g)
  =
  p^{m-1}M
  +m(\nabla_Xp)\left(\nabla_Y(p^{m-1}h)\right)^T
  +m p^{m-1}(\nabla_Xh)(\nabla_Yp)^T,
\]
By the inequality $\rank(A+B)\geq\rank(A)-\rank(B)$, adding a rank-one matrix can decrease the rank by at most one.
Thus the two rank-one terms can decrease the rank of $p^{m-1}M$ by at most two.
Therefore
\[
  \rank\HH_{X|Y}(g)\ge r-2.
\]
\end{proof}

In the following lemma, we observe that under stronger assumptions we get full mixed Hessian rank for powers of irreducible polynomials.

\begin{lemma}[Full-rank criterion for powers]
\label{lem:full-rank-criterion}
Over a field $\F$ of characteristic zero, let $p\in\F[X,Y]$ be irreducible, and define the determinant of $(n+1) \times (n+1)$ matrix
% \magnus{Does it also work for $p \in \C[X,Y]$?}
\[
  R_p
  \quad \coloneqq \quad \Det\begin{pmatrix}
    \HH_{X|Y}(p) & \nabla_Xp \\
    \left(\nabla_Yp\right)^T & 0
  \end{pmatrix}.
\]
Suppose that $p\nmid R_p$. Then, for every
\[
  g=p^mh,
  \qquad m\ge2,
  \qquad p\nmid h,
\]
we have
\[
  \rank\HH_{X|Y}(g)=n.
\]
\end{lemma}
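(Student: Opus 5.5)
We first expand $\HH_{X|Y}(g)$ as in the proof of \cref{lem:general-rank}, but this time factor out $p^{m-2}$ from every term. With $\partial_{Y_j}g = mp^{m-1}h\,\partial_{Y_j}p + p^m\partial_{Y_j}h$, differentiating once more in $X_i$ gives
\begin{align*}
\HH_{X|Y}(g) &= p^{m-2}\Bigl( m(m-1)h\,(\nabla_Xp)(\nabla_Yp)^T + p\,N \Bigr),\\
N &= mh\,\HH_{X|Y}(p) + m(\nabla_Xh)(\nabla_Yp)^T + m(\nabla_Xp)(\nabla_Yh)^T + p\,\HH_{X|Y}(h).
\end{align*}
Since $p^{m-2}$ is a nonzero scalar in $\F(X,Y)$, it suffices to show that $G \coloneqq m(m-1)h\,uv^T + pN$ is nonsingular, where $u=\nabla_Xp$ and $v=\nabla_Yp$. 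Here we use $m\ge2$: the rank-one term $uv^T$ now carries the nonvanishing coefficient $m(m-1)h$, instead of being merely a perturbation.

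The key step is the determinant of a rank-one update. We write $c = m(m-1)h$ and use the bordered-determinant identity
\[
\Det(A + c\,uv^T) = \Det A - c\,\Det\begin{pmatrix} A & u \\ v^T & 0\end{pmatrix},
\]
which holds for any square $A$. Apply it with $A = pN$. Then $\Det(pN)=p^n\Det N$, and scaling the first $n$ rows and the first $n$ columns of the bordered matrix by $p^{-1}$ shows
\[
\Det\begin{pmatrix} pN & u\\ v^T & 0\end{pmatrix} = p^{n-1}\Det\begin{pmatrix} N & u\\ v^T & 0\end{pmatrix}.
\]
Hence
\[
\Det G = p^{n-1}\Bigl(p\Det N - c\,\Det\begin{pmatrix} N & u\\ v^T & 0\end{pmatrix}\Bigr).
\]
It remains to show that the bracket is not divisible by $p$, for which it is enough to show that the bordered determinant is nonzero mod $p$.

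Working modulo $p$, we have $N \equiv mh\,\HH_{X|Y}(p) + m(\nabla_Xh)v^T + m\,u(\nabla_Yh)^T$. In the bordered matrix, the term $m(\nabla_Xh)v^T$ is removed by subtracting $m(\nabla_Xh)$ times the last row from the first $n$ rows; this is legal because the last row is $(v^T,0)$. Likewise, $m\,u(\nabla_Yh)^T$ is removed by column operations using the last column $(u,0)^T$. Neither operation changes the determinant, so
\[
\Det\begin{pmatrix} N & u\\ v^T&0\end{pmatrix} \equiv \Det\begin{pmatrix} mh\,\HH_{X|Y}(p) & u\\ v^T & 0\end{pmatrix} = (mh)^{n-1}R_p \pmod p,
\]
where the last equality comes from pulling the factor $mh$ out of the first $n$ rows and then out of the last column.

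The main obstacle is to get this bookkeeping exactly right, namely the sign and scaling in the rank-one determinant identity and the legitimacy of the row and column eliminations. Once that is in place, the conclusion is quick. The scalar $m(m-1)$ is nonzero in characteristic zero, and $p$ is prime in $\F[X,Y]$ with $p\nmid h$ and $p\nmid R_p$. Therefore $p \nmid c\,(mh)^{n-1}R_p$, the bracket is nonzero mod $p$, $\Det G\neq 0$, and $\rank\HH_{X|Y}(g)=n$.
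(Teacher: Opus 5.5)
Your proposal is correct and is essentially the paper's proof. It uses the same factorization $\HH_{X|Y}(g)=p^{m-2}\bigl(m(m-1)h\,uv^T+pN\bigr)$, the same reduction of the rank-one update to a bordered determinant, the same row and column eliminations modulo $p$, and reaches the same final congruence to $-m^n(m-1)h^nR_p$. The only cosmetic difference is that you use $\Det(A+c\,uv^T)=\Det A-c\Det\bigl(\begin{smallmatrix}A&u\\ v^T&0\end{smallmatrix}\bigr)$ with $A=pN$ and then rescale, whereas the paper writes $-p^{n-1}\Det\bigl(\begin{smallmatrix}M&\lambda\vecx\\ \vecy^T&-p\end{smallmatrix}\bigr)$ directly; splitting that last column by linearity gives exactly your bracket $p^{n-1}\bigl(p\Det N-c\Det\bigl(\begin{smallmatrix}N&u\\ v^T&0\end{smallmatrix}\bigr)\bigr)$.
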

% \pd{Maybe this is generalizable, but we will not do it for the conference.}
% \pd{Remark that $m\geq 2$ is not a problem for lower bounds.}
\begin{proof}
Let $\vecx=\nabla_Xp$, and $\vecy=\nabla_Yp$. Following again the convention in \cref{def:comm-matrix}, differentiating $g=p^mh$ and factoring out
$p^{m-2}$ gives
% \note{This sort of looks like a matrix. Can we not just write it without nested aligned? \manon{done, I just commented the previous version just in case}}
\[
\begin{aligned}
  \HH_{X|Y}(g)
  &=p^{m-2}\left(
  \underbrace{m(m-1)h}_{\lambda}\,\vecx\vecy^T
      +p\left(\underbrace{
          mh\HH_{X|Y}(p)+m\vecx\left(\nabla_Yh\right)^T
                   +m\left(\nabla_Xh\right)\vecy^T+p\HH_{X|Y}(h)}_{M}\right)
  \right) \\
  &=p^{m-2}\left(\lambda\vecx\vecy^T+pM\right).
\end{aligned}
\]
The matrix determinant lemma gives
\begin{align}
  \Det\left(\lambda\vecx\vecy^T+pM\right)
  \quad &= \quad -p^{n-1}\Det\begin{pmatrix}
    M & \lambda\vecx \\
    \vecy^T & -p
  \end{pmatrix}
  \quad = \quad p^{n-1}Q,                                             \label{eq:general-full-rank-det}
\end{align}
where
\[
  Q\quad = \quad -\Det\begin{pmatrix}
    M & \lambda\vecx \\
    \vecy^T & -p
  \end{pmatrix}.
\]
Modulo $p$, the term $p\HH_{X|Y}(h)$ in $M$ vanishes. Therefore computing $Q$ modulo $p$ gives
\begin{align*}
Q
&\equiv-\lambda\Det\begin{pmatrix}
  mh\HH_{X|Y}(p)+m\vecx\left(\nabla_Yh\right)^T
  +m\left(\nabla_Xh\right)\vecy^T & \vecx \\
  \vecy^T & 0
\end{pmatrix} \\
&\equiv-\lambda\Det\begin{pmatrix}
  mh\HH_{X|Y}(p) & \vecx \\
  \vecy^T & 0
\end{pmatrix}
\end{align*}
For the second equality, subtract
$m \cdot (\partial h/\partial Y_j)$ times the last column from the $j$th column, and then subtract
$m \cdot (\partial h/\partial X_i)$ times the last row from the $i$th
row. Thus,
\begin{align*}
Q &\equiv-\lambda(mh)^{n-1}R_p \\
&\equiv-m^n(m-1)h^nR_p
\pmod p.
\end{align*}
% \pd{This is where we need $m\ge 2$.}

Since $p$ is irreducible, it is prime in the unique factorization domain $\F[X,Y]$.
Hence, if $p$ does not divide $h$ then it does not divide $h^n$.
Since it divides neither $h^n$ nor $R_p$, it does not divide $h^nR_p$.
Thus, the right-hand side of the preceding congruence is not divisible by $p$, and consequently $Q\ne0$.
By \cref{eq:general-full-rank-det},
\[
  \Det\HH_{X|Y}(g)=p^{n(m-2)+n-1}Q\ne0.
\]
Therefore, $\rank\HH_{X|Y}(g)=n$.
\end{proof}
% \magnus{Proof looks good to me, without having done the concrete computations myself.}
% \manon{all good to me.}

The full-rank criterion extends from polynomials to rational functions; we prove this generalization in \cref{sec:full-rank-hessian-ext}.

\begin{remark}\label{rmk:2n-lb-via-squaring}
    The assumption $m\geq2$ causes no loss in lower-bound applications.
    Indeed, if $g=p^mh$ with $m\geq1$ and $p\nmid h$, then
    $g^2=p^{2m}h^2$, where $2m\geq2$ and $p\nmid h^2$.
    Thus, \cref{lem:full-rank-criterion,lem:comm-matrix} give $c(g^2)\geq2n$.
    Moreover, any protocol computing $g$ also computes $g^2$ at the same cost by squaring its output, so $c(g^2)\leq c(g)$.
    Hence, $c(g)\geq2n$ as well.
\end{remark}

\section[Framework for probabilistic lower bounds\texorpdfstring{\nobreak\hspace{0.3em}\mbox{\color{white}\fontsize{4}{4}\selectfont It is the real and complex relationship between Alice and Bob}}{}]{Framework for probabilistic lower bounds}
\label{sec:framework}
In this section, we present a general framework for proving algebraic communication lower bounds. The framework is designed by providing sufficient conditions that make the set-recognition problem hard for the communication model. The recipe is similar whether we are working over the reals or complex numbers. However, due to the topological differences between $\R$ and $\C$, the proofs are substantially different. We will start with the case $\F = \R$.

\subsection{Over the real numbers}
\label{sec:framework-reals}
Recall from \cref{lem:full-rank-criterion} that, for an irreducible polynomial $p\in\F[X,Y]$, we define determinant of $(n+1) \times (n+1)$ matrix
\[
R_p=\Det\begin{pmatrix}
    \HH_{X|Y}(p) & \nabla_Xp \\
    \left(\nabla_Yp\right)^T & 0
  \end{pmatrix}.
\]
\begin{theorem}
\label{thm:framework-R}
    Let $S \subseteq \R^n\times\R^n$ be a semialgebraic set and let $F \in \R[X_1,...,X_n,Y_1,...,Y_n]$ satisfying:
    \begin{enumerate}
        \item $F$ is irreducible over $\R$
        \item $\partial S \cap V(F)$ has dimension $2n-1$
        \item $F \nmid R_F$.
    \end{enumerate}
    Then $\PCC_{\R}(S) \geq 2n$
\end{theorem}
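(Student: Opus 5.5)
Suppose, for contradiction, that $\mathcal C$ is a probabilistic protocol with deterministic components $C_1,\ldots,C_N$, each of depth at most $2n-1$, that recognizes $S$ with error at most $1/3$. The plan follows Grigoriev's lifting strategy: find a point of $\partial S\cap V(F)$ near which one deterministic component must be exactly correct on a full neighbourhood. Then show that this component evaluates a multiple of a power of $F$, and conclude with the Hessian rank bounds of \cref{sec:hessian}.

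\textbf{Step 1 (generic boundary points and a good component).} Each $C_i$ partitions $\R^n\times\R^n$ into finitely many semialgebraic leaf sets, cut out by sign conditions on the polynomials $P_v(h)$ along paths. Consider the union of the zero sets of all nonzero test polynomials $\widetilde P_v(X,Y)=P_v(h_{v,1},\ldots)$ that do \emph{not} vanish identically on the irreducible hypersurface $V(F)$. Intersected with $V(F)$, this union has dimension at most $2n-2$. Also discard the lower-dimensional set $V(F)\cap V(R_F)$; it is small by hypothesis~3 together with the irreducibility of $F$ and \cref{prop:real_hypersurface}, after first checking that $V(F)$ has dimension $2n-1$ at regular points, which hypothesis~2 provides. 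Since $\partial S\cap V(F)$ has dimension $2n-1$, we can pick a point $u$ in it that avoids all these bad sets and at which $\nabla F(u)\neq0$. Near $u$, every test that does not vanish identically on $V(F)$ has a constant sign on a small ball $B$ around $u$, apart from the tests that vanish on $V(F)$. Since $u\in\partial S$, points of $S$ and of its complement both accumulate at $u$. By shrinking $B$, using semialgebraicity and the fact that $V(F)$ locally separates $B$ into two sides $B^+$ and $B^-$, we may assume that $S\cap B$ is, up to measure zero, $B^+$ or $B^-$ (or $B^+\cup (V(F)\cap B)$, and so on). Averaging the error over a point pair $(x^+,x^-)$ on opposite sides then gives an index $i$ such that $C_i$ answers correctly at two points on opposite sides of $V(F)$ within $B$. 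Because its behaviour differs between them, some test along $C_i$'s path must change sign across $V(F)$ near $u$. That test has the form $\widetilde P_v=F^m h$ with $m\ge1$ and $F\nmid h$, since its sign change is confined to $V(F)$ locally. Making this rigorous, which requires choosing $u$ so that every test either vanishes on $V(F)$ or is nonzero at $u$, is the main technical obstacle, and I would handle it by a dimension count over the finitely many tests of all $C_i$.

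\textbf{Step 2 (communication cost of $\widetilde P_v$).} The polynomial $\widetilde P_v(X,Y)=P_v(a_1(X),\ldots,b_1(Y),\ldots)$ is a polynomial-evaluation representation of cost at most the depth $r\le 2n-1$. Hence $c(F^mh)\le 2n-1$.

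\textbf{Step 3 (contradiction via Hessian rank).} By \cref{rmk:2n-lb-via-squaring}, which combines \cref{lem:full-rank-criterion} (using $F\nmid R_F$) with \cref{lem:comm-matrix}, we get $c(F^mh)\ge c((F^mh)^2)\ge 2n$. This contradicts Step 2, so every probabilistic protocol has cost at least $2n$. I expect Step 1 to be the hard part: the genericity choice of $u$, and the argument that the probabilistic correctness forces a single component to detect the crossing of $V(F)$. The algebraic part reduces directly to the lemmas already established.
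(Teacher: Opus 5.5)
\textbf{Relation to the paper.} Your plan follows the paper's proof, phrased locally. The paper shows $\partial S\subseteq\bigcup_i V(G_i)$, where $G_i$ is the product of the composed tests of $C_i$. Off these zero sets every $C_i$, and hence the acceptance probability, is locally constant. The dimension-drop fact then gives $F\mid G_i$ for some $i$. You instead fix a point $u\in\partial S\cap V(F)$ with $\widetilde P(u)\neq0$ for every nonzero test $\widetilde P$ with $F\nmid\widetilde P$, and look for a component whose output changes near $u$. Your Steps 2 and 3 match the paper's ending.

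\textbf{The gap in Step 1.} You take the distinguishing pair to be two points on opposite sides of $V(F)$, and claim that a component correct on both must behave differently on them. That assumes $B^+$ and $B^-$ differ in membership in $S$, which the hypotheses do not guarantee. Take $F=\EQ_{n,\varepsilon}$ with $S=V(F)$ or $S=\{F\neq0\}$. In both cases $\partial S=V(F)$, so hypothesis 2 holds. Yet both sides of $V(F)$ lie entirely outside $S$ (resp.\ inside $S$), so your pair gives no divergence. The ``up to measure zero'' description of $S\cap B$ is also unjustified, and it misses the point: the witnesses may lie on the null set $V(F)$.

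\textbf{The repair.} Drop the two-sides picture. Since $u\in\partial S$, every small ball $B$ around $u$ contains some $z_{\mathrm{in}}\in S$ and some $z_{\mathrm{out}}\notin S$; one of them may lie on $V(F)$, or even be $u$ itself. A union bound over the two error events gives a $C_i$ that is correct on both, with probability at least $1/3$. So the two paths in $C_i$ first diverge at some node $v$, and $\widetilde P_v$ takes different sign outcomes at $z_{\mathrm{in}}$ and $z_{\mathrm{out}}$. Shrink $B$ so that every nonzero test with $F\nmid\widetilde P$ keeps a constant sign on $B$; these are all nonzero at $u$ by the choice of $u$. Then $\widetilde P_v\neq0$ and $F\mid\widetilde P_v$, which is exactly what Steps 2--3 need.

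\textbf{Two smaller points.}
\begin{itemize}
\item State the genericity as the dichotomy ``$F\mid\widetilde P$ or $\widetilde P(u)\neq0$'', which the dimension-drop fact gives directly. Your phrasing ``vanishes identically on $V(F)$'' would also need the real Nullstellensatz for $F$ to conclude divisibility. ``Its sign change is confined to $V(F)$ locally'' is not, on its own, a reason for $F\mid\widetilde P_v$.
\item Avoiding $V(R_F)$, requiring $u$ to be a regular point, and the local separation of $B$ into two sides are all unnecessary.
\end{itemize}
With this repair your argument is the contrapositive of the paper's containment claim, applied at a single generic boundary point.
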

\begin{proof}
    Let $\mathcal{C}$ be a probabilistic communication protocol recognizing $S$. By definition,
    $\mathcal{C} $ is a family of deterministic protocols $ \{C_1,...,C_N\}$, where each deterministic protocol $C_i$ is chosen with probability $p_i \geq 0$ such that $\sum_{i \in [N]} p_i = 1$. Let $d$ denote the maximum depth of any deterministic protocol tree in the finite support of $\mathcal{C}$ (i.e., the communication complexity of $\mathcal{C}$).

    For each $C_i$ and a node $v$ in $C_i$, let $g_{i,v}$ be a testing polynomial at node $v$.
    We define $G_i = \prod g_{i,v}$, where the product ranges over the nodes $v$ of $C_i$. We will assume that our protocols are minimal. That is, the protocol has been pruned such that there are no unreachable nodes and any node whose composed testing polynomial is identically zero has been contracted. We will first prove the following lemma about the boundary of $S$. Recall that, if a point lies in the boundary of $S$, namely $\partial S$, then for any ball around $u$, there is at least one point in $S$ and one point outside $S$.

    \begin{claim}\label{claim:containment}
        For any $u \in \partial S$, there exists at least one $i \in [N]$ such that $G_i(u) = 0$, i.e.,
        $$
        \partial S \subseteq \bigcup_{i=1}^N V(G_i).
        $$
    \end{claim} %\nutan{I think we only use the fact that $u \in \partial S$?}
    %\magnus{Hmm, you might be right}
    \begin{claimproof}
        Assume, for contradiction, that there exists $u \in \partial S$ such that $G_i(u) \neq 0$, for all $i \in [N]$. Then, no testing polynomial in any $C_i$ vanishes on $u$, meaning $u$ is in the \emph{finite} intersection, (recall that $D(q)$ denotes the nonzero set of $q$)
        $$
        u \in \bigcap_{\substack{C \in \mathcal{C}\\q \in C}}D(q)=:\mathcal{A},
        $$
        where, the intersection ranges over the (nonzero) testing polynomials $q \in C_i$ and across all $C_i \in \mathcal{C}$.

        But, since  $\mathcal{A}$ is a Zariski open set (thus, a Euclidean open set), as it is defined by polynomials, there exists some ball $B_u$ around $u$ such that every testing polynomial across every protocol has constant sign across the elements of $B_u$, and, by extension, all $C_i$ are constant functions on $B_u$.

        To see this, first observe that for every $i \in [N]$, $G_i(u) \neq 0$. This means that, on input $u$, the protocol traverses only along $>0$ or $<0$ tests, and reaches a leaf ($l_i$ in protocol $C_i$). Assume that the output at this leaf is $o_i \in \{\text{accept, reject}\}$. That is, $C_i$ outputs $o_i$ on $u$. Then, by continuity of the polynomials in the protocol, there is a ball of radius $\delta_i$ around $u$, denoted by $B_u(\delta_i)$ , such that every point $z \in B_u(\delta_i)$, $C_i(z) = o_i$. Let $\delta_0  = \text{min}_{i \in [N]} \delta_i$. Then, we know that for every point $z \in B_u(\delta_0)$, and every $i \in [N]$, $C_i(z) = C_i(u)$.

        Now, for any $z \in B_u(\delta_0)$,
        \begin{align}
            \Pr[\mathcal{C} \text{ accepts } z]  & = \sum_{i \in [N]} p_i \times \mathbb{1}[C_i \text{ accepts } z] \nonumber \\
            & = \sum_{i \in [N]} p_i \times \mathbb{1}[C_i \text{~accepts } u] \nonumber \\
            & = \Pr[\mathcal{C} \text{~accepts } u] \label{eq:ball-around-u},
        \end{align}
        where $\mathbb{1}[C_i \text{ accepts } z]$ is $1$ if $C_i$ accepts $z$ and $0$ otherwise. The second equality comes from our argument above and the fact that $z \in B_u(\delta_0)$.

        But, $u$ is in the boundary of $S$. This means that in any ball around $u$, and, hence specifically in  $B_u(\delta_0)$, we have two points $z_{\text{in}} \in S$ and $z_{\text{out}} \notin S$. Thus, we know that $\Pr[\mathcal{C} \text{ accepts } z_{\text{in}} ] \geq 2/3$ and $\Pr[\mathcal{C} \text{ accepts } z_{\text{out}} ] \leq 1/3$.
        Thus, $2/3 \leq Pr[C_i \text{ accepts } z_{\text{in}}] = Pr[ C_i  \text{ accepts } z_{\text{out}}] \leq 1/3$, which is a contradiction.
    \end{claimproof}
    %\magnus{At some point i wrote explicitly $2/3 \leq Pr[z_{in}] = Pr[z_{out}] \leq 1/3$ which makes the contradiction very clear, but it must have been removed?}
    %\manon{I redone the proof and it comes from that! I rewrite it}
    We use the above claim and the following fact to complete the proof of \Cref{thm:framework-R}.
    \begin{fact}\label{fact:dim-red}\cite{Henk2008}
        If $F$ is irreducible, $\dim(V(F)) = 2n-1$ and $F \nmid G$ then
        $$
        \dim(V(F) \cap V(G)) \leq 2n-2
        $$
    \end{fact}
    At an intuitive level, the above fact says that, given a set of zeroes of an irreducible polynomial $F$, if we impose an additional non-trivial constraint $G = 0$ on the set, then the codimension of the intersection must be at least 1, i.e., the dimension must drop by at least 1.  Note that, the assumption $\dim V(F)=2n-1$ ensures that the real zero set of $F$ contains a genuine hypersurface piece.\footnote{In particular, it excludes polynomials such as $x^2+y^2$, whose real zero set has smaller dimension.}
%\nutan{Addressed one major comment.}

    Using the above fact, we will first show that $F$ must divide some $G_i$. Suppose it does not. Then, for any $i \in [N]$, not all the roots of $F$ can be roots of $G_i$. Thus, if $F$ did not divide any $G_i$, then we would have
    $\dim\left(V(F) \cap V(G_i)\right) \leq 2n-2$ for all $i\in[N]$ and so
    \begin{equation}\label{eq:dimension-cup-cap}
    \dim\left( \bigcup_{i=1}^N\left(V(F) \cap V(G_i)\right) \right) \leq 2n-2.
    \end{equation}
    By \Cref{claim:containment}, we have $\partial S \cap V(F) \subseteq \bigcup_{i=1}^N\left(V(F) \cap V(G_i)\right),$
    so \cref{eq:dimension-cup-cap} implies that
    \begin{equation}\label{eq:subset}
    \dim(\partial S \cap V(F)) \leq \dim\left( \bigcup_{i=1}^N(V(F) \cap V(G_i)) \right) \leq 2n-2,
    \end{equation}
    which contradicts Assumption $\mathit{ 2 }$.

    Thus, $F$ must divide some $G_i$. Since $F$ is irreducible (in particular prime), it must divide some individual testing polynomial, $G$, occurring in the tree $C_i$. That is, $G = F^m \cdot H$, where $m\geq 1$ and $F \nmid H$. The polynomial $F$ being irreducible also means that $G^2 = F^{2m}\cdot H^2 = F^{m'}\cdot H^2$, where $m'\geq 2$ and $F \nmid H^2$. From our assumption in \Cref{thm:framework-R}, we also know that $F \nmid R_F$.

    We now apply \Cref{lem:full-rank-criterion} to $G^2$ and obtain that $c(G^2) \geq 2 \rank(\HH_{X|Y}(G^2)) = 2 n$. Next, notice that $c(G) \geq c(G^2)$: suppose $c(G) = s$, it implies that $G(X,Y) = g(a_1(X), \ldots, a_{s_1}(X), b_1(Y), \ldots, b_{s_2}(Y))$, where $s_1+s_2 =s$. Then, $G^2(X,Y)$ is computed by simply squaring $g$. Thus, $c(G^2) \leq s$. Together with the lower bound on $c(G^2)$, we get $c(G)\geq 2n$ and hence, the communication complexity of recognizing $S$, i.e. the depth of the protocol, which is at least $c(G)$,  is at least $2n$.
\end{proof}

\begin{remark}
    The condition that $\dim(\partial S \cap V(F)) = 2n-1$ is crucial and cannot be made weaker. Indeed, if $\dim(\partial S \cap V(F)) \leq 2n-2$ then \cref{eq:subset} would not be a contradiction. In other words, we need dimension $2n-1$ so that the boundary is too large to lie inside $V(F)\cap V(G)$ unless $F\mid G$.
\end{remark}

A weaker mixed Hessian assumption gives a correspondingly weaker lower bound.
If some $r\times r$ minor of $\HH_{X|Y}(F)$ is not divisible by $F$, then \cref{lem:general-rank} can be used in the final step of the preceding proof to obtain a lower bound of $2(r-2)$.

\begin{theorem}
\label{thm:framework-R-mixed-hessian}
Let $S\subseteq\R^n\times\R^n$ be a semialgebraic set and let
$F\in\R[X_1,\ldots,X_n,Y_1,\ldots,Y_n]$, and let $1\leq r\leq n$. Suppose that
\begin{enumerate}
  \item $F$ is irreducible over $\R$;
  \item $\partial S\cap V(F)$ has dimension $2n-1$; and
  \item some $r\times r$ minor of $\HH_{X|Y}(F)$ is not divisible by $F$.
\end{enumerate}
Then $\PCC_{\R}(S)\geq2(r-2)$.
\end{theorem}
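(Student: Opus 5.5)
The plan is to rerun the proof of \cref{thm:framework-R} verbatim up to the point where an irreducible factor is located in some testing polynomial. Only the final rank step changes: there I will use \cref{lem:general-rank} in place of \cref{lem:full-rank-criterion}.

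\textbf{Step 1: locate a testing polynomial divisible by $F$.} Let $\mathcal{C}=\{C_1,\ldots,C_N\}$ be a probabilistic protocol recognizing $S$, with each $C_i$ pruned as before. Set $G_i=\prod_v g_{i,v}$, where each $g_{i,v}$ is the testing polynomial composed with the message polynomials, so it lies in $\R[X,Y]$. \cref{claim:containment} uses only the boundary property and continuity, not the assumption on $R_F$. It therefore gives $\partial S\subseteq\bigcup_i V(G_i)$.

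Intersecting with $V(F)$, assumption 2 gives $\dim(\partial S\cap V(F))=2n-1$. This also forces $\dim V(F)=2n-1$, which is the hypothesis needed for \cref{fact:dim-red}. If $F\nmid G_i$ for every $i$, then \cref{fact:dim-red} and the fact that a finite union of sets of dimension at most $2n-2$ again has dimension at most $2n-2$ contradict assumption 2. So $F\mid G_i$ for some $i$. Since $F$ is prime in the UFD $\R[X,Y]$, it divides a single composed testing polynomial $G$ in $C_i$.

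\textbf{Step 2: factor $G$ and apply \cref{lem:general-rank}.} Write $G=F^mH$ with $m\geq1$ and $F\nmid H$; this factorization exists because $G\neq0$ after pruning and $F$ is irreducible. By assumption 3, some $r\times r$ minor of $\HH_{X|Y}(F)$ is not divisible by $F$. \cref{lem:general-rank} applies over $\R$, which has characteristic zero, and yields $\rank\HH_{X|Y}(G)\geq r-2$. Unlike the full-rank case, no squaring trick is needed here, because \cref{lem:general-rank} already allows $m\geq1$.

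\textbf{Step 3: transfer to protocol depth.} Let $v$ be the node carrying $G$. Then $G$ equals the test polynomial $P_v$ evaluated on the transcript $(h_1,\ldots,h_k)$ along the root-to-$v$ path, and each $h_j$ is a one-party polynomial. This is a polynomial-evaluation representation of $G$ of cost $k\leq d$. By \cref{lem:comm-matrix}, $k\geq c(G)\geq2\rank\HH_{X|Y}(G)\geq2(r-2)$, so $d\geq2(r-2)$. If $r\leq2$ the bound is trivial.

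\textbf{Main obstacle.} The only delicate point is Step 1, specifically checking that \cref{fact:dim-red} applies. Its hypothesis $\dim V(F)=2n-1$ is not assumed explicitly, but it follows from assumption 2 together with $\dim V(F)\leq 2n-1$, which holds for any nonzero polynomial. I also need that \cref{claim:containment} did not depend on the $R_F$ condition; rereading its proof confirms this.

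The rest is a direct substitution of \cref{lem:general-rank} for \cref{lem:full-rank-criterion}. The loss of $2$ in the rank, from the two rank-one correction terms, produces the $2(r-2)$ bound.
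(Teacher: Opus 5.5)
Your proposal is correct and follows the paper's own route. The paper likewise reruns the proof of \cref{thm:framework-R} to find a composed testing polynomial $G=F^mH$, then uses \cref{lem:general-rank} in place of \cref{lem:full-rank-criterion} together with \cref{lem:comm-matrix}; your observations that the squaring step is unnecessary here (since \cref{lem:general-rank} already allows $m\geq1$) and that $\dim V(F)=2n-1$ follows from assumption 2 are both correct.
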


We now derive the main result stated in the introduction by specializing $S$ in \cref{thm:framework-R} to the threshold set $S(F)$.

\getkeytheorem{realthresholdtheorem}
\begin{proof}
    Let $(z_1,...,z_{2n}) = (X,Y)$. It suffices to verify Conditions 2 and 3 of \cref{thm:framework-R}. Condition 3 follows immediately: if $F \mid R_F$, then $V(F) \subseteq V(R_F)$, contradicting $F(u)=0$ and $R_F(u)\neq 0$.

    It remains to verify Condition 2, namely,
    \[
        \dim(\partial S(F) \cap V(F)) = 2n-1.
    \]

    Since $R_F(u) \neq 0$, neither the last row nor the last column is zero. Hence,
    $$
    \nabla_XF(u)\neq 0 \qquad \text{and} \qquad \nabla_YF(u) \neq 0,
    $$
    so $\nabla F(u) \neq 0$.
    We may therefore fix $i \in [2n]$ such that $\partial F/\partial z_i(u) \neq 0$. By continuity, there exists an open ball $B_u$ around $u$ such that $\partial F/\partial z_i$ has constant sign on $B_u$.
    Since $u$ is non-singular, by the Implicit Function Theorem, intersecting $V(F)$ with a sufficiently small ball around $u$ preserves its dimension. Hence, shrinking $B_u$ further if necessary, we get by \cref{prop:real_hypersurface},
    \begin{equation}\label{eq:dim-F-Bu}
        \dim(V(F) \cap B_u) = 2n-1.
    \end{equation}

    Now, fix $z = (z_1,...,z_{2n}) \in V(F) \cap B_u$.
    Since we assumed $\partial F / \partial z_i$ has constant nonzero sign on $B_u$, we get that $F$ is strictly monotone in the $z_i$-direction. Thus, for sufficiently small $\varepsilon>0$, we have
    $$
    F(z_1,...,z_i-\varepsilon,...,z_{2n}) < F(z_1,...,z_i,...,z_{2n})=0 < F(z_1,...,z_i+\varepsilon,...,z_{2n})
    $$
    if increasing and with the inequalities reversed if decreasing. In either case the two points
    $$
    (z_1,...,z_i-\varepsilon,...,z_{2n}) \qquad \text{and} \qquad (z_1,...,z_i+\varepsilon,...,z_{2n})
    $$
    lie on opposite sides of $S(F)$ and so we conclude that $z\in \partial S(F)$. Thus, since $z$ was arbitrary, we have
    \begin{equation}\label{eq:F-cap-B_u-in-SF}
        V(F) \cap B_u \subseteq \partial S(F)
    \end{equation}

    Finally, \cref{eq:F-cap-B_u-in-SF} together with \cref{eq:dim-F-Bu} gives us
    $$
    2n-1 = \dim(V(F) \cap B_u) \leq \dim(\partial S(F) \cap V(F)) \leq 2n-1.
    $$
    Hence, $\dim(\partial S(F) \cap V(F)) = 2n-1$, as desired.
\end{proof}

\begin{remark}
    The same proof works for the stricter threshold set,
    $$
    S_{< 0}(F) = \{(X,Y) \in \R^n\times\R^n \mid F(X,Y) < 0\}.
    $$
    All results regarding weak threshold sets thus remain true after replacing it with a strict threshold set.
\end{remark}

\subsection{Over the complex numbers}

We now prove the complex analogue of the real framework. The real framework is formulated using the Euclidean boundary, whereas, over $\C$, we express the corresponding condition using Zariski closure.

\begin{theorem}\label{thm:framework-C}
    Let $S \subseteq \C^n\times\C^n$ be a constructible set of dimension $2n-1$ and let $F \in \C[X_1,...,X_n,Y_1,...,Y_n]$ satisfy
    \begin{enumerate}
        \item $F$ is irreducible over $\C$
        \item $V(F) \subseteq \overline{S}$, where $\overline{S}$ is the Zariski closure of $S$.
        \item $F \nmid R_F$.
    \end{enumerate}
    Then, $\PCC_{\C}(S) \geq 2n$.
\end{theorem}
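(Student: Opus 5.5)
The proof will follow the real framework (\cref{thm:framework-R}): first show that some testing polynomial must be divisible by $F$, and then invoke \cref{lem:full-rank-criterion}. Let $\mathcal{C}=\{C_1,\ldots,C_N\}$ with weights $p_i$ be a probabilistic protocol for $S$. As before, we assume the protocols are minimal, so that no composed testing polynomial vanishes identically. Let $G_i$ be the product of the composed testing polynomials of $C_i$, viewed as polynomials in $(X,Y)$, and set $G=\prod_i G_i\neq 0$.

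The first key step replaces \cref{claim:containment} with a Zariski-density statement. On the Zariski open set $D(G)$, every test in every $C_i$ takes the outcome ``$\neq 0$''. Each $C_i$ therefore follows a single fixed path there, so it is constant on $D(G)$. Consequently $\Pr[\mathcal{C}\text{ accepts } z]$ is the same for all $z\in D(G)$. By bounded error, $D(G)$ is then either contained in $S$ or disjoint from $S$.

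The first case is impossible. $D(G)$ is a nonempty Zariski open subset of $\C^{2n}$, so it has dimension $2n$, whereas $\dim S=2n-1$. Hence $S\cap D(G)=\emptyset$, that is, $S\subseteq V(G)$. Since $V(G)$ is Zariski closed, it follows that $\overline{S}\subseteq V(G)$.

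Assumption 2 now gives $V(F)\subseteq V(G)$. As $F$ is irreducible over $\C$, the Nullstellensatz yields $F\mid G^k$ for some $k$, and primality of $F$ then gives $F\mid G$. Using primality once more, $F$ divides a single composed testing polynomial $g$ at some node of some $C_i$.

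Write $g=F^mH$ with $m\ge1$ and $F\nmid H$. Then $g^2=F^{2m}H^2$ with $2m\ge 2$ and $F\nmid H^2$. Together with assumption 3, \cref{lem:full-rank-criterion} shows that $\rank \HH_{X|Y}(g^2)=n$ over $\C(X,Y)$. By \cref{lem:comm-matrix} and the squaring argument of \cref{rmk:2n-lb-via-squaring}, we get $c(g)\ge c(g^2)\ge 2n$.

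The testing polynomial $g$ at a node of level $r$ is a polynomial in the transcript $h_1,\ldots,h_r$. These transcript entries are Alice-polynomials in $X$ and Bob-polynomials in $Y$, so they give a polynomial-evaluation representation of $g$ of cost at most $r$, which is at most the depth. The depth of $C_i$ is therefore at least $2n$, and hence $\PCC_\C(S)\ge 2n$.

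The step I expect to need the most care is the first one, namely justifying that each $C_i$ is constant on $D(G)$ and that this forces $S\cap D(G)=\emptyset$. Two points need checking. One is that minimality means the composed tests are nonzero polynomials, since an identically zero test would always take the ``$=0$'' branch and must be contracted. The other is that the dimension comparison is legitimate for a constructible $S$: a constructible set containing a nonempty Zariski open subset of $\C^{2n}$ has dimension $2n$. I would also verify that the Nullstellensatz step uses only that $V(F)$ is the zero set of the irreducible, hence radical, principal ideal $(F)$. This holds over $\C$ and is exactly where the complex proof avoids the boundary and dimension machinery needed over $\R$.
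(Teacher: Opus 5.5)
Your proof is correct, but it takes a different route from the paper's. The paper first uses the footnoted fact that $V(F)$ is an irreducible component of $\overline{S}$, so that almost every point of $U=V(F)$ lies in $S$. It then averages over ``fractions'' of $U$ and of its complement to select a single deterministic protocol $C_k$ that accepts at least a third of $U$ and rejects at least a third of the complement (\cref{clm:select-protocol}). Next it shows that the generic all-``$\neq 0$'' path of $C_k$ must be rejecting (\cref{clm:reject-path}). Finally it invokes \cref{fact:dim-red} to conclude that $F$ divides the product $G_k$ of the tests along that one path.

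You instead take one generic Zariski-open set $D(G)$ common to every protocol in the support, on which the acceptance probability is constant. This is exactly the complex counterpart of \cref{claim:containment}, so $\dim S<2n$ forces $S\subseteq V(G)$. You then replace the dimension-drop fact by the Nullstellensatz applied to the prime ideal $(F)$.

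Your route is more rigorous and more economical. It needs no measure on $U$ or its complement, which the paper leaves implicit. It also needs no averaging step to pick $C_k$, no argument that $S\cap U$ is dense in $U$, and no appeal to \cref{fact:dim-red}. It uses only $\dim S\le 2n-1$ and $V(F)\subseteq\overline{S}$.

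The paper's route gives slightly more structural information: the test divisible by $F$ sits on the generic path of a protocol that is correct on a constant fraction of both $U$ and its complement. That localization is not needed for the lower bound, since any node of any protocol in the support suffices.

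Your minimality concern is harmless even without pruning. If you multiply only the composed tests that are not identically zero, each $C_i$ is still constant on $D(G)$, because the identically zero tests always take the ``$=0$'' branch.
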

\begin{proof}

    Let $U = V(F)$ and note that almost every point of $U$ lies in $S$.\footnote{$U$ is irreducible and has dimension $2n-1$. Since $U \subseteq \overline{S}$ and $\dim \overline{S}=\dim S=2n-1$, $U$ is an irreducible component of $\overline{S}$ and so $S\cap U$ contains a dense Zariski-open subset of $U$, since $S$ is constructible.}

    Let $\mathcal{C}=\{C_1,...,C_N\}$ be a probabilistic communication protocol recognizing $S$, and let $\mu_1, \ldots, \mu_N$ be the corresponding probability distribution. The following claim identifies a single protocol $C_k$ that is correct on a constant fraction of the points both inside and outside $U$.

    \begin{claim}\label{clm:select-protocol}
        There exists $C_k\in\mathcal{C}$ that accepts at least $1/3$ of the points in $U$ and rejects at least $1/3$ of the points outside $U$.
    \end{claim}
    \begin{claimproof}
        For each $j\in[N]$, let $A_j$ be the fraction of points in $U$ accepted by $C_j$ and let $B_j$ be the fraction of points outside $U$ rejected by $C_j$. %\magnus{I have added "under some appropriate measure on $X$" because technically we can only talk about this with some measure, but the concrete measure doesn't really matter. Do you think this is fine? For example, on $\C^n\times \C^n$ the Lebesque measure should be fine, but we need a measure specific to $U$ i suppose. We can also completely sweep the measure talk under the rug?} 
        %\nutan{I have a slight preference for suppressing the discussion about the measure here. But very slight. Either way is fine.}
        %\nutan{If we decide to keep it, then "some measure" appears twice and the reader may think it can be one measure in one case and a different one in another case. So, maybe you can rephrase a bit to avoid this confusion.}
        %\magnus{I'm also fine with suppressing it. After all Grigoriev didn't talk about measures.}
        %\magnus{\important{I have chosen to suppress any measure theoretic talk. Let me know if this should be changed. Alternatively we could add a footnote about it?}}
        %\manon{Yes, I think it is a good idea. The footnote might not be necessary IMO}
        %\magnus{Then i'll leave it as is}
        By the correctness of the probabilistic protocol,\footnote{%By the assumptions, almost every point outside $U$ is a reject point.
        By the previous footnote, almost every point of $U$ belongs to $S$. Moreover, almost every point outside $U$ lies outside $S$.}
        \[
            \sum_{j\in[N]}\mu_jA_j\geq\frac{2}{3}
            \qquad\text{and}\qquad
            \sum_{j\in[N]}\mu_jB_j\geq\frac{2}{3}.
        \]
        Therefore,
        \[
            \sum_{j\in[N]}\mu_j(A_j+B_j)\geq\frac{4}{3},
        \]
        so there exists $k\in[N]$ such that $A_k+B_k\geq4/3$. Since $A_k,B_k\leq1$, both $A_k$ and $B_k$ are at least $1/3$.
    \end{claimproof}

    Let $T_{1},...,T_{m}$ be the polynomials along the path $\pi$ in $C_k$, in which all decisions are $T_i(X,Y)\neq0$. Take the product
    $$
    G_k = \prod_{i=1}^{m} T_{i}.
    $$
    First, we will show that $\pi$ as defined above is a rejecting path in $C_k$. We will use this fact to deduce that $F | G_k$. Once we have that $F|G_k$, then the proof is similar to the proof of \Cref{thm:framework-R}.

    \begin{claim}
        \label{clm:reject-path}
    Let $\pi$ and $G_k$ as defined above. Then any $u \in D(G_k)$, i.e., any input that ends up at the leaf of $\pi$, is rejected by $C_k$.
    \end{claim}

    \begin{claimproof}
        Let $\pi'$ be some path other than $\pi$ and let $T_1',\cdots,T_{m'}'$ be the testing polynomials encountered along $\pi'$. Since $\pi'$ differs from $\pi$, the points, $(X',Y')$, which follow $\pi'$ must have $T_i'(X',Y')=0$ for some $i \in [{m'}]$.
        Since $T'_i$ is nonzero, $V(T'_i)$ is either empty or a finite union of hypersurfaces of dimension $2n-1$, by \cref{prop:real_hypersurface}. Hence the set of inputs following $\pi'$, being contained in $V(T'_i)$, has dimension at most $2n-1$.

        Thus, $\pi$ is the only path for which a full dimensional subset follows. Furthermore, the space of points we wish to reject, $(\C^n \times \C^n) \setminus S$, is full dimensional, since $S$ has dimension $2n-1$ and constructible. By  \Cref{clm:select-protocol}, $1/3$ of points of $(\C^n \times \C^n) \setminus S$ are correctly rejected and this can only be the case if $\pi$ is a rejecting path.
    \end{claimproof}

    \begin{claim}\label{claim:F-div-Gk}
        $F \mid G_k$
    \end{claim}
    \begin{claimproof}
        Suppose, for contradiction, that $F \nmid G_k$. As $F$ is irreducible, using \Cref{fact:dim-red}, we get that
        $$
        \dim(E) \leq 2n-2, \qquad \text{where} \qquad E=U \cap V(G_k).
        $$

        Furthermore, for every $u \in U \setminus E$ we have $G_k(u) \neq 0$ and so $C_k$ rejects $u$. Thus
        $$
        B = \{u \in U \mid C_k(u)= \text{accept} \} \subseteq E.
        $$
        Hence, $C_k$ only accepts points from a dimension $2n-2$ set, which is a zero fraction of the points of $U$. But $C_k$ was chosen to accept $1/3$ of points from $U$. This is a contradiction.
    \end{claimproof}

    We can now write $G_k = F^m \cdot H$, where $m\geq 1$ and $F \nmid H$. We know that $F$ is irreducible. This means that $G_k^2 = F^{2m}\cdot H^2 = F^{m'}\cdot H^2$, where $m'\geq 2$ and $F \nmid H^2$. From our assumption in \Cref{thm:framework-C}, we also know that $F \nmid R_F$. The rest of the proof is identical to the proof of \Cref{thm:framework-R}, which shows that $c(G_k)\geq 2n$ and hence, the communication complexity of recognizing $S$ is at least $2n$, i.e. the depth of the protocol, which is at least $c(G_k)$,  is at least $2n$.
\end{proof}

As in the real case, a weaker assumption on the mixed Hessian gives a correspondingly weaker lower bound.
Using a non-divisible $r\times r$ minor together with \cref{lem:general-rank} yields the following complex analogue of \cref{thm:framework-R-mixed-hessian}.

\begin{theorem}[Mixed Hessian framework over $\C$]
\label{thm:framework-C-mixed-hessian}
Let $S\subseteq\C^n\times\C^n$ be a constructible set of dimension
$2n-1$, let $F\in\C[X_1,\ldots,X_n,Y_1,\ldots,Y_n]$, and let $1\leq r\leq n$. Suppose that
\begin{enumerate}
  \item $F$ is irreducible over $\C$;
  \item $V(F)\subseteq\overline{S}$, where $\overline{S}$ is the Zariski closure of $S$; and
  \item some $r\times r$ minor of $\HH_{X|Y}(F)$ is not divisible by $F$.
\end{enumerate}
Then
\[
  \PCC_{\C}(S)\geq2(r-2).
\]
\end{theorem}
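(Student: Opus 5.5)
The plan is to reuse the proof of \cref{thm:framework-C} almost word for word and change only the final rank step, exactly as \cref{thm:framework-R-mixed-hessian} modifies \cref{thm:framework-R}. Let $U=V(F)$ and let $\mathcal{C}=\{C_1,\ldots,C_N\}$ with weights $\mu_1,\ldots,\mu_N$ be a bounded-error protocol for $S$. None of the following steps used the hypothesis $F\nmid R_F$; they used only that $S$ is constructible of dimension $2n-1$, that $F$ is irreducible, and that $V(F)\subseteq\overline{S}$:
\begin{itemize}
\item \cref{clm:select-protocol} produces a single protocol $C_k$ that accepts at least a $1/3$ fraction of $U$ and rejects at least a $1/3$ fraction of the complement.
\item \cref{clm:reject-path} shows that the generic path $\pi$ of $C_k$, along which every test is ``$\neq 0$'', is rejecting.
\item \cref{claim:F-div-Gk}, via \cref{fact:dim-red}, shows that $F\mid G_k$, where $G_k$ is the product of the tests on $\pi$.
\end{itemize}
Since these three hypotheses are all still assumed, I will quote these steps unchanged.

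Next, because $F$ is prime in $\C[X,Y]$, $F$ divides an individual testing polynomial $T$ on $\pi$. Write $T=F^mH$ with $m\ge1$ and $F\nmid H$. Here is the only change. Instead of \cref{lem:full-rank-criterion}, I apply \cref{lem:general-rank} with $p=F$, using hypothesis 3 that some $r\times r$ minor of $\HH_{X|Y}(F)$ is not divisible by $F$. This gives $\rank\HH_{X|Y}(T)\ge r-2$. No squaring trick is needed, since \cref{lem:general-rank} allows $m\ge1$. Then \cref{lem:comm-matrix} gives $c(T)\ge 2(r-2)$.

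Finally, I must relate $c(T)$ to the depth. Here $T$ denotes the testing polynomial composed with the messages, and it is evaluated at a node of $\pi$. So it is a polynomial in the messages sent so far, which are one-party polynomials. This is a polynomial-evaluation representation of $T$ whose cost is at most the node's level, which is at most the depth of $C_k$. Hence $\PCC_{\C}(S)\ge 2(r-2)$.

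The main point needing care is purely bookkeeping. I must make sure that ``testing polynomial'' in \cref{claim:F-div-Gk} means the composition with the transcript, viewed as an element of $\C[X,Y]$, since that is where both the divisibility and the mixed Hessian live. I must also make sure that the minimality and pruning conventions from the real proof, under which identically-zero composed tests are contracted, hold here as well. That guarantees each $T_i$ is a nonzero polynomial, which \cref{clm:reject-path} needs. I expect no genuine obstacle beyond this; the substantive work is already contained in \cref{lem:general-rank}.
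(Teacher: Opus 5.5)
Your proposal is correct and is the argument the paper intends. The paper gives no separate proof of this theorem; it only remarks that one reruns the proof of \cref{thm:framework-C}, whose claims never use $F\nmid R_F$, and replaces \cref{lem:full-rank-criterion} by \cref{lem:general-rank} in the final step, which is exactly what you do (applying it directly to a single testing polynomial $T=F^mH$ with $m\geq1$, without squaring, is a harmless simplification).
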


In the complex case we do not have threshold sets, but the natural analogue would be the hypersurface-membership problem:

\getkeytheorem{complexhypersurfacetheorem}
\begin{proof}
    Clearly $V(F)$ is a constructible set of dimension $2n-1$, so we need only verify that Conditions 2 and 3 of \cref{thm:framework-C} are satisfied. First observe that Condition 2 is trivial since $V(F)$ is already Zariski closed, so $V(F) = \overline{V(F)}$.  Next, Condition 3, that $F \nmid R_F$, is also easy to see. If $F \mid R_F$ then $V(F) \subseteq V(R_F)$ which contradicts the existence of $u$.
\end{proof}

\section{New probabilistic lower bounds}
\label{sec:new-lower-bounds}

We now apply the framework developed in the previous section to prove probabilistic communication lower bounds for several natural set-recognition problems over both $\R$ and $\C$.

\subsection{The \texorpdfstring{$\varepsilon$}{ε}-equality set}

For $\varepsilon>0$, define\footnote{The polynomial $\EQ_n$ exactly captures equality of vectors over $\R$, but not over $\C$ when $n\geq2$. Nevertheless, we retain the notation for consistency.}
\begin{align}
  \EQ_n(X,Y)=\sum_{i=1}^n\left(X_i-Y_i\right)^2,
  \qquad
  \EQ_{n,\varepsilon}=\EQ_n-\varepsilon \label{eq:eps-eq}
\end{align}
and
\[
  S(\EQ_{n,\varepsilon})
  =\left\{(X,Y)\in\R^n\times\R^n \mid \EQ_{n,\varepsilon}(X,Y)\leq0\right\}.
\]
Thus, $S(\EQ_{n,\varepsilon})$ consists of pairs of vectors whose squared
Euclidean distance is at most $\varepsilon$. We first prove that the defining polynomial is irreducible.

\begin{lemma}[Irreducibility of $\EQ_{n,\varepsilon}$]
\label{lem:epsilon-equality-irreducible}
For every $n\geq2$ and $\varepsilon>0$, the polynomial $\EQ_{n,\varepsilon}(X,Y)$ defined
in \cref{eq:eps-eq} is irreducible over both $\R$ and $\C$.
\end{lemma}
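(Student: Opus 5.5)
It suffices to prove irreducibility over $\C$: a factorization over $\R$ into non-constant factors is also one over $\C$. Start with the linear change of variables $U_i = X_i - Y_i$, $W_i = Y_i$. This is invertible, so it induces a ring automorphism of $\C[X,Y]$ that preserves irreducibility. It maps $\EQ_{n,\varepsilon}$ to
\[
  q(U) = \sum_{i=1}^n U_i^2 - \varepsilon .
\]
This polynomial does not involve the $W$ variables. Any factorization of $q$ in $\C[U,W]$ into non-constant factors forces each factor to be free of $W$: comparing degrees in the $W$ variables, the degrees of the two factors add to $0$. So it suffices to show that $q$ is irreducible in $\C[U_1,\ldots,U_n]$ for $n\geq 2$.

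Suppose $q = ab$ with $a,b$ non-constant. Since $\deg q = 2$, both $a$ and $b$ are of degree $1$. Pass to homogeneous top-degree parts: the quadratic form $\sum_i U_i^2$ equals $a_1 b_1$, the product of the linear parts. For $n\geq 2$ this is impossible, and the cleanest check is by rank. The form $\sum_i U_i^2$ has associated symmetric matrix $I_n$ of rank $n\geq 2$. A product of two linear forms $\ell_1\ell_2$ has symmetric matrix $\tfrac12(\ell_1\ell_2^T+\ell_2\ell_1^T)$, whose rank is at most $2$, with equality only if $\ell_1,\ell_2$ are independent. For $n\geq 3$ the rank count alone gives the contradiction.

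The main obstacle is the case $n=2$, where $U_1^2+U_2^2 = (U_1+iU_2)(U_1-iU_2)$ does factor over $\C$. Here I would use the constant term. Write $a = \lambda(U_1+iU_2)+\alpha$ and $b = \lambda^{-1}(U_1-iU_2)+\beta$, up to swapping the two factors. Expanding $ab$ and matching with $q$:
\begin{itemize}
\item the linear terms require $\lambda\beta(U_1+iU_2)+\lambda^{-1}\alpha(U_1-iU_2)=0$, which forces $\alpha=\beta=0$;
\item but then the constant term is $\alpha\beta = 0 \neq -\varepsilon$, a contradiction.
\end{itemize}
So $q$ is irreducible for $n=2$ as well. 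For $n\geq 3$ the rank argument already suffices, though the same constant-term comparison would also work there, since the top-degree part does not even factor.
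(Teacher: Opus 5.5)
Your proof is correct, and it differs from the paper's in one genuine respect. After the same substitution $d_i = X_i - Y_i$, the paper \emph{homogenizes}. The polynomial $q$ becomes $\sum_i d_i^2 - \varepsilon z^2$, a quadratic form of rank $n+1 \geq 3$ and hence irreducible over $\C$. A factorization of $q$ would homogenize to a factorization of this form. The nonzero constant $-\varepsilon$ is exactly what raises the rank from $n$ to $n+1$, so one rank argument covers every $n \geq 2$ at once.

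You instead pass to the \emph{leading form} $\sum_i U_i^2$. This discards the constant term and leaves rank only $n$. That settles $n \geq 3$, but not $n = 2$, where $U_1^2 + U_2^2$ splits over $\C$. So you need the separate coefficient comparison, which is correct: unique factorization in $\C[U_1,U_2]$ pins down the linear parts, and linear independence of $U_1 \pm iU_2$ then forces $\alpha = \beta = 0$.

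The paper's route is shorter and uniform. Yours is more hands-on, and it explicitly justifies that irreducibility in $\C[U]$ persists in $\C[U,W]$, a step the paper leaves implicit.

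You can also avoid the case split within your own framework. Write $a = a_1 + \alpha$ and $b = b_1 + \beta$. The conditions $\beta a_1 + \alpha b_1 = 0$ and $\alpha\beta = -\varepsilon \neq 0$ force $\alpha,\beta \neq 0$, so $a_1$ and $b_1$ are proportional. Then $a_1 b_1$ is a scalar multiple of $a_1^2$, a form of rank $1 < n$, which contradicts $a_1 b_1 = \sum_i U_i^2$ for every $n \geq 2$.
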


\begin{proof}
Set $d_i=X_i-Y_i$. Since the map $(X_i, Y_i) \mapsto (X_i - Y_i, Y_i)$ is an invertible linear change of variables, it suffices to prove that
\[
  q(d)=\sum_{i=1}^n d_i^2-\varepsilon
\]
is irreducible over $\C$. Its homogenization is
\[
  Q(d,z)=\sum_{i=1}^n d_i^2-\varepsilon z^2.
\]
Since $Q$ is a quadratic form of rank $n+1\geq3$, it is irreducible over
$\C$ (see \cite{Tverberg1964}). If $q$ had a non-trivial
factorization, homogenizing its factors would give a non-trivial
factorization of $Q$. Hence $q$, and therefore $\EQ_{n,\varepsilon}$, is irreducible over
$\C$, and consequently over $\R$.

\end{proof}

We now prove the communication complexity of the real threshold set $S(\EQ_{n,\varepsilon})$ and of the corresponding complex hypersurface.

\begin{theorem}[note={Communication complexity of $\EQ_{n,\varepsilon}$},
  label=thm:epsilon-equality]
Let $n\geq2$ and $\varepsilon>0$, and set
\[
  \EQ_n(X,Y)=\sum_{i=1}^n\left(X_i-Y_i\right)^2,
  \qquad
  \EQ_{n,\varepsilon}=\EQ_n-\varepsilon,
\]
\[
  S(\EQ_{n,\varepsilon})
  =\left\{(X,Y)\in\R^n\times\R^n \mid \EQ_{n,\varepsilon}(X,Y)\leq0\right\}.
\]
Then,
\[
  \PCC_{\R}\left(S(\EQ_{n,\varepsilon})\right)
  =\DCC_{\R}\left(S(\EQ_{n,\varepsilon})\right)
  =2n.
\]
Moreover,
\[
  \PCC_{\C}\left(V(\EQ_{n,\varepsilon})\right)
  =\DCC_{\C}\left(V(\EQ_{n,\varepsilon})\right)
  =2n.
\]
\end{theorem}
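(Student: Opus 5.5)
The argument has two parts: the upper bound $2n$ is immediate, and the lower bound comes from applying \cref{thm:framework-R-threshold} and \cref{thm:framework-C-hypersurface} to $F=\EQ_{n,\varepsilon}$.

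\textbf{Upper bound.} Alice broadcasts $X_1,\ldots,X_n$ and Bob broadcasts $Y_1,\ldots,Y_n$. All intermediate tests are constants. At the last node the referee tests the sign of $\sum_i(h_i-h_{n+i})^2-\varepsilon$; over $\C$ it tests whether this quantity is zero. This gives $\DCC\leq 2n$ over both fields. Since $\PCC\leq\DCC$, it suffices to prove $\PCC\geq 2n$.

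\textbf{Lower bound.} By \cref{lem:epsilon-equality-irreducible}, $F$ is irreducible over both $\R$ and $\C$ for $n\geq2$. It remains to exhibit a real point $u$ with $F(u)=0$ and $R_F(u)\neq0$; the same point serves over $\C$. Write $d=X-Y$. Then
\[
\nabla_XF=2d,\qquad \nabla_YF=-2d,\qquad \HH_{X|Y}(F)=-2I_n .
\]
Therefore
\[
R_F=\Det\begin{pmatrix}-2I_n & 2d\\ -2d^T & 0\end{pmatrix}.
\]
By the Schur complement with respect to the invertible block $-2I_n$, this equals
\[
(-2)^n\Bigl(0-(-2d^T)(-2I_n)^{-1}(2d)\Bigr)=(-2)^n\bigl(-2\,d^Td\bigr)=(-2)^{n+1}\,\EQ_n .
\]
I would double-check the sign, but only nonvanishing matters.

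On $V(F)$ we have $\EQ_n=\varepsilon$, so $R_F=(-2)^{n+1}\varepsilon\neq0$ at every point of $V(F)$. In particular, take $u=(X,Y)$ with $X-Y=(\sqrt{\varepsilon},0,\ldots,0)$. This real point satisfies $F(u)=0$ and $R_F(u)\neq0$. \Cref{thm:framework-R-threshold} then gives $\PCC_\R(S(\EQ_{n,\varepsilon}))\geq 2n$, and \cref{thm:framework-C-hypersurface} gives $\PCC_\C(V(\EQ_{n,\varepsilon}))\geq2n$.

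\textbf{Expected obstacle.} The only step needing care is the determinant computation. If the Schur-complement bookkeeping is uncertain, I would instead expand along the last row and column: since the Hessian block is a scalar multiple of the identity, this gives $R_F=c\cdot\sum_i d_i^2$ with $c=\pm2^{n+1}$. Either route shows that $R_F$ is a nonzero constant multiple of $\EQ_n$, which cannot vanish on $V(\EQ_n-\varepsilon)$ because $\varepsilon>0$.
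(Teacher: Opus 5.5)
Your proposal is correct and follows the paper's proof essentially verbatim. It uses irreducibility from \cref{lem:epsilon-equality-irreducible}, the same witness point with $X-Y=(\sqrt{\varepsilon},0,\ldots,0)$, the same value $R_F=(-2)^{n+1}d^Td$ (your sign is right, and this equals the paper's $4(-2)^{n-1}d^Td$), and the same broadcast protocol for the upper bound. Your extra remark that $R_F$ is a constant multiple of $\EQ_n$, and hence nonzero on all of $V(\EQ_{n,\varepsilon})$, is a harmless strengthening.
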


\begin{proof}
By \cref{lem:epsilon-equality-irreducible}, the polynomial $\EQ_{n,\varepsilon}$ is
irreducible over both $\R$ and $\C$.
Let $u\in\R^n\times\R^n\subseteq\C^n\times\C^n$ be any
point satisfying
\[
  X-Y=\left(\sqrt{\varepsilon},0,\ldots,0\right).
\]
Then, $\EQ_{n,\varepsilon}(u)=0$. Writing $d=X-Y$, we have
\[
  \nabla_X\EQ_{n,\varepsilon}=2d,
  \qquad
  \nabla_Y\EQ_{n,\varepsilon}=-2d,
  \qquad
  \HH_{X|Y}(\EQ_{n,\varepsilon})=-2I_n,
\]
and hence
\[
  R_{\EQ_{n,\varepsilon}}(u)=4(-2)^{n-1}d^Td
        =4(-2)^{n-1}\varepsilon\neq0.
\]
Therefore, \cref{thm:framework-R-threshold} gives
$\PCC_{\R}\left(S(\EQ_{n,\varepsilon})\right)\geq2n$.
The same point $u$ also satisfies the hypotheses of
\cref{thm:framework-C-hypersurface}, so
\[
  \PCC_{\C}\left(V(\EQ_{n,\varepsilon})\right)\geq2n.
\]

For the matching upper bounds, Alice and Bob broadcast their coordinates,
using $2n$ messages in total, after which the testing polynomial evaluates
$\EQ_{n,\varepsilon}(X,Y)$. Over $\R$, the protocol accepts when this value is nonpositive;
over $\C$, it accepts when the value is zero. Thus, both deterministic
protocols have depth $2n$, completing the proof.
\end{proof}

\subsection{The \texorpdfstring{$\varepsilon$}{ε}-hypercube set}

For $\varepsilon>0$, define
\begin{equation}
  \HC_n(X,Y)
  =\sum_{i=1}^n\left(\left(X_i-Y_i\right)^2-1\right)^2,
  \qquad
  \HC_{n,\varepsilon}=\HC_n-\varepsilon
  \label{eq:eps-hypercube}
\end{equation}
and
\[
  S(\HC_{n,\varepsilon})
  =\left\{(X,Y)\in\R^n\times\R^n \mid \HC_{n,\varepsilon}(X,Y)\leq0\right\}.
\]

The set $S(\EQ_{n,\varepsilon})$ can be viewed as recognizing whether a point $Z\in\R^n$ is close to the origin, when $Z$ is distributed between the two parties and presented as $Z=X-Y$.
We can generalize this to sets recognizing whether a distributed point $Z=X-Y$ is close to some point in a finite structured set.
Taking this structured set to be the vertices $\{-1,1\}^n$ of the Boolean hypercube, we obtain $S(\HC_{n,\varepsilon})$, which we call the $\varepsilon$-hypercube problem.

We first prove that its defining polynomial is
irreducible.

\begin{lemma}[Irreducibility of $\HC_{n,\varepsilon}$]
\label{lem:p_epsilon-irreducible}
For every  $n \geq 3$ and $\varepsilon\geq0$, the polynomial $\HC_{n,\varepsilon}(X,Y)$ defined
in \cref{eq:eps-hypercube} is irreducible over $\C$.
\end{lemma}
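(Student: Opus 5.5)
The plan is to reduce to the top-degree homogeneous part: we show that any factorization of $\HC_{n,\varepsilon}$ induces a factorization of $\sum_i d_i^4$, and that this quartic form is irreducible for $n\geq3$.

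\textbf{Step 1: change of variables.} As in the proof of \cref{lem:epsilon-equality-irreducible}, set $d_i=X_i-Y_i$. The map $(X,Y)\mapsto(d,Y)$ is an invertible linear change of variables, so it preserves irreducibility. After it, the polynomial becomes
\[
  q(d)=\sum_{i=1}^n\left(d_i^2-1\right)^2-\varepsilon\in\C[d]\subseteq\C[d,Y].
\]
Suppose $q=fg$ in $\C[d,Y]$. Comparing degrees in the $Y$-variables shows $\deg_Y f+\deg_Y g=0$. Hence $f,g\in\C[d]$, and it suffices to prove that $q$ is irreducible in $\C[d_1,\ldots,d_n]$.

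\textbf{Step 2: pass to the top-degree part.} The homogeneous component of $q$ of top degree $4$ is the Fermat quartic $\Phi(d)=\sum_{i=1}^n d_i^4$. If $q=fg$ with $\deg f,\deg g\geq1$, then the top-degree components satisfy $f_{\mathrm{top}}\,g_{\mathrm{top}}=\Phi$. Both factors have positive degree, because $\deg f_{\mathrm{top}}=\deg f$ and $\deg g_{\mathrm{top}}=\deg g$. So a nontrivial factorization of $q$ yields a nontrivial factorization of $\Phi$ into homogeneous factors. Note that $\varepsilon$ and the lower-order terms play no role here, which is why the statement holds for every $\varepsilon\geq0$.

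\textbf{Step 3: the main step, irreducibility of $\Phi$ for $n\geq3$.} I would argue via smoothness. Suppose $\Phi=AB$ with $A,B$ homogeneous of positive degree. Since $n\geq3$, the projective space $\mathbb{P}^{n-1}$ has dimension at least $2$. Two projective hypersurfaces $V(A)$ and $V(B)$ in it therefore intersect, so there is a nonzero $p\in\C^n$ with $A(p)=B(p)=0$. The product rule then gives
\[
  \nabla\Phi(p)=A(p)\nabla B(p)+B(p)\nabla A(p)=0.
\]
But $\nabla\Phi=(4d_1^3,\ldots,4d_n^3)$ vanishes only at $d=0$, which is a contradiction.

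This step is where the hypothesis $n\geq3$ is essential. For $n=2$ the argument fails, and indeed $d_1^4+d_2^4$ splits into linear factors over $\C$. The one external fact needed is that two projective hypersurfaces in $\mathbb{P}^{m}$ with $m\geq2$ always meet, which is a standard consequence of the projective dimension theorem.

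Combining Steps 1–3, $q$ has no nontrivial factorization, so $\HC_{n,\varepsilon}$ is irreducible over $\C$.
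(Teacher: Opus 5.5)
Your proof is correct and follows the same route as the paper: change variables to $d_i=X_i-Y_i$, then observe that a nontrivial factorization of $\HC_{n,\varepsilon}$ would induce a nontrivial factorization of its top-degree part $\sum_i d_i^4$. The paper simply cites Tverberg for the irreducibility of this quartic when $n\geq3$, whereas you prove it directly (two hypersurfaces in $\mathbb{P}^{n-1}$ must meet, and $\nabla\Phi=(4d_1^3,\ldots,4d_n^3)$ vanishes only at the origin); you also spell out why irreducibility in $\C[d]$ carries over to $\C[d,Y]$, a step the paper leaves implicit.
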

\begin{remark}
  For $n=2$, it can be proved that $\HC_{2,\varepsilon}$ is irreducible over $\R$ whenever $\varepsilon \neq 1$.
\end{remark}
\begin{proof}[Proof of \Cref{lem:p_epsilon-irreducible}]
  Let $d_i=X_i-Y_i$. Since this is an invertible linear change of variables, it suffices to prove irreducibility in the variables $d_1,\ldots,d_n$.
  The highest-degree homogeneous part of
  \[
    \HC_{n,\varepsilon}=\sum_{i=1}^n(d_i^2-1)^2-\varepsilon
  \]
  is $A=\sum_{i=1}^n d_i^4$, which is irreducible over $\C$ when $n \geq 3$ (see \cite{Tverberg1964}).
  If $\HC_{n,\varepsilon}=gh$ were a non-trivial factorization, then the highest-degree homogeneous parts of $g$ and $h$ would give a non-trivial factorization of $A$, a contradiction.
  Thus $\HC_{n,\varepsilon}$ is irreducible over $\C$, and hence over $\R$.
\end{proof}

Having proved the irreducibility of $\HC_{n,\varepsilon}$, we compute the mixed Hessian needed to
apply the lower-bound framework.

\begin{lemma}[Mixed Hessian of $\HC_{n,\varepsilon}$]
\label{lem:hessian-p}
Let $d_i=X_i-Y_i$. Then
\[
  \HH_{X|Y}(\HC_{n,\varepsilon})
  =-\operatorname{diag}(12d_1^2-4,\ldots,12d_n^2-4).
\]
In particular,
\[
  \Delta_{n,\varepsilon}\coloneqq\Det\HH_{X|Y}(\HC_{n,\varepsilon})
  =(-1)^n\prod_{i=1}^n(12d_i^2-4).
\]
\end{lemma}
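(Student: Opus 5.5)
The plan is a direct computation of second partial derivatives. We exploit the fact that $\HC_{n,\varepsilon}$ is a sum of univariate functions of the differences $d_i=X_i-Y_i$. Write
\[
  \HC_{n,\varepsilon}(X,Y)=\sum_{k=1}^n \phi(d_k)-\varepsilon,
  \qquad \phi(t)=(t^2-1)^2 .
\]
The constant $\varepsilon$ disappears after any differentiation, so it plays no role.

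\textbf{Step 1: the chain rule.} Since $\partial d_k/\partial Y_j=-\delta_{kj}$ and $\partial d_k/\partial X_i=\delta_{ki}$, we get
\[
  \frac{\partial^2 \HC_{n,\varepsilon}}{\partial X_i\,\partial Y_j}
  =\sum_{k}\phi''(d_k)\,\frac{\partial d_k}{\partial X_i}\,\frac{\partial d_k}{\partial Y_j}
  =-\delta_{ij}\,\phi''(d_i).
\]
Here we use the convention of \cref{def:comm-matrix}: differentiate first in $Y_j$, then in $X_i$. The convention is immaterial, since the mixed partials commute for polynomials. Each summand depends on only one pair $(X_k,Y_k)$, so every off-diagonal entry vanishes. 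The mixed Hessian is therefore $-\operatorname{diag}(\phi''(d_1),\ldots,\phi''(d_n))$.

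\textbf{Step 2: the univariate derivative.} We have $\phi(t)=t^4-2t^2+1$, so $\phi'(t)=4t^3-4t$ and $\phi''(t)=12t^2-4$. Substituting gives the claimed matrix $-\operatorname{diag}(12d_1^2-4,\ldots,12d_n^2-4)$.

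\textbf{Step 3: the determinant.} The determinant of a diagonal matrix is the product of its diagonal entries. Pulling out the $n$ signs gives $\Delta_{n,\varepsilon}=(-1)^n\prod_{i=1}^n(12d_i^2-4)$.

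There is no real obstacle. The only points needing care are the sign coming from $\partial d_i/\partial Y_i=-1$ and checking that the off-diagonal entries genuinely vanish. The same computation, with $\phi(t)=t^2$, reproduces the earlier value $\HH_{X|Y}(\EQ_{n,\varepsilon})=-2I_n$, which serves as a sanity check on the sign.
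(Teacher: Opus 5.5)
Your proposal is correct and follows essentially the same route as the paper: the paper also notes that each summand depends on a single $d_i$ (so off-diagonal entries vanish) and computes $\partial^2\HC_{n,\varepsilon}/\partial X_i\partial Y_i=-(12d_i^2-4)$ via $\partial\HC_{n,\varepsilon}/\partial X_i=4d_i(d_i^2-1)$. Your chain-rule formulation with $\phi(t)=(t^2-1)^2$ is the same computation, written slightly more systematically.
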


\begin{proof}
Each summand of $\HC_{n,\varepsilon}$ depends on only one $d_i$, so the mixed
derivative vanishes when $i\ne j$. For $i=j$,
\begin{align}
  \frac{\partial \HC_{n,\varepsilon}}{\partial X_i}
  =4d_i(d_i^2-1),
  \qquad
  \frac{\partial^2\HC_{n,\varepsilon}}{\partial X_i\partial Y_i}
  =-4(3d_i^2-1)=-(12d_i^2-4). \label{eq:gradient-eps-hypercube}
\end{align}
The two formulas follow.
\end{proof}

We now determine the communication complexity of the real threshold set
$S(\HC_{n,\varepsilon})$ and of the corresponding complex hypersurface.
% \pd{The assumption on $n$ is only to argue the irreducibility of $p_{\epsilon}$}
\begin{theorem}[note={Communication complexity of $\HC_{n,\varepsilon}$},
  label=thm:epsilon-hypercube]
Let $\varepsilon > 0$ and $n\geq3$, and set
\[
  \HC_n(X,Y)
  =\sum_{i=1}^n\left(\left(X_i-Y_i\right)^2-1\right)^2,
  \quad
  S(\HC_{n,\varepsilon})
  =\left\{(X,Y)\in\R^n\times\R^n \mid \HC_{n,\varepsilon}(X,Y)\leq0\right\},
\]
where $\HC_{n,\varepsilon}=\HC_n-\varepsilon$.
Then
\[
  \PCC_{\R}\left(S(\HC_{n,\varepsilon})\right)
  =\DCC_{\R}\left(S(\HC_{n,\varepsilon})\right)
  =2n.
\]
Moreover,
\[
  \PCC_{\C}\left(V(\HC_{n,\varepsilon})\right)
  =\DCC_{\C}\left(V(\HC_{n,\varepsilon})\right)
  =2n.
\]
\end{theorem}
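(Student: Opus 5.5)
The plan is to obtain the lower bounds from \cref{thm:framework-R-threshold} over $\R$ and from \cref{thm:framework-C-hypersurface} over $\C$, using a single real witness point $u$. The upper bound of $2n$ in both settings comes from the trivial deterministic protocol: Alice and Bob broadcast all their coordinates, and the final testing polynomial is $\HC_{n,\varepsilon}(X,Y)$. Over $\R$ it accepts on the outcomes ${<}0$ and ${=}0$; over $\C$ it accepts on ${=}0$. Since $\PCC\leq\DCC\leq 2n$, the lower bounds then give equality everywhere.

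First, irreducibility. By \cref{lem:p_epsilon-irreducible}, $\HC_{n,\varepsilon}$ is irreducible over $\C$ for $n\geq3$. It has real coefficients, so any nontrivial real factorization would also be a complex one; hence it is irreducible over $\R$ as well.

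Next, compute $R_{\HC_{n,\varepsilon}}$ using the diagonal structure from \cref{lem:hessian-p}. Write $d_i=X_i-Y_i$ and $h_i=-(12d_i^2-4)$ for the diagonal entries of $\HH_{X|Y}$. By \cref{eq:gradient-eps-hypercube} the gradients are
\[
  (\nabla_X\HC_{n,\varepsilon})_i=x_i\coloneqq 4d_i(d_i^2-1),
  \qquad
  \nabla_Y\HC_{n,\varepsilon}=-\nabla_X\HC_{n,\varepsilon}.
\]
Expanding the bordered determinant of a diagonal matrix gives
\[
  R_{\HC_{n,\varepsilon}}
  =-\sum_{i=1}^n x_i(-x_i)\prod_{j\neq i}h_j
  =\sum_{i=1}^n x_i^2\prod_{j\neq i}h_j .
\]

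Now choose the witness. Let $u$ be any real point with $d_1^2=1+\sqrt{\varepsilon}$ and $d_j=1$ for $j\geq2$.
\begin{itemize}
  \item $\HC_{n,\varepsilon}(u)=(d_1^2-1)^2-\varepsilon=0$.
  \item For $j\geq2$ we have $x_j=0$, so only the $i=1$ term survives in the formula for $R$.
  \item Each $h_j=-8$ for $j\geq2$, and $x_1=4d_1\sqrt{\varepsilon}\neq0$ because $d_1\neq0$ and $\varepsilon>0$.
\end{itemize}
Hence
\[
  R_{\HC_{n,\varepsilon}}(u)=16d_1^2\varepsilon\,(-8)^{n-1}\neq0.
\]
Thus $u\in\R^n\times\R^n\subseteq\C^n\times\C^n$ satisfies the hypotheses of both \cref{thm:framework-R-threshold} and \cref{thm:framework-C-hypersurface}, giving $\PCC\geq 2n$ over each field.

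There is no serious obstacle. The one point needing care is choosing $u$ so that the sum collapses to a single term: making the other coordinates sit exactly at hypercube vertices ($d_j=1$) kills $x_j$ while keeping $h_j\neq0$. The other check is that the bordered-determinant expansion has the correct sign pattern; only nonvanishing matters, so the sign itself is immaterial.
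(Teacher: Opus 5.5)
Your proposal is correct and follows the paper's proof essentially step for step. You use the same witness $u$ (with $d_1^2=1+\sqrt{\varepsilon}$ and $d_j=1$ for $j\geq2$), the same evaluation $R_{\HC_{n,\varepsilon}}(u)=v_1^2(-8)^{n-1}\neq0$ feeding into \cref{thm:framework-R-threshold} and \cref{thm:framework-C-hypersurface}, and the same trivial $2n$ upper bound; the only cosmetic difference is that you first expand the bordered determinant as $\sum_i x_i^2\prod_{j\neq i}h_j$, whereas the paper evaluates it directly at $u$.
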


\begin{proof}
By \cref{lem:p_epsilon-irreducible}, the polynomial $\HC_{n,\varepsilon}$ is irreducible over the relevant field under the stated assumptions.
Choose $u\in\R^n\times\R^n\subseteq\C^n\times\C^n$ such that, with $d=X-Y$,
\[
  d_1^2=1+\sqrt{\varepsilon},
  \qquad
  d_2=\cdots=d_n=1.
\]
Then $\HC_{n,\varepsilon}(u)=0$. Let $v=\nabla_X\HC_{n,\varepsilon}(u)$. From \cref{eq:gradient-eps-hypercube}
\[
  v_i=4d_i(d_i^2-1),
\]
we have $v_1=4d_1\sqrt{\varepsilon}\neq0$ and $v_i=0$ for $i\geq2$.
By \cref{lem:hessian-p}, the last $n-1$ diagonal entries of $\HH_{X|Y}(\HC_{n,\varepsilon})(u)$ are all $-8$.
Moreover, $\nabla_Y\HC_{n,\varepsilon}=-\nabla_X\HC_{n,\varepsilon}$, and hence
\[
  R_{\HC_{n,\varepsilon}}(u)
  =\Det\begin{pmatrix}
    \HH_{X|Y}(\HC_{n,\varepsilon})(u) & v \\
    -v^T & 0
  \end{pmatrix}
  =v_1^2(-8)^{n-1}\neq0.
\]
Applying \cref{thm:framework-R-threshold} gives
\[
  \PCC_{\R}\left(S(\HC_{n,\varepsilon})\right)\geq2n,
\]
while \cref{thm:framework-C-hypersurface} gives
\[
  \PCC_{\C}\left(V(\HC_{n,\varepsilon})\right)\geq2n.
\]

For the upper bound, Alice broadcasts $h_i=X_i$ and Bob broadcasts $h_{n+i}=Y_i$ for every $i\in[n]$.
The final testing polynomial is
\[
  \sum_{i=1}^n\left(\left(h_i-h_{n+i}\right)^2-1\right)^2-\varepsilon.
\]
Over $\R$, the protocol accepts when this value is nonpositive.
Over $\C$, it accepts when the value is zero.
This gives deterministic protocols of depth $2n$ for $S(\HC_{n,\varepsilon})$ and $V(\HC_{n,\varepsilon})$, respectively.
Together with the two lower bounds, this proves both equalities.
\end{proof}

% \nutan{What happened to the $\varepsilon$-SI application? We have worked out the hessian already. Why not add that too?}

\subsection{Inner product}\label{sec:inner-product}

Grigoriev proved a lower bound of $2(n-3)$ for the inner product set-recognition problems~\cite[Proposition~3.1]{Gri2008}.
Using our framework together with the mixed Hessian rank bound of \cref{lem:general-rank} (instead of \cref{lem:full-rank-criterion}), we improve this bound to be $2(n-2)$.

Recall that
\[
  \IP_n(X,Y)=\sum_{i=1}^n X_iY_i.
\]
Over $\R$, we consider the threshold set $S(\IP_n)$, while over $\C$, we consider the zero set $V(\IP_n)$.

\begin{theorem}[Probabilistic lower bound for inner product]
\label{thm:inner-product-set-recognition}
For every $n\geq3$,
\[
  \PCC_{\R}\left(S(\IP_n)\right)\geq2(n-2)
  \qquad\text{and}\qquad
  \PCC_{\C}\left(V(\IP_n)\right)\geq2(n-2).
\]
Consequently, the same lower bound holds for
$\DCC_{\R}\left(S(\IP_n)\right)$ and
$\DCC_{\C}\left(V(\IP_n)\right)$.
\end{theorem}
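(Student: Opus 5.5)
The plan is to apply the mixed-Hessian versions of the framework: \cref{thm:framework-R-mixed-hessian} over $\R$ with $S=S(\IP_n)$, and \cref{thm:framework-C-mixed-hessian} over $\C$ with $S=V(\IP_n)$. In both cases I take $F=\IP_n$ and $r=n$. Each theorem then yields a lower bound of $2(r-2)=2(n-2)$. The deterministic statement follows at once, since every deterministic protocol is a probabilistic one, so $\DCC\geq\PCC$.

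The hypotheses to check are as follows.

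\textbf{Irreducibility.} For $n\geq2$, $\IP_n$ is a quadratic form of rank $2n\geq3$. It is therefore irreducible over $\C$, and hence over $\R$, by the same citation \cite{Tverberg1964} used in \cref{lem:epsilon-equality-irreducible}.

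\textbf{Minor condition.} We have $\HH_{X|Y}(\IP_n)=I_n$. Its $n\times n$ minor equals $1$, which is not divisible by $\IP_n$.

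\textbf{Complex set.} $V(\IP_n)$ is Zariski closed, so condition 2 of \cref{thm:framework-C-mixed-hessian} holds trivially. It is a hypersurface of an irreducible polynomial, so it is constructible of dimension $2n-1$ by \cref{prop:real_hypersurface}.

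The only step needing real work is condition 2 in the real case, namely $\dim(\partial S(\IP_n)\cap V(\IP_n))=2n-1$. I would follow the proof of \cref{thm:framework-R-threshold}, which never used $R_F$ at this stage; it used only a point $u\in V(F)$ with $\nabla F(u)\neq0$. Take $u=(e_1,e_2)$. Then $\IP_n(u)=0$ and $\nabla_X\IP_n(u)=Y=e_2\neq0$. By \cref{prop:real_hypersurface} together with the Implicit Function Theorem, $V(\IP_n)\cap B_u$ has dimension $2n-1$ for a small ball $B_u$ on which $\partial\IP_n/\partial X_2=Y_2$ stays positive. On that ball, strict monotonicity in $X_2$ places every point of $V(\IP_n)\cap B_u$ in $\partial S(\IP_n)$. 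This gives the required dimension.

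I expect no real obstacle. The one subtlety is the hypothesis $n\geq3$: it is needed only so that $2(n-2)$ is positive and the statement is non-vacuous, while the irreducibility argument already works for $n\geq2$. I would also check that \cref{lem:general-rank}, which enters through the cited theorems with $r=n$, is applied to a testing polynomial $G=\IP_n^m H$. The bound $\rank\HH(G)\geq n-2$ then combines with \cref{lem:comm-matrix} to give $c(G)\geq2(n-2)$, exactly as in the proofs of \cref{thm:framework-R} and \cref{thm:framework-C}.
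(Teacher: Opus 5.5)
Your proposal is correct and follows the paper's proof: both apply \cref{thm:framework-R-mixed-hessian} and \cref{thm:framework-C-mixed-hessian} with $F=\IP_n$, $r=n$, and $\HH_{X|Y}(\IP_n)=I_n$. The only difference is in the real boundary condition: the paper proves the global identity $\partial S(\IP_n)=V(\IP_n)$ by explicit bilinear perturbations, whereas you reuse the local argument from the proof of \cref{thm:framework-R-threshold} at the nonsingular point $(e_1,e_2)$ (you also justify irreducibility via the rank-$2n$ quadratic form, which the paper only asserts), and either route suffices.
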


\begin{proof}
The polynomial $\IP_n$ is irreducible over both $\R$ and $\C$.
Over $\R$, consider the point
$u=((1,0,\ldots,0),0)\in V(\IP_n)$.
Since $\nabla \IP_n(u)=(0,(1,0,\ldots,0))\ne0$,
\cref{prop:real_hypersurface} gives $\dim V(\IP_n)=2n-1$.
\begin{claim}
$\partial S(\IP_n)=V(\IP_n)$.
\end{claim}
\begin{claimproof}
First, let $(X,Y)\notin V(\IP_n)$. Then $\IP_n(X,Y)$ is either strictly positive, or strictly negative. By continuity, the same strict inequality holds throughout some open ball around $(X,Y)$. This ball is therefore either contained in $S(\IP_n)$, or disjoint from it, so $(X,Y)\notin\partial S(\IP_n)$. Hence
\[
  \partial S(\IP_n)\subseteq V(\IP_n).
\]

For the reverse inclusion, fix $(X,Y)\in V(\IP_n)$ and an arbitrary open ball $B_r(X,Y)$. We construct within this ball one point where $\IP_n$ is negative and another where it is positive.

Suppose first that $X\ne0$. For every $t>0$,
\[
  \IP_n(X,Y+tX)=t\sum_{i=1}^n X_i^2>0,
  \qquad
  \IP_n(X,Y-tX)=-t\sum_{i=1}^n X_i^2<0,
\]
where we used the bilinearity of $\IP_n$ and the fact that $\IP_n(X,Y)=0$. Choosing $t<r/\lVert X\rVert$ places both points in $B_r(X,Y)$.

If $X=0$ and $Y\ne0$, the same argument applies after perturbing the first input:
\[
  \IP_n(tY,Y)=t\sum_{i=1}^n Y_i^2>0,
  \qquad
  \IP_n(-tY,Y)=-t\sum_{i=1}^n Y_i^2<0.
\]
Again, both points lie in $B_r(X,Y)$ for sufficiently small $t>0$.

Finally, if $X=Y=0$, then for sufficiently small $t>0$ the points
$((t,0,\ldots,0),(t,0,\ldots,0))$ and
$((t,0,\ldots,0),(-t,0,\ldots,0))$ lie in $B_r(0,0)$, while
\[
  \IP_n((t,0,\ldots,0),(t,0,\ldots,0))=t^2>0,
  \qquad
  \IP_n((t,0,\ldots,0),(-t,0,\ldots,0))=-t^2<0.
\]
Thus every open ball centered at $(X,Y)$ meets both $S(\IP_n)$ and its complement. Therefore $(X,Y)\in\partial S(\IP_n)$, proving the reverse inclusion.
\end{claimproof}
We have
\[
  \HH_{X|Y}(\IP_n)=I_n.
\]
Hence $\Det\HH_{X|Y}(\IP_n)=1$, which is not divisible by $\IP_n$.
Together with the claim and $\dim V(\IP_n)=2n-1$, this verifies the hypotheses of \cref{thm:framework-R-mixed-hessian} with $r=n$. Therefore,
\[
  \PCC_{\R}\left(S(\IP_n)\right)\geq2(n-2).
\]

Over $\C$, \cref{prop:real_hypersurface} gives $\dim V(\IP_n)=2n-1$.
Moreover, $V(\IP_n)$ is Zariski closed and hence equals its Zariski closure.
Thus, the hypotheses of \cref{thm:framework-C-mixed-hessian} are satisfied with $r=n$, and
\[
  \PCC_{\C}\left(V(\IP_n)\right)\geq2(n-2).
\]
\end{proof}

We now return to the bilinear-form threshold and zero-set problems introduced in \cref{sec:bilinear-forms}.

\begin{corollary}[Communication bounds for bilinear forms]
\label{thm:bilinear-form-bounds}
Let $A\in\R^{n\times n}$ and $r=\rank_{\R}(A)\geq1$.
Then
\[
  2r-4
  \leq \PCC_\R\left(S(\BF_{n,A})\right)
  \leq \DCC_\R\left(S(\BF_{n,A})\right)
  \leq 2r.
\]
For the complex case, let $A\in\C^{n\times n}$ and $r=\rank_{\C}(A)\geq1$.
Then
\[
  2r-4
  \leq \PCC_\C\left(V(\BF_{n,A})\right)
  \leq \DCC_\C\left(V(\BF_{n,A})\right)
  \leq 2r.
\]
\end{corollary}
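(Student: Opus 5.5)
The corollary follows by combining three facts already established: the reduction to inner product, the lower bound for inner product, and the trivial protocol. The middle inequality, $\PCC\leq\DCC$, holds because a deterministic protocol is a probabilistic protocol supported on a single tree with probability $1$. For the lower bound I would invoke \cref{obs:bilinear-inner-product-equivalence}, which gives $\PCC_\R(S(\BF_{n,A}))=\PCC_\R(S(\IP_r))$ and $\PCC_\C(V(\BF_{n,A}))=\PCC_\C(V(\IP_r))$ whenever $r=\rank(A)\geq1$. Then \cref{thm:inner-product-set-recognition} gives $\PCC\geq 2(r-2)=2r-4$ for both fields.

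That theorem is stated only for $r\geq3$, so the cases $r\in\{1,2\}$ need separate handling. In these cases $2r-4\leq0$, and the lower bound is vacuous because every complexity is nonnegative. This boundary check is the only step needing care.

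For the upper bound I would again use \cref{obs:bilinear-inner-product-equivalence}, now in the form $\DCC(S(\BF_{n,A}))=\DCC(S(\IP_r))$, and the same over $\C$. It then suffices to give a deterministic protocol of depth $2r$ for the inner product instance on $\F^r\times\F^r$. Alice broadcasts $\widetilde X_1,\ldots,\widetilde X_r$ and Bob broadcasts $\widetilde Y_1,\ldots,\widetilde Y_r$, with constant nonzero tests at the intermediate nodes. The final test is $\sum_{i=1}^r h_ih_{r+i}$. Over $\R$ the protocol accepts on the branches $<0$ and $=0$; over $\C$ it accepts on the branch $=0$.

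Alternatively, the protocol can be run directly on $(X,Y)$. Alice sends the $r$ linear forms given by the first $r$ coordinates of $P^{-1}X$, and Bob sends the first $r$ coordinates of $Q^{-1}Y$. These are one-party polynomials, so the protocol is valid. Its cost is $2r$, which gives $\DCC\leq 2r$ and completes both chains of inequalities.
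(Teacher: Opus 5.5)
Your proposal is correct and follows essentially the same route as the paper. You transfer the inner-product bound of \cref{thm:inner-product-set-recognition} through \cref{obs:bilinear-inner-product-equivalence} when $r\geq3$, note that the bound is vacuous when $r\leq2$, and get the $2r$ upper bound from broadcasting the $r$ transformed coordinates, while also spelling out the step $\PCC\leq\DCC$ (a deterministic protocol is a probabilistic one with a single tree of weight $1$) that the paper leaves implicit.
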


\begin{proof}
The upper bounds follow from \cref{obs:bilinear-inner-product-equivalence} and the protocol in which Alice and Bob each broadcast their $r$ coordinates.
For $r\leq2$, the lower bounds are immediate.
For $r\geq3$, \cref{thm:inner-product-set-recognition} gives the lower bound $2(r-2)$ in both cases, and the result follows from \cref{obs:bilinear-inner-product-equivalence}.
\end{proof}

\subsection{The \texorpdfstring{$\varepsilon$}{ε}-SI set}

The set-intersection problem asks whether two input vectors, viewed as sets, share a common element; that is, whether $X_i=Y_j$ for some
$i,j\in[n]$.
Grigoriev proved probabilistic communication lower bounds of
$n$ over $\R$ and $n/2$ over $\C$ for this problem
(see \cite[Corollary~5.5]{Gri2008}).
We consider the following perturbed variant.
Let
\begin{equation}
  \SI_n(X,Y)=\prod_{i=1}^n\prod_{j=1}^n(X_i-Y_j),
  \qquad
  \SI_{n,\varepsilon}(X,Y)=\SI_n(X,Y)-\varepsilon,
  \label{eq:epsilon-si}
\end{equation}
where $\varepsilon>0$, and define
\[
  S(\SI_{n,\varepsilon})
  =\setdef{(X,Y)\in\R^n\times\R^n}{\SI_{n,\varepsilon}(X,Y)\leq0}.
\]
We first prove the irreducibility of the  $\epsilon$-SI polynomial \footnote{We will show that $\SI_{n,\varepsilon}$ is irreducible over all $\epsilon \neq 0$. However, we will use it when $\varepsilon > 0$.}.

\begin{lemma}
\label{lem:epsilon-si-irreducible}
For every $n\geq2$ and $\varepsilon \in \R \backslash \{0\}$, the polynomial $\SI_{n,\varepsilon}$ defined in \cref{eq:epsilon-si} is irreducible over both $\R$ and $\C$.
\end{lemma}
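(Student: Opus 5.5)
The plan is to show directly that any factorization $\SI_{n,\varepsilon}=gh$ over $\C$ has a constant factor, by restricting $g$ and $h$ to the hyperplanes $V(X_i-Y_j)$. Irreducibility over $\R$ then follows: a non-trivial real factorization would also be a non-trivial complex one.

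\textbf{Step 1 (restriction to a hyperplane).} Suppose $\SI_{n,\varepsilon}=gh$ with $g,h\in\C[X,Y]$. Fix $i,j\in[n]$ and substitute $X_i=Y_j$. The factor $(X_i-Y_j)$ of $\SI_n$ vanishes, so $\SI_n$ restricts to $0$. Hence
\[
  g|_{X_i=Y_j}\cdot h|_{X_i=Y_j}=-\varepsilon .
\]
This is an identity in the polynomial ring $\C[X,Y]/(X_i-Y_j)$, which is a polynomial ring in the remaining $2n-1$ variables. Since $\varepsilon\neq0$, both restrictions are units there, hence nonzero constants. So $g\equiv c_{ij}\pmod{X_i-Y_j}$ for some constant $c_{ij}\in\C^\times$.

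\textbf{Step 2 (the constants agree).} For any two pairs $(i,j)$ and $(k,l)$, the intersection $V(X_i-Y_j)\cap V(X_k-Y_l)$ is nonempty. It is a linear subspace of codimension at most $2$ (for $n\ge2$ this is clear even when indices coincide). At any point of this intersection, $g$ takes both values $c_{ij}$ and $c_{kl}$, so $c_{ij}=c_{kl}$. Let $c$ denote the common value.

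\textbf{Step 3 (divisibility and degree count).} Now $g-c$ vanishes on every hyperplane $V(X_i-Y_j)$, so each linear form $X_i-Y_j$ divides $g-c$. These $n^2$ linear forms are pairwise non-associate primes in the UFD $\C[X,Y]$. Therefore their product $\SI_n$ divides $g-c$, giving
\[
  g=c+\SI_n\, g' \quad\text{for some } g'\in\C[X,Y].
\]
Hence either $g$ is constant or $\deg g\ge n^2$. The same argument gives either $h$ constant or $\deg h\ge n^2$. Since $\deg g+\deg h=\deg\SI_{n,\varepsilon}=n^2$, both cannot have degree at least $n^2$. So one of $g,h$ is constant, and the factorization is trivial. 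This proves irreducibility over $\C$, and hence over $\R$.

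\textbf{Main obstacle.} The step needing the most care is Step 1. One must identify $\C[X,Y]/(X_i-Y_j)$ with a polynomial ring, so that its only units are the nonzero constants. One must also use $\varepsilon\neq0$ there; indeed, for $\varepsilon=0$ the polynomial $\SI_n$ is visibly reducible. The rest is bookkeeping with unique factorization and degrees. Note also that the hypothesis $n\ge2$ is not really used, since for $n=1$ the polynomial $X_1-Y_1-\varepsilon$ is linear and trivially irreducible.
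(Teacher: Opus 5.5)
Your proof is correct, and it takes a genuinely different route from the paper's.

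The paper's argument runs as follows:
\begin{itemize}
\item It homogenizes to $\SI_n(X,Y)-\varepsilon Z^{n^2}$ and views this as a binomial in $Z$ over $\C(X,Y)$.
\item It notes that $\SI_n/\varepsilon$ is not a $q$th power for any prime $q\mid n^2$, because $\SI_n$ is a product of pairwise non-associate linear forms.
\item It invokes the binomial (Capelli) irreducibility criterion, lifts to $\C[X,Y,Z]$ via Gauss's lemma, and dehomogenizes.
\end{itemize}

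You instead reduce modulo each factor $X_i-Y_j$. There the factorization collapses to a factorization of the nonzero constant $-\varepsilon$ in a polynomial ring, so each factor is a constant on each hyperplane. Since all the hyperplanes contain the origin, these constants all equal $g(0)$. Unique factorization then gives $\SI_n\mid g-g(0)$, and the degree count $\deg g+\deg h=n^2$ finishes.

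What your route buys:
\begin{itemize}
\item It is entirely elementary and works verbatim over any field and for any nonzero constant.
\item It shows exactly where $\varepsilon\neq0$ enters.
\item It avoids the fine print of Capelli's theorem. When $4\mid d$, i.e.\ for even $n$, the criterion carries the extra clause $a\notin-4K^4$. The paper leaves this implicit; it is harmless over $\C$ since $-4=(1+i)^4$.
\end{itemize}

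What the paper's route buys: it uses far less about the shape of $\SI_n$. It proves $f-\varepsilon$ irreducible for any homogeneous $f$ of degree $d$ that is not a $q$th power for primes $q\mid d$. Your argument instead needs $f$ to split into distinct linear factors whose zero sets meet pairwise. Without that, the constants on different hyperplanes need not agree; for example, $x(x-1)-\varepsilon$ factors over $\C$.
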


\begin{proof}
It suffices to prove irreducibility over $\C$.
Set $d=n^2$.
The homogenization of $\SI_{n,\varepsilon}$ is
\[
  \SI_n(X,Y)-\varepsilon Z^d.
\]
The polynomials $X_i-Y_j$ are irreducible, and no two are nonzero constant multiples of one another. Hence, every irreducible factor of $\SI_n/\varepsilon$ occurs with multiplicity one, whereas its multiplicity in a $q$th power would be divisible by $q$; thus, $\SI_n/\varepsilon$ is not a $q$th power in $\C(X,Y)$ for any prime $q\mid d$.
Therefore, by the binomial irreducibility criterion, the homogenization is irreducible when regarded as a polynomial in $Z$ over $\C(X,Y)$.
By Gauss's lemma~\cite[Section~6.2]{GathenGerhard2013}, the homogenization is therefore irreducible in $\C[X,Y,Z]$.
If $\SI_{n,\varepsilon}$ admitted a non-trivial factorization over $\C$, homogenizing its factors would give a non-trivial factorization of $\SI_n(X,Y)-\varepsilon Z^d$.
Hence $\SI_{n,\varepsilon}$ is irreducible over $\C$, and therefore also over $\R$.
\end{proof}

For the mixed Hessian rank we use the following identity of Borchardt.

\begin{proposition}[Borchardt's identity \protect{\cite[Equation~2]{borchardt1857}}]
\label{prop:borchardt}
Let $C$ and $C^{(2)}$ be the $n\times n$ matrices defined by
\[
  C_{ij}=\frac{1}{X_i-Y_j},
  \qquad
  C^{(2)}_{ij}=\frac{1}{(X_i-Y_j)^2}.
\]
Then
\[
  \Det C^{(2)}
  \quad = \quad \Det(C)\Perm(C)
  \quad =\quad \frac{\displaystyle
      \prod_{1\leq i<j\leq n}(X_j-X_i)(Y_i-Y_j)}
    {\displaystyle
      \prod_{i=1}^n\prod_{j=1}^n(X_i-Y_j)}
    \Perm(C),
\]
where $\Perm(C)$ denotes the permanent of $C$.
Moreover, $\Perm(C)$ is a nonzero rational function.
Consequently, $\Det C^{(2)}\neq0$, and $C^{(2)}$ has rank $n$ over
$\F(X,Y)$ for $\F\in\{\R,\C\}$.
\end{proposition}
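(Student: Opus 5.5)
The plan has three parts: Borchardt's identity, the nonvanishing of $\Perm(C)$, and the rank conclusion.

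\textbf{Borchardt's identity.} I would first obtain the factorization of $\Det C$ from the Cauchy determinant formula,
\[
  \Det C=\frac{\prod_{i<j}(X_j-X_i)(Y_i-Y_j)}{\prod_{i,j}(X_i-Y_j)}.
\]
This is the standard argument. Clear denominators by multiplying by $\prod_{i,j}(X_i-Y_j)$. The result is a polynomial of total degree $n^2-n$. It vanishes whenever $X_i=X_j$ or $Y_i=Y_j$, since two rows or two columns then coincide. Hence it is divisible by the Vandermonde products, which already have degree $n^2-n$, so it equals them up to a constant. The constant ($\pm1$, fixing the orientation of the factors) is determined by comparing one monomial, e.g.\ by induction on $n$ via the residue at $X_n=Y_n$.

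For the identity $\Det C^{(2)}=\Det C\cdot\Perm C$ itself, I would use Borchardt's route, which is the main obstacle. One option is to take it as cited. To actually prove it, I would clear denominators and show that
\[
  \Phi:=\prod_{i,j}(X_i-Y_j)^2\bigl(\Det C^{(2)}-\Det C\,\Perm C\bigr)
\]
is a polynomial. I would then show it vanishes identically by induction on $n$. Both sides have at most double poles along $X_n=Y_n$. Take the Laurent expansion in $X_n$ around $Y_n$. The leading coefficient ($(X_n-Y_n)^{-2}$ term) on each side reduces to the $(n-1)$ identity times the factor coming from the remaining entries. The $(X_n-Y_n)^{-1}$ coefficients must also match, and this is the delicate computation. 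After that, the difference has no pole along any $X_i=Y_j$, is antisymmetric-compatible, and has negative degree, hence is zero. Checking the simple-pole coefficient is where I expect the difficulty; an alternative is Borchardt's original derivation via the expansion of $\Det(C\circ C)$ using the Cauchy formula on a doubled set of variables.

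\textbf{Nonvanishing of $\Perm(C)$.} This is easier. Substitute $Y_j=0$ and $X_i=t_i$ with real $t_i>0$. Every entry of $C$ then becomes a positive real number, so the permanent is a sum of positive terms and is strictly positive. Hence $\Perm(C)$ is not the zero rational function.

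\textbf{Rank conclusion.} The Cauchy product above is a nonzero rational function, and $\Perm(C)$ is nonzero, so their product $\Det C^{(2)}$ is nonzero in $\F(X,Y)$. Therefore $C^{(2)}$ has full rank $n$.
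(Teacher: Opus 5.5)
The paper does not prove this proposition. It quotes the identity from Borchardt and asserts $\Perm(C)\neq0$ without argument. So if you take the identity as cited, your proposal is complete and coincides with the paper. Your Cauchy-determinant derivation is standard. Your evaluation at $Y_j=0$, $X_i=t_i>0$, where $\Perm(C)=n!/\prod_i t_i>0$, correctly supplies the step the paper leaves implicit.

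For the rank conclusion alone, which is all the paper uses later, Borchardt is not needed. Expanding along the last row shows that the coefficient of $(X_n-Y_n)^{-2}$ in $\Det C^{(2)}$ is the corresponding $(n-1)\times(n-1)$ determinant, so $\Det C^{(2)}\neq0$ by induction.

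Your self-contained route does not yet prove the identity, and the gap is exactly the step you flag. Let $\epsilon=X_n-Y_n$ and let $C_{[n-1]}$ be the leading $(n-1)\times(n-1)$ submatrix of $C$. Put $a=\Det C_{[n-1]}$ and $b=\Perm C_{[n-1]}$.

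In row $n$ of $C^{(2)}$ only the $(n,n)$ entry is singular at $\epsilon=0$, and its cofactor does not involve $X_n$. Hence $\Det C^{(2)}$ has \emph{no} $\epsilon^{-1}$ term. Write $\Det C=a\epsilon^{-1}+r_0+O(\epsilon)$ and $\Perm C=b\epsilon^{-1}+s_0+O(\epsilon)$. Then the $\epsilon^{-1}$ coefficient of $\Det C\,\Perm C$ is $as_0+br_0$.

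The induction hypothesis only matches the $\epsilon^{-2}$ coefficients. The vanishing of $as_0+br_0$ is a new identity, not a consequence of the $(n-1)$ case. Without it, your pole-removal and negative-degree argument cannot start.

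It can be closed as follows. By Cauchy's formula, $\epsilon\Det C$ is $a$ times an explicit product equal to $1$ at $\epsilon=0$. Logarithmic differentiation then gives $r_0=a\bigl(\sum_{i<n}\frac{1}{Y_n-X_i}-\sum_{j<n}\frac{1}{Y_n-Y_j}\bigr)$. Also $s_0=\sum_{j<n}p_j(z)/(z-Y_j)$ with $z=Y_n$, where $p_j(z)$ denotes $\Perm C_{[n-1]}$ with $Y_j$ replaced by $z$. So it suffices to show
\[
  \sum_{j<n}\frac{p_j(z)}{z-Y_j}
  =b\Bigl(\sum_{j<n}\frac{1}{z-Y_j}-\sum_{i<n}\frac{1}{z-X_i}\Bigr).
\]
As rational functions of $z$, both sides have only simple poles, both are $O(z^{-2})$ as $z\to\infty$, and their residues agree:
\begin{itemize}
\item At each $z=Y_j$, both residues equal $b$.
\item At each $z=X_i$, both residues equal $-b$. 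On the left this is the row expansion $b=\sum_{j<n}\Perm(M_{ij})/(X_i-Y_j)$, where $M_{ij}$ is $C_{[n-1]}$ with row $i$ and column $j$ deleted.
\end{itemize}
Hence the two sides are equal. With this lemma added, the rest of your induction goes through: the transfer to all $X_i=Y_j$ by permuting rows and columns, and the negative-degree vanishing.
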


\begin{lemma}[Mixed Hessian of the SI polynomial]
\label{lem:epsilon-si-hessian}
For every $n\geq2$ and $\varepsilon\neq0$, there exists a nonzero $(n-1)\times(n-1)$ minor
$M$ of $\HH_{X|Y}(\SI_{n,\varepsilon})$ such that
\[
  \SI_{n,\varepsilon}\nmid M.
\]
\end{lemma}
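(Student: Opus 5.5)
The plan is to compute $\HH_{X|Y}(\SI_{n,\varepsilon})$ explicitly, obtain a nonzero $(n-1)\times(n-1)$ minor from \cref{prop:borchardt} and a rank-one perturbation argument, and then exclude divisibility by $\SI_{n,\varepsilon}$ using homogeneity.

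\textbf{Step 1: the mixed Hessian.} Since $\varepsilon$ is a constant, $\HH_{X|Y}(\SI_{n,\varepsilon})=\HH_{X|Y}(\SI_n)$. Write $P=\SI_n$ and use logarithmic differentiation, with
\[
  \alpha_i=\sum_{j=1}^n\frac{1}{X_i-Y_j}
  \qquad\text{and}\qquad
  \beta_k=\sum_{i=1}^n\frac{1}{X_i-Y_k}.
\]
This gives $\partial P/\partial X_i=P\alpha_i$ and $\partial P/\partial Y_k=-P\beta_k$. Differentiating $P\alpha_i$ with respect to $Y_k$, and using $\partial\alpha_i/\partial Y_k=(X_i-Y_k)^{-2}$, yields
\[
  \HH_{X|Y}(P)=P\left(C^{(2)}-\alpha\beta^T\right),
\]
where $C^{(2)}$ is the matrix of \cref{prop:borchardt}.

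\textbf{Step 2: a nonzero minor.} By \cref{prop:borchardt}, $C^{(2)}$ has rank $n$ over $\F(X,Y)$. Subtracting the rank-one matrix $\alpha\beta^T$ lowers the rank by at most one, so $C^{(2)}-\alpha\beta^T$ has rank at least $n-1$. Multiplying by the nonzero scalar $P$ does not change the rank. Hence $\HH_{X|Y}(\SI_{n,\varepsilon})$ has rank at least $n-1$, and some $(n-1)\times(n-1)$ minor $M$ is a nonzero polynomial.

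\textbf{Step 3: non-divisibility via homogeneity.} Each entry of $\HH_{X|Y}(P)$ is either zero or a homogeneous polynomial of degree $n^2-2$, because $P$ is homogeneous of degree $n^2$. Therefore $M$ is a nonzero homogeneous polynomial of degree $D=(n-1)(n^2-2)$.

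Suppose, for contradiction, that $M=(P-\varepsilon)Q$ for some polynomial $Q$. Then $Q\neq0$. Let $Q_{\mathrm{top}}$ and $Q_{\mathrm{low}}$ be its homogeneous components of highest and lowest degree. Since $P$ is homogeneous of degree $n^2\geq1$ and $\varepsilon\neq0$, the top-degree component of $(P-\varepsilon)Q$ is $PQ_{\mathrm{top}}$, of degree $n^2+\deg Q_{\mathrm{top}}$. Its lowest-degree component is $-\varepsilon Q_{\mathrm{low}}$, of degree $\deg Q_{\mathrm{low}}$. Both components are nonzero.

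Because $M$ is homogeneous, these two degrees must both equal $D$. This gives $n^2+\deg Q_{\mathrm{top}}=\deg Q_{\mathrm{low}}\leq\deg Q_{\mathrm{top}}$, which is impossible. Hence $\SI_{n,\varepsilon}\nmid M$.

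\textbf{Main obstacle.} The only substantive input is the full rank of $C^{(2)}$, and \cref{prop:borchardt} supplies it. The remaining care is in the derivative computation of Step 1, in particular the sign coming from $\partial P/\partial Y_k=-P\beta_k$. That sign does not affect the rank argument in any case.
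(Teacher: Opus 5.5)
Your proposal is correct and follows the paper's proof essentially step for step. You use logarithmic differentiation to write $\HH_{X|Y}(\SI_n)=\SI_n\,(C^{(2)}-uv^T)$, then \cref{prop:borchardt} with a rank-one perturbation to obtain a nonzero $(n-1)\times(n-1)$ minor, and then homogeneity to rule out divisibility by $\SI_{n,\varepsilon}$. Your top/bottom homogeneous-component argument in Step 3 just makes explicit the paper's one-line remark that a polynomial with a nonzero constant term cannot divide a nonzero homogeneous polynomial.
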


\begin{proof}
Since $\SI_{n,\varepsilon}=\SI_n-\varepsilon$, it has the same mixed Hessian as $\SI_n$.
We perform the following calculation in the rational function field $\F(X,Y)$. Clearing denominators then gives the corresponding polynomial identity.
We compute $\HH_{X|Y}(\SI_n)$ using logarithmic differentiation.
The first derivatives are:
\[
    \frac{\partial \SI_n}{\partial X_i} = \SI_n \sum_{b=1}^n \frac{1}{X_i - Y_b}, \quad \quad \frac{\partial \SI_n}{\partial Y_j} = -\SI_n \sum_{a=1}^n \frac{1}{X_a - Y_j}
\]
Taking the mixed second derivative and applying the product rule gives
\begin{align*}
    \frac{\partial^2 \SI_n}{\partial X_i \partial Y_j} &= \frac{\partial}{\partial X_i} \left( -\SI_n \sum_{a=1}^n \frac{1}{X_a - Y_j} \right) \\
    &= - \frac{\partial \SI_n}{\partial X_i} \left( \sum_{a=1}^n \frac{1}{X_a - Y_j} \right) - \SI_n \cdot \frac{\partial}{\partial X_i} \left( \sum_{a=1}^n \frac{1}{X_a - Y_j} \right) \\
    &= - \SI_n \left( \sum_{b=1}^n \frac{1}{X_i - Y_b} \right) \left( \sum_{a=1}^n \frac{1}{X_a - Y_j} \right) + \SI_n \frac{1}{(X_i - Y_j)^2}
\end{align*}

We can express this system over all $i, j \in [n]$ as a matrix equation.
Let $C^{(2)}$ be the $n \times n$ matrix with entries $C^{(2)}_{i,j} = \frac{1}{(X_i - Y_j)^2}$.
Let $u$ and $v$ be column vectors defined by $u_i = \sum_{b=1}^n \frac{1}{X_i - Y_b}$ and $v_j = \sum_{a=1}^n \frac{1}{X_a - Y_j}$.
We can now write the mixed Hessian as
\[
    \HH_{X|Y}(\SI_n) = \SI_n \cdot \left( C^{(2)} - u v^T \right).
\]
Thus $\rank\HH_{X|Y}(\SI_n) = \rank(C^{(2)} - u v^T)$.

By \cref{prop:borchardt}, $C^{(2)}$ has rank $n$, while $uv^T$ has rank at most one.
Hence
\[
  \rank\HH_{X|Y}(\SI_n)\geq n-1.
\]
Therefore $\HH_{X|Y}(\SI_{n,\varepsilon})$ has a nonzero
$(n-1)\times(n-1)$ minor $M$.
Every entry of this mixed Hessian is homogeneous of degree $n^2-2$, so
$M$ is homogeneous.
Since $\SI_{n,\varepsilon}$ is not homogeneous due to its nonzero constant
term, it cannot divide the nonzero homogeneous polynomial $M$.
\end{proof}

We obtain the probabilistic lower bound by the same framework used for the inner product set.

\begin{theorem}[Probabilistic lower bound for the $\varepsilon$-SI set]
\label{thm:epsilon-si-lower-bound}
For every $n\geq2$ and $\varepsilon>0$\footnote{For $n\in\{2,3\}$, these lower bounds are vacuous.},
\[
  \PCC_{\R}\left(S(\SI_{n,\varepsilon})\right)\geq2(n-3)
  \qquad\text{and}\qquad
  \PCC_{\C}\left(V(\SI_{n,\varepsilon})\right)\geq2(n-3).
\]
Consequently, the same lower bounds hold for the corresponding deterministic communication complexities.
\end{theorem}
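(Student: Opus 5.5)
The plan is to apply the mixed Hessian frameworks, \cref{thm:framework-R-mixed-hessian} over $\R$ and \cref{thm:framework-C-mixed-hessian} over $\C$, with $F=\SI_{n,\varepsilon}$ and $r=n-1$. This gives $2((n-1)-2)=2(n-3)$. Condition 1 (irreducibility) is \cref{lem:epsilon-si-irreducible}. Condition 3 (some $(n-1)\times(n-1)$ minor of $\HH_{X|Y}(F)$ not divisible by $F$) is \cref{lem:epsilon-si-hessian}. For $n\in\{2,3\}$ the bound is vacuous, so assume $n\geq4$ where needed. The deterministic consequence is immediate, since every deterministic protocol is a probabilistic one.

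Over $\C$, the set $S=V(\SI_{n,\varepsilon})$ is Zariski closed. By \cref{prop:real_hypersurface} and irreducibility it has dimension $2n-1$, and condition 2 holds trivially because $V(F)=\overline{S}$. This case is then direct.

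Over $\R$, the remaining work is to verify condition 2, $\dim(\partial S(F)\cap V(F))=2n-1$. I would copy the argument in the proof of \cref{thm:framework-R-threshold}. That argument only needs a real point $u\in V(F)$ with $\nabla F(u)\neq0$. Near such a point, the implicit function theorem makes $V(F)\cap B_u$ of dimension $2n-1$, and strict monotonicity of $F$ in some coordinate direction places every point of $V(F)\cap B_u$ in $\partial S(F)$.

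To find $u$, I would pick distinct reals $Y_1,\ldots,Y_n$ and $X_2,\ldots,X_n$ with all $X_i-Y_j$ nonzero, and let $c=\prod_{i\geq2}\prod_j(X_i-Y_j)$. Then $F$ restricted to the line $X_1=t$ is $c\prod_j(t-Y_j)-\varepsilon$. This is a univariate polynomial of degree $n$ whose leading coefficient has sign $\operatorname{sgn}(c)$. It tends to $\pm\infty$ at the ends with appropriate signs, so choosing the sign of $c$ (e.g.\ by reordering) we can make it take the value $+\infty$ somewhere. Since it equals $-\varepsilon<0$ at $t=Y_1$, it has a sign change, hence a simple crossing $t_0$ where it vanishes with nonzero derivative; generically the crossing is simple. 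Equivalently, I can use Sard/genericity: $\varepsilon$ is a regular value for all but finitely many values, or simply pick a crossing of odd multiplicity and perturb. Then $\partial F/\partial X_1(u)\neq0$ at $u=(t_0,X_2,\ldots,Y_n)$.

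The main obstacle is this last point: producing a real zero with nonvanishing gradient for every $\varepsilon>0$, not just generic $\varepsilon$. If the simple-root argument is delicate, I would use that $\SI_n$ takes all real values with many preimages. The polynomial $t\mapsto c\prod_j(t-Y_j)$ has $n$ distinct real roots and hence $n-1$ distinct critical values. By scaling $c$ (it is free, by moving $X_2$), I can ensure $\varepsilon$ is not a critical value, so every real root of $c\prod(t-Y_j)=\varepsilon$ is simple. A real root exists since the polynomial is unbounded above in one direction when $n$ is odd, or for suitable sign of $c$ when $n$ is even.
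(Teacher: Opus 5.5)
Your proposal is correct and matches the paper's proof: it applies \cref{thm:framework-R-mixed-hessian} and \cref{thm:framework-C-mixed-hessian} with $F=\SI_{n,\varepsilon}$ and $r=n-1$, via \cref{lem:epsilon-si-irreducible,lem:epsilon-si-hessian}, and handles the complex case exactly as you do. The one divergence is your ``main obstacle'' over $\R$, which the paper settles in one line: $\SI_n$ is homogeneous of degree $n^2$, so Euler's identity gives $\sum_i X_i\,\partial_{X_i}\SI_n+\sum_j Y_j\,\partial_{Y_j}\SI_n=n^2\SI_n=n^2\varepsilon\neq0$ at every point of $V(\SI_{n,\varepsilon})$. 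Hence every real zero is nonsingular, so $\partial S(\SI_{n,\varepsilon})=V(\SI_{n,\varepsilon})$ globally, and the explicit point $X_i=\varepsilon^{1/n^2}$, $Y_j=0$ suffices for every $\varepsilon>0$. Your critical-value avoidance is valid but unnecessary; by Euler, even your first sign-change root already has nonzero gradient.
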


\begin{proof}
By \cref{lem:epsilon-si-irreducible}, $\SI_{n,\varepsilon}$ is irreducible over both $\R$ and $\C$.

\begin{claim}\label{claim:border_inclusion}
$\partial S(\SI_{n,\varepsilon})=V(\SI_{n,\varepsilon})$.
\end{claim}

\begin{claimproof}
Continuity gives
$\partial S(\SI_{n,\varepsilon})\subseteq V(\SI_{n,\varepsilon})$.
Conversely, let $z\in V(\SI_{n,\varepsilon})$.
Since $\SI_n$ is homogeneous of degree $n^2$, Euler's theorem of homogeneous polynomials gives
\[
  \sum_{i=1}^n X_i\frac{\partial\SI_n}{\partial X_i}
  +\sum_{j=1}^n Y_j\frac{\partial\SI_n}{\partial Y_j}
  =n^2\SI_n.
\]
Then $\SI_n(z)=\varepsilon$, so evaluating Euler's identity at $z$
shows that $\nabla\SI_{n,\varepsilon}(z)\neq0$.
Moving a sufficiently small distance from $z$ in the two opposite
gradient directions therefore gives points where
$\SI_{n,\varepsilon}$ has opposite signs.
Hence $z\in\partial S(\SI_{n,\varepsilon})$, proving the reverse
inclusion.
\end{claimproof}

Let $u\in\R^n\times\R^n$ be given by
\[
  X_i=\varepsilon^{1/n^2},
  \qquad
  Y_i=0
  \qquad (i\in[n]).
\]
Then $\SI_n(u)=\varepsilon$, so $u\in V(\SI_{n,\varepsilon})$.
Moreover,
\[
  \frac{\partial\SI_{n,\varepsilon}}{\partial X_1}(u)
  =n\varepsilon^{1-1/n^2}\neq0,
\]
so $\nabla\SI_{n,\varepsilon}(u)\neq0$.
By \cref{prop:real_hypersurface},
\[
  \dim V(\SI_{n,\varepsilon})=2n-1.
\]
Together with the claim, this gives
\[
  \dim\left(\partial S(\SI_{n,\varepsilon})
  \cap V(\SI_{n,\varepsilon})\right)=2n-1.
\]
By \cref{lem:epsilon-si-hessian}, an $(n-1)\times(n-1)$ minor of
$\HH_{X|Y}(\SI_{n,\varepsilon})$ is not divisible by
$\SI_{n,\varepsilon}$.
Applying \cref{thm:framework-R-mixed-hessian} with $r=n-1$ gives
\[
  \PCC_{\R}\left(S(\SI_{n,\varepsilon})\right)\geq2(n-3).
\]

Over $\C$, \cref{prop:real_hypersurface} gives
$\dim V(\SI_{n,\varepsilon})=2n-1$.
The set $V(\SI_{n,\varepsilon})$ is Zariski closed, and hence equals its
Zariski closure.
Together with \cref{lem:epsilon-si-irreducible,lem:epsilon-si-hessian},
this verifies the hypotheses of
\cref{thm:framework-C-mixed-hessian} with $r=n-1$, which gives
\[
  \PCC_{\C}\left(V(\SI_{n,\varepsilon})\right)\geq2(n-3).
\]
\end{proof}

\section{Applications of communication lower bounds}
\label{sec:scanner}
In this section, we provide an application of our communication complexity lower bounds to an algorithmic setting. To the best of our knowledge, this  algorithmic model has not been considered in the literature before. But very similar models have been studied in different settings such as of one-pass algorithms, streams over $\N$, and turnstile models~\cite{AlonMatiasSzegedy, MuthukrishnanSurvey, LiNguyenWoodruff}.

Informally, the setup is as follows. The input is a string $(\alpha_1, \ldots, \alpha_n) \in \F^n$. The algorithm reads the string, after reading each
%\magnus{I just want to change one word: "The algorithm reads the string, after reading every coordinate" should be changed to "The algorithm reads the string, after reading a coordinate" - i.e. "every" $\to$ "a"}\nutan{Done.}
coordinate, it may write a polynomial function of that coordinate into its memory and optionally query a testing polynomial with the current contents of its memory. The decisions in the following steps are based on the result of the tests ($=, \neq$ in  the case of $\C$ and $<, >, =$ in the case of $\R$). The underlying algorithm may be deterministic or probabilistic. In the latter case, the probability of error is bounded by $1/3$. The formal definition follows.

\begin{definition}[Algebraic Scanner]
    Let $\F \in \{\R,\C\}$. An \Asc\ is an algorithm that, on input $\overline{\alpha} = (\alpha_1, \ldots, \alpha_n) \in \F^n$, scans the input from left to right. It has a memory register which is initially empty, that is, $M^{(0)} = \perp$. At each step $(i)$, it may perform the following operations.
    \begin{itemize}
        \item  Append the memory with a new value $f_i(\alpha_i)$: $M^{(i)} = \left(M^{(i-1)} , f_i(\alpha_i)\right)$, where $f_i \in \F[z]$. The memory stays unchanged, i.e., $M^{(i)} = M^{(i-1)}$, if nothing is appended in step $i$.
        \item Ask a tester query about the content of the memory at that step: Is $Q_i(M^{(i)}) \texttt{ op } 0?$, where $Q_{i} \in \F[z_1, \ldots, z_{\ell_i}]$, $\ell_i$ is the number of items in the memory and $\texttt{op} \in \{\neq, =\}$ if $\F = \C$ and $\texttt{op} \in \{<, >, =\}$, if $\F = \R$.
        \item At each step, at most $1$ memory append operation can happen and at most one query is performed. The result of the query may be used to decide the choice of polynomials $f$'s and $Q$'s in the following steps.
    \end{itemize}
    At the end of step $(n)$, the algorithm either accepts or rejects.

    We will say that $S \subseteq \F^n$ is accepted by an $\ltAsc$, if there is an \Asc\ that uses at most $\ell$ items in the memory and at most $t$ query polynomials and accepts $a \in \F^n$ if and only if $a \in S$. We say the cost of the scanner is $\ell+t$.

    A probabilistic scanner is defined as a finite-support distribution over deterministic scanners.
    We say that $S \subseteq \F^n$ is accepted by an $\ltAPsc$, if there is an \APsc\ that uses at most $\ell$ items in the memory and at most $t$ query polynomials. For $a \in \F^n$, if $a \in S$, it accepts $a$ with probability at least $2/3$ and, if $a \notin S$, it rejects with probability at least $2/3$.

\end{definition}

\begin{theorem}
    \label{thm:scanner-to-communication}
    Let $S \subseteq \F^n\times\F^n$ and let $\tilde{S} = \{X\circ Y \in \F^{2n} \mid (X,Y) \in S\}$, where $X\circ Y$ denotes concatenation of $X$ and $Y$. If $\PCC_{\F}(S) \geq r$, then any $\ltAPsc$ for $\tilde{S}$ is such that $\ell+t \geq r$.
\end{theorem}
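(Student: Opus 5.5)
We argue by contrapositive and simulation. Given an $\ltAPsc$ $\mathcal{A}$ for $\tilde S$ with $\ell+t<r$, we build a probabilistic communication protocol for $S$ of cost at most $\ell+t$. This contradicts $\PCC_{\F}(S)\geq r$. Since a probabilistic scanner is a finite-support distribution over deterministic scanners, and a probabilistic protocol is a finite-support distribution over deterministic protocols, it suffices to convert each deterministic $(\ell,t)$-scanner $A$ into a deterministic protocol $C_A$ of depth at most $\ell+t$. The conversion must preserve the accept/reject output on every input $(X,Y)$. We then give $C_A$ the same weight as $A$. The error guarantee $2/3$ transfers verbatim, because acceptance probabilities coincide pointwise.

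The simulation of a deterministic scanner runs on the concatenated input $X\circ Y$. In steps $1,\dots,n$ the scanner reads only Alice's coordinates, and in steps $n+1,\dots,2n$ only Bob's. We traverse the computation tree of $A$ step by step.
\begin{itemize}
\item When step $i$ appends $f_i(\alpha_i)$, the owning party broadcasts this value as a message node. The message polynomial is $f_i(X_i)\in\F[X]$ if $i\le n$, and $f_i(Y_{i-n})\in\F[Y]$ otherwise. Its testing polynomial is a nonzero constant, unless a query occurs at the same step.
\item When step $i$ makes a query $Q_i(M^{(i)})$, that value is a polynomial in the values already broadcast. The referee can therefore evaluate it as the testing polynomial $P_v$ on the transcript, and the branches $\{<,=,>\}$ or $\{=,\neq\}$ correspond exactly.
\item If a step has both an append and a query, one node suffices: the message followed by the test on the updated transcript.
\item If a step has a query but no append, the model requires every node to carry a message. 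We therefore insert a node with a dummy message, the constant polynomial $0$ from the current party, carrying the test $Q_i$.
\end{itemize}
The later choices of $f_j,Q_j$ depend only on earlier test outcomes, which matches the tree structure of the protocol. Steps with neither an append nor a query are simply skipped.

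Counting along any root-to-leaf path, each node is charged either to an append or to a query, so the depth is at most $\ell+t$. Leaves inherit the scanner's final accept/reject label. By induction on the steps, the path of $(X,Y)$ in $C_A$ mirrors the run of $A$ on $X\circ Y$. Hence $C_A$ accepts $(X,Y)$ iff $A$ accepts $X\circ Y$, that is, iff $(X,Y)\in S$ in the deterministic case.

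The main subtlety is bookkeeping. First, the memory in the scanner model may be queried before either party has sent anything, so a query on an empty memory is a constant test. We handle it with a dummy message node, or by simply pruning it, since its outcome is fixed. Second, the append-or-query-or-both cases must be counted correctly so the depth bound is $\ell+t$ and not $2(\ell+t)$. Third, the protocol definition puts the test at the node whose message was just sent; the construction respects this. None of this needs the earlier algebraic machinery. The theorem is a pure simulation, in the spirit of the standard communication-to-streaming argument.
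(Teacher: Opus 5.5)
Your proposal is correct and uses the same simulation as the paper. Alice runs the scanner on her half and Bob continues on his. Each appended value becomes a message of the party that owns it, and the referee simulates the queries, so the protocol has depth at most $\ell+t$. Your write-up is more careful than the paper's sketch: you convert each deterministic scanner in the support separately, add a dummy message for query-only steps, and merge a step that both appends and queries into one node.
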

\begin{proof}
    We prove the contrapositive. Suppose we have an $\ltAPsc$ for $\tilde{S}$, where $\ell+t < r$. Then given input $X$ to Alice and $Y$ to Bob, Alice runs the scanner on her input. When she finishes processing her input, Bob starts processing his input. Each stored value from Alice’s half becomes an Alice message; each stored value from Bob’s half becomes a Bob message; the referee simulates all tests.

    Notice that every time something is written in the memory, it becomes a possible input to the query polynomial at subsequent steps. Thus, the query polynomial receives at most $\ell$ inputs. Second, every time a test is performed, the protocol branches and increases the depth of the protocol. As there are at most $t$ tests and $\ell$ memory storage steps, we get that $r \geq \ell + t$, which contradicts our assumption. \magnus{we need depth of the protocol at most $\ell + t$, right? Also should the beginning of the proof not assume the existence of an $(\ell, t)$-scanner for which $\ell + t <r$?}\manon{yes!}
\end{proof}

Using the above theorem and the lower bounds on the communication complexity from \Cref{sec:new-lower-bounds}, we obtain lower bounds on algebraic scanners for the following problems.

\begin{problems}
    {$\mathbf{\varepsilon}$-$\mathbf{\ell_2}$-distance}{vectors $\alpha, \beta \in \R^n$}{$\ell_2(\alpha, \beta) \leq \varepsilon$}
\end{problems}

\begin{problems}
    {bichromatic collision}{$n$ green points $\alpha \in \F^n$ followed by $n$ red points $\beta \in \F^n$}{Is there a pair of points of different colors that have the same value?}
\end{problems}

\begin{corollary}
    \label{cor:scanner-lb}
    Let $n \geq 2$ and let $\varepsilon >0$.  Any $\APsc$ over $\R$ for \textbf{$\mathbf{\varepsilon}$-$\mathbf{\ell_2}$-distance} has cost at least $2n$ and over $\F \in \{\R, \C\}$ for \textbf{bichromatic collision} has cost $\Omega(n)$.
\end{corollary}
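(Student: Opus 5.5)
The plan is to derive both bounds from \cref{thm:scanner-to-communication}: identify each scanner problem with a set $\tilde S$ coming from a communication set $S$ whose probabilistic lower bound was proved in \cref{sec:new-lower-bounds}.

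\textbf{$\varepsilon$-$\ell_2$-distance.} Here $\ell_2(\alpha,\beta)\le\varepsilon$ is equivalent to $\sum_i(\alpha_i-\beta_i)^2-\varepsilon^2\le 0$. So the accepted set is exactly $\tilde S$ for $S=S(\EQ_{n,\varepsilon^2})$, with $X=\alpha$ as the first half of the stream and $Y=\beta$ as the second half. Since $\varepsilon^2>0$ and $n\ge2$, \cref{thm:epsilon-equality} gives $\PCC_{\R}(S(\EQ_{n,\varepsilon^2}))=2n$. \cref{thm:scanner-to-communication} then yields $\ell+t\ge 2n$ for every probabilistic scanner.

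\textbf{Bichromatic collision.} The accepted set is $\tilde S$ for $S=V(\SI_n)$. This is the exact set-intersection set, not the perturbed one. The paper's new bound, \cref{thm:epsilon-si-lower-bound}, is for $\SI_{n,\varepsilon}$, so it does not apply directly. For the $\Omega(n)$ claim I would instead invoke Grigoriev's bounds quoted in the text: $\PCC_{\R}(V(\SI_n))\ge n$ and $\PCC_{\C}(V(\SI_n))\ge n/2$, from \cite[Corollary~5.5]{Gri2008}. Feeding these into \cref{thm:scanner-to-communication} gives cost at least $n/2=\Omega(n)$ over both fields.

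As a fallback, in case one prefers to stay within the paper's own framework, I would look for a local reduction from $V(\SI_{n'})$ or a related set to a hard instance. A natural candidate is to restrict inputs to disjoint open boxes, as in \cref{thm:deterministic-unordered-equality}. Getting a linear bound this way is the main obstacle, since the framework handles only $\SI_{n,\varepsilon}$. Citing Grigoriev avoids the issue.

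A small point to check is that the scanner-to-protocol simulation respects the order of the stream. The first $n$ coordinates belong to Alice and the last $n$ to Bob, which matches the concatenation $X\circ Y$ in \cref{thm:scanner-to-communication}.
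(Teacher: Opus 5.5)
Your proposal is correct and follows the paper's proof. It identifies $\varepsilon$-$\ell_2$-distance with $S(\EQ_{n,\varepsilon^2})$ and bichromatic collision with $V(\SI_n)$, then combines \cref{thm:scanner-to-communication} with \cref{thm:epsilon-equality} and Grigoriev's bounds for $V(\SI_n)$ from \cite{Gri2008}, which the paper's one-line proof also relies on since its own framework only covers $\SI_{n,\varepsilon}$; the fallback paragraph is unnecessary and can be dropped.
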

\begin{proof}
    Note that for $(\alpha, \beta) \in \R^n \times \R^n$, $\ell_2(\alpha, \beta) \leq \varepsilon$ if and only if $(\alpha, \beta) \in S(\EQ_{n,\varepsilon^2})$. Also, note that $\alpha \in \F^n$ and $\beta \in \F^n$ have a bichromatic collision if and only if $(\alpha, \beta) \in V(\SI_{n})$
    Thus, using the probabilistic communication complexity lower bounds on $S(\EQ_{n,\varepsilon^2})$ and $V(\SI_{n})$ and by \Cref{thm:scanner-to-communication} we get the corollary.
\end{proof}

We also obtain lower bounds for two algebraic problems. To define these two problems, we introduce some notation. For the sake of this discussion, we will assume $\F = \C$.

\noindent \textbf{Notation.} A univariate monic\footnote{A univariate monic polynomial is a polynomial in one variable whose leading coefficient is $1$} polynomial of degree $n$ may be represented either by its coefficients or by its roots. For example, consider $f(t) = t^2 -4t + 3$. Here, we can represent it by the coefficient vector (1, -4, 3). But suppose we rewrite $f(t) = (t-1) (t-3)$, then we can represent it in terms of the multiset of roots, i.e., by $\{1, 3\}$. Here, we will assume the latter representation for univariate monic polynomials. We will denote this representation by $\textsf{roots}(f(t)) = \{\alpha_1, \ldots, \alpha_n\}$. This means, $f(t) = \prod_{i \in [n]} (t - \alpha_i)$.

\begin{problems}
{GCD}{$\textsf{roots}(f(t))$ followed by $\textsf{roots}(g(t))$, where $f(t)$ and $g(t)$ are univariate monic polynomials.}{Is gcd$(f(t), g(t)) \neq 1 ?$}
\end{problems}

\begin{problems}
    {Resultant}{$\textsf{roots}(f(t))$ followed by $\textsf{roots}(g(t))$, where $f(t)$ and $g(t)$ are univariate monic polynomials.}{Is Resultant$(f(t), g(t)) = \varepsilon?$}
\end{problems}

\begin{corollary}
    \label{cor:scanner-lb-gcd}
    Let $n \geq 2$ and $\varepsilon > 0$.  Any $\APsc$ over $\C$ for both GCD and Resultant has cost $\Omega(n)$.
\end{corollary}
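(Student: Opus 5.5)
The plan is to reduce both problems to set-recognition problems whose probabilistic complexity the paper already bounds, and then invoke \cref{thm:scanner-to-communication}. The input is $\textsf{roots}(f)=\{\alpha_1,\ldots,\alpha_n\}$ followed by $\textsf{roots}(g)=\{\beta_1,\ldots,\beta_n\}$. Splitting it after the first $n$ coordinates gives Alice $X=\alpha$ and Bob $Y=\beta$. By \cref{thm:scanner-to-communication}, it then suffices to show that the corresponding sets $S\subseteq\C^n\times\C^n$ have $\PCC_\C(S)=\Omega(n)$.

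\textbf{Resultant.} For monic $f,g$ with these roots, $\operatorname{Res}(f,g)=\prod_{i,j}(\alpha_i-\beta_j)=\SI_n(X,Y)$. So the accepted set is exactly
\[
\{(X,Y)\in\C^n\times\C^n \mid \SI_n(X,Y)=\varepsilon\}=V(\SI_{n,\varepsilon}).
\]
\cref{thm:epsilon-si-lower-bound} gives $\PCC_\C(V(\SI_{n,\varepsilon}))\geq 2(n-3)$, and \cref{thm:scanner-to-communication} then yields cost $\ell+t\geq 2n-6=\Omega(n)$. I would check one point of the ordering convention: $\operatorname{Res}(f,g)=\prod_{i,j}(\alpha_i-\beta_j)$ holds for monic polynomials of equal degree. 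Any sign change $(-1)^{n^2}$ from the opposite ordering would only replace $\varepsilon$ by $\pm\varepsilon$. The SI lemmas hold for every nonzero $\varepsilon$ (irreducibility by \cref{lem:epsilon-si-irreducible}, the minor by \cref{lem:epsilon-si-hessian}, and the complex framework only needs $V(F)$ Zariski closed), so either sign is harmless.

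\textbf{GCD.} Over $\C$, $\gcd(f,g)\neq1$ if and only if $f$ and $g$ share a root, that is, $\alpha_i=\beta_j$ for some $i,j$. The accepted set is therefore $V(\SI_n)$, which is the bichromatic collision set. This is the step I expect to be the main obstacle. The paper's explicit bound in \cref{thm:epsilon-si-lower-bound} is for the perturbed $\SI_{n,\varepsilon}$, not for $\SI_n$ itself, and \cref{cor:scanner-lb} used a bound on $V(\SI_n)$ that is not proved directly in the text shown. I would therefore plan two routes.

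The first route is to cite Grigoriev's bound $\PCC_\C(V(\SI_n))\geq n/2$ from \cite[Corollary~5.5]{Gri2008}, which is already recalled in the paper. This gives $\Omega(n)$ immediately through \cref{thm:scanner-to-communication}.

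The second route, which is self-contained, applies \cref{thm:framework-C-mixed-hessian} with $S=V(\SI_n)$ and $F=X_1-Y_1$. This $F$ is irreducible, and $V(F)$ is an irreducible component of the $(2n-1)$-dimensional set $V(\SI_n)$, so $V(F)\subseteq\overline{S}$. However, $\HH_{X|Y}(F)=0$, so this choice of $F$ gives nothing; this is why I would fall back on Grigoriev's route.

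A third option is a local reduction of the resultant problem to GCD, which I do not see cheaply. I would therefore settle on the first route for GCD and the direct $V(\SI_{n,\varepsilon})$ argument for the resultant.
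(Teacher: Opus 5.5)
Your proposal is correct and is essentially the paper's argument: you identify the Resultant accept set with $V(\SI_{n,\varepsilon})$ and the GCD accept set with $V(\SI_n)$, then combine \cref{thm:scanner-to-communication} with \cref{thm:epsilon-si-lower-bound} for the Resultant set and with Grigoriev's $n/2$ bound \cite[Corollary~5.5]{Gri2008} for the GCD set. Your explicit attribution of the $V(\SI_n)$ bound to Grigoriev makes precise what the paper's one-line proof leaves implicit when it cites ``lower bounds on the communication complexity of $V(\SI_n)$.''
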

%\pd{For $V(\SI_{n,\epsilon})$ the lower bound we have is $2(n-3)$. Hence here we should claim $\Omega(n)$ (instead of $2n$)}
%\manon{I agree}
\begin{proof}
    Note that resultant of $f(t)$ and $g(t)$ is equal to $$\prod_{i\in [n]} \prod_{j \in [n]} (\alpha_i - \beta_j).$$ And this is $0$ if and only if their gcd is non-trivial, i.e., it is not equal to $1$~\cite[Chapter 3, Section 1]{cox2005using}.

    Thus, by lower bounds on the communication complexity of $V(\SI_{n})$ and $V(\SI_{n, \varepsilon})$ and by \Cref{thm:scanner-to-communication}, we get the corollary.
\end{proof}
\section{Extension to bounded-time BSS machines}
\label{sec:bss}

In this section we generalize the communication model and prove that our framework for proving probabilistic lower bounds can be adapted in this general model. Our more general model is inspired by the well-known Blum--Shub--Smale model of computation~\cite{BCSS98}.

\subsection{Bounded-time BSS machines}

In this general model, Alice, Bob, and the referee are allowed to compute more general classes of functions. Implicitly, they are allowed to use computation defined by the following computational model:
\begin{definition}[Bounded-time BSS machine]
    A \emph{bounded-time BSS machine} is a finite computation tree consisting of
    \begin{itemize}
        \item \textbf{Input nodes} that read input from $\R$ into specified registers, $(R_1,...,R_n)$, together with finitely many additional registers initialized to fixed real constants
        \item \textbf{Computation nodes} that execute $R_i \leftarrow R_j * R_k$, where $* \in \{+,-,\times,/\}$, where division by zero is disallowed.
        \item \textbf{Branch nodes} that branch according to $R_i >0$ or $R_i =0$ or $R_i <0 $.
        \item \textbf{Output leaves} that outputs a single real-valued number.
    \end{itemize}
\end{definition}

\begin{remark}
    Usually, in a standard BSS model, the output leaf writes out the (vector-valued) contents of the registers. However, since the communication model we consider is single-valued, we will only be considering BSS machines with single-valued output.

    Note that BSS machines are usually directed graphs, not trees, but by assuming every input halts within a uniform finite time bound we may unravel the BSS machine to a (potentially exponential sized) tree as defined above.
\end{remark}
%\important{Also, maybe we should update the next subsubsections according to this remark?}  \nutan{Yes, I agree. Could you take care of this?}  

Now if we just follow one path, $\tau$ in this tree then by regarding the
inputs as variables, the output is a rational function, which we denote as $M^\tau(X) \in \R(X)$. Hence, globally, the output of the BSS machine, $M(X)$, is just a piecewise rational function. We assume these rational function have \emph{no} poles on their corresponding pieces.

Furthermore, the set of points which travel along $\tau$ is given by a semialgebraic set. Indeed, each branching along $\tau$ is given by the sign of a rational function which can be described by (in)equalities of polynomials.\footnote{Since if $Q(X) = A(X)/B(X)$ then $ Q(X) = 0 \iff A(X) = 0$ and $ Q(X) > 0 \iff A(X)B(X) > 0$ and $ Q(X) < 0 \iff A(X)B(X) < 0$. (Assuming $B(X) \neq 0$).}

Thus, a bounded-time BSS machine induces a finite partition of $\R^n$ into semialgebraic sets, one for each feasible computation path, and on each such set its output agrees with a rational function.

\subsection{Extension of the full-rank mixed Hessian criterion to rational functions}
\label{sec:full-rank-hessian-ext}

We start by generalizing \cref{lem:full-rank-criterion} to work for rational functions. This will play a critical role in \cref{thm:framework-R-BSS}.

\begin{lemma}[Fractional full-rank criterion for powers]\label{lem:full-rank-criteria-fractional}
    Let $\F$ be a field of characteristic zero, let $p \in \F[X,Y]$ be irreducible, and define determinant of $(n+1) \times (n+1)$ matrix
    $$
    R_p \quad \coloneqq \quad \Det\begin{pmatrix}
        \HH_{X|Y}(p) & \nabla_X p \\ (\nabla_Y p)^T & 0
    \end{pmatrix}.
    $$
    Suppose that $p \nmid R_p$. Then for every
    $$
    g = p^m\frac{a}{b}, \qquad m \geq 2, \qquad p \nmid a, \qquad p \nmid b
    $$
    with $a,b \in \F[X,Y]\setminus \{0\}$, we have
    $$
    \rank \HH_{X|Y}(g) = n.
    $$
\end{lemma}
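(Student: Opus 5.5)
The plan is to rerun the proof of \cref{lem:full-rank-criterion} verbatim, working in the localization $\mathcal{O}\coloneqq\F[X,Y]_{(p)}$ instead of $\F[X,Y]$. Here $\mathcal{O}$ is the subring of $\F(X,Y)$ of fractions $c/e$ with $p\nmid e$. Since $p$ is prime in the UFD $\F[X,Y]$, $\mathcal{O}$ is a local ring with maximal ideal $p\mathcal{O}$. Its units are exactly the fractions $c/e$ with $p\nmid c$ and $p\nmid e$, and its residue field $\mathcal{O}/p\mathcal{O}$ is the fraction field of the domain $\F[X,Y]/(p)$. Set $h=a/b$. The hypotheses $p\nmid a$ and $p\nmid b$ say precisely that $h$ is a unit of $\mathcal{O}$.

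\textbf{Step 1: $\mathcal{O}$ is closed under the partial derivatives we need.} Each first partial derivative of $h$ has the form $(\partial a\cdot b-a\,\partial b)/b^2$. Each second partial derivative has denominator $b^3$. Since $p\nmid b^k$, all entries of $\nabla_Xh$, $\nabla_Yh$ and $\HH_{X|Y}(h)$ lie in $\mathcal{O}$. The entries of $\nabla_Xp$, $\nabla_Yp$ and $\HH_{X|Y}(p)$ lie in $\F[X,Y]\subseteq\mathcal{O}$.

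\textbf{Step 2: the decomposition.} The product rule holds for rational functions. So the same differentiation as in \cref{lem:full-rank-criterion} gives
\[
  \HH_{X|Y}(g)=p^{m-2}\left(\lambda\vecx\vecy^T+pM\right),
\]
with $\vecx=\nabla_Xp$, $\vecy=\nabla_Yp$ and $\lambda=m(m-1)h$. The matrix $M=mh\HH_{X|Y}(p)+m\vecx(\nabla_Yh)^T+m(\nabla_Xh)\vecy^T+p\HH_{X|Y}(h)$ now has entries in $\mathcal{O}$.

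\textbf{Step 3: the determinant computation.} The matrix determinant identity \cref{eq:general-full-rank-det} is a polynomial identity in the entries, so it holds over any commutative ring, in particular over $\mathcal{O}$. It gives $\Det(\lambda\vecx\vecy^T+pM)=p^{n-1}Q$ with $Q\in\mathcal{O}$. The row and column operations used to simplify $Q$ modulo $p$ are elementary operations with multipliers $m\,\partial h/\partial Y_j$ and $m\,\partial h/\partial X_i$. These multipliers lie in $\mathcal{O}$ by Step 1, so the computation goes through modulo $p\mathcal{O}$ and yields
\[
  Q\equiv -m^n(m-1)h^nR_p \pmod{p\mathcal{O}}.
\]

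\textbf{Step 4: conclusion.} The factor $m^n(m-1)$ is a nonzero scalar because $m\geq2$ and $\char\F=0$, and $h^n$ is a unit of $\mathcal{O}$. Moreover $R_p\notin p\mathcal{O}$: if $R_p=p\,c/e$ with $p\nmid e$, then $p\mid eR_p$, and primality of $p$ would force $p\mid R_p$, contradicting the hypothesis. Hence $Q\notin p\mathcal{O}$, and in particular $Q\neq0$. Therefore
\[
  \Det\HH_{X|Y}(g)=p^{n(m-2)+n-1}Q\neq0,
\]
so $\HH_{X|Y}(g)$ has rank $n$ over $\F(X,Y)$.

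The step needing most care is Step 3. One must check that every object appearing in the reduction modulo $p$ genuinely lies in $\mathcal{O}$, and that ``$\equiv$'' is interpreted in $\mathcal{O}/p\mathcal{O}$ rather than in $\F[X,Y]/(p)$. Once this is set up, the argument of \cref{lem:full-rank-criterion} transfers unchanged.

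An alternative route would clear denominators, for instance by writing $g\,b$ and relating the two Hessians. However, this introduces extra rank-one terms and loses rank, so the localization approach seems cleaner.
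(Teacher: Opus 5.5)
Your proposal is correct and follows essentially the same route as the paper: localize at the prime $(p)$, note that $h=a/b$ is a unit there and that the local ring is closed under partial derivatives, then rerun the computation of \cref{lem:full-rank-criterion} modulo $p\mathcal{O}$ to obtain $Q\equiv -m^n(m-1)h^nR_p\not\equiv 0$. The only difference is that you spell out more carefully why $R_p\notin p\mathcal{O}$ and why the elementary row and column operations are legitimate over $\mathcal{O}$; the paper asserts both without detail.
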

\begin{proof}
    Let
    \[
        A=\F[X,Y]_{\langle p\rangle}
        =\left\{\frac{c}{d}:c,d\in\F[X,Y],\ d\neq0,\ p\nmid d\right\},
        \qquad h=\frac{a}{b}.
    \]
    First, $h$ is a unit in $A$, since $p\nmid a$ and hence $h^{-1}=b/a\in A$.
    Second, $A$ is closed under differentiation: if $r/s\in A$, then
    \[
        \frac{\partial}{\partial Z}\left(\frac{r}{s}\right)
        =\frac{s\,\partial_Zr-r\,\partial_Zs}{s^2}\in A,
    \]
    since $p\nmid s^2$.
    Finally, since $\F[X,Y]$ is a unique factorization domain and $p$ is irreducible, $\langle p\rangle$ is prime. Hence, $A$ is a local domain with maximal ideal $pA$, so $A/pA$ is a field. The images of both $h$ and $R_p$ in this field are nonzero: the former because $h$ is a unit and the latter because $p\nmid R_p$.

    These are precisely the properties of $h$ and reduction modulo $p$ used in the proof of \cref{lem:full-rank-criterion}. Hence, the same calculation applies over $A$ and gives
    \[
        \Det\HH_{X|Y}(g)
        =p^{n(m-2)+n-1}Q,
        \qquad
        Q\equiv-m^n(m-1)h^nR_p\not\equiv0\pmod{pA}.
    \]
    Here, the nonvanishing follows from $m\geq2$, $\operatorname{char}(\F)=0$, and the preceding observations about $h$ and $R_p$.
    Therefore, $Q\neq0$ and $\Det\HH_{X|Y}(g)\neq0$. Since the fraction field of $A$ is $\F(X,Y)$, it follows that
    \[
        \rank_{\F(X,Y)}\HH_{X|Y}(g)=n.
    \]
\end{proof}

\begin{remark}
    The above is a \emph{strict} generalization of \cref{lem:full-rank-criterion}, since we retrieve the result by simply setting $b = 1$.
\end{remark}

\subsection{Framework for BSS probabilistic lower bounds}
We now consider the extension of \cref{def:set-recognition-protocol} and \cref{def:probabilistic-set-recognition-protocol} given by replacing every polynomial with bounded-time BSS machine. I.e. the protocol is exactly the same but the local computations and the branching are now done by bounded-time BSS machines. Furthermore, the communication complexity, $c(g)$, in this new model is defined exactly as in \cref{def:comm-complexity} but with $a_1,...,a_{r_1}$, $b_1,...,b_{r_2}$ and $P$ and $g$ all given by bounded-time BSS machines.

Surprisingly, it turns out that even with this general model we can get the exact same framework for lower bounds of semialgebraic set-recognition! This result seems to hint that increasing the local power of the model does not result in greater computational power for set-recognition of semialgebraic sets.

Before we state and prove the main theorem, we need a generalization of \cref{lem:comm-matrix}, since path-specific computations of bounded-time BSS machines are rational functions rather than polynomial functions. This generalizes \cite{Gri2008} whose result is inspired by the much more general result, \cite[Theorem~2]{Abelson80}.

\begin{lemma}[Mixed Hessian bound for rational computations]\label{lem:comm-matrix-rat}
    Let $\F \in \{\R, \C\}$, and let $g \in \F(X,Y)$ be a rational function. Suppose that
    $$
    g(X,Y) = Q(a_1(X),...,a_{r_1}(X),b_1(Y),...,b_{r_2}(Y)),
    $$
    is a minimum-size representation of $g$, where $a_1,...,a_{r_1} \in \F(X)$, $b_1,...,b_{r_2} \in \F(Y)$ and $Q \in \F(A_1,...,A_{r_1},B_1,...,B_{r_2})$ are all rational functions. Then $\rank \HH_{X|Y}(g) \leq \min\{r_1,r_2\}$ and consequently
    $$
    c(g)=r_1+r_2 \geq 2 \cdot \rank \HH_{X|Y}(g),
    $$
    where the rank is taken over $\F(X,Y)$.
\end{lemma}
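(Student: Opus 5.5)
We follow the Jacobian argument behind \cref{lem:comm-matrix}, now with rational functions. Write $A(X)=(a_1(X),\ldots,a_{r_1}(X))$ and $B(Y)=(b_1(Y),\ldots,b_{r_2}(Y))$. We work on a nonempty Zariski open set $W\subseteq\F^n\times\F^n$ on which the following all hold: every $a_k$ and $b_l$ is defined, $Q$ is defined at $(A(X),B(Y))$, and the identity $g=Q(A(X),B(Y))$ holds pointwise. Such a $W$ exists. Indeed, $Q\circ(A,B)$ and $g$ are equal as elements of $\F(X,Y)$, and the composition is meaningful because the representation is assumed to be a genuine identity of rational functions. So it suffices to avoid the finitely many denominators. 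On $W$ all functions are analytic, and we may differentiate using the chain rule.

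\textbf{Chain rule.} Differentiating first in $Y_j$ gives
\[
  \frac{\partial g}{\partial Y_j}=\sum_{l=1}^{r_2}\frac{\partial Q}{\partial B_l}(A,B)\,\frac{\partial b_l}{\partial Y_j}(Y).
\]
Differentiating next in $X_i$, the factor $\partial b_l/\partial Y_j$ does not depend on $X$. This yields
\[
  \frac{\partial^2 g}{\partial X_i\partial Y_j}=\sum_{k=1}^{r_1}\sum_{l=1}^{r_2}\frac{\partial a_k}{\partial X_i}\,\frac{\partial^2 Q}{\partial A_k\partial B_l}(A,B)\,\frac{\partial b_l}{\partial Y_j}.
\]
In matrix form this reads
\[
  \HH_{X|Y}(g)=J_A\,\HH_{A|B}(Q)(A,B)\,J_B^T,
\]
where $J_A$ is the $n\times r_1$ Jacobian of $A$ and $J_B$ is the $n\times r_2$ Jacobian of $B$. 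All entries of these three matrices lie in $\F(X,Y)$, since partial derivatives of rational functions are rational and composing a rational function with rational maps stays in $\F(X,Y)$. The factorization is therefore an identity of matrices over the field $\F(X,Y)$. It follows that
\[
  \rank\HH_{X|Y}(g)\le\min\{r_1,r_2\}.
\]
Summing the two bounds gives $c(g)=r_1+r_2\ge 2\rank\HH_{X|Y}(g)$.

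\textbf{Main obstacle.} The delicate step is justifying that the composition is well defined and that the identity may be differentiated. Two things could go wrong: the image of $(A,B)$ might lie entirely inside the pole locus of $Q$, or of one of its partial derivatives. The first cannot occur, since then $Q(A,B)$ would be undefined, whereas it equals $g$ by hypothesis. For the second, we write $Q=N/D$ with $D(A,B)\not\equiv0$ in $\F(X,Y)$. The partial derivatives of $Q$ have denominators that are powers of $D$, so they are also defined on a Zariski dense set. Hence the identity holds on the nonempty open set $W$, where differentiation is legitimate. Identities of rational functions that hold on a nonempty open set hold in $\F(X,Y)$, so the rank statement transfers to the field $\F(X,Y)$. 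Minimality of the representation is not needed for the inequality; it only identifies $c(g)$ with $r_1+r_2$.
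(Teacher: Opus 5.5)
Your proof is correct and is essentially the paper's argument: the chain-rule factorization $\HH_{X|Y}(g)=J_A\,\HH_{A|B}(Q)(A,B)\,J_B^T$ over $\F(X,Y)$, followed by the rank bound for a product of an $n\times r_1$, an $r_1\times r_2$, and an $r_2\times n$ matrix. You also do more than the paper, which only asserts the following in passing: you choose a nonempty Zariski open set on which every denominator is nonzero, including those of the partial derivatives of $Q$ composed with $(A,B)$, and you explain why the pointwise identity transfers back to $\F(X,Y)$.
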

\begin{proof}
    In the following all derivatives are formal derivatives and the calculations can be done over any open subset of $\F^n \times \F^n$ on which the denominators are non-zero.

    Let $a = (a_1,...,a_{r_1})$ and $b=(b_1,...,b_{r_2})$. All the considered functions at least of regularity $\mathcal{C}^2$ (as we are dealing with quotients of polynomials), so Schwartz's theorem can be applied (and the partial derivatives can be swapped).
    By the same calculation as in the proof of \cref{lem:comm-matrix} we have that
    $$
    \HH_{X|Y}(g) = J_X(a) \HH_{A|B}(Q)(a,b) J_Y(b)^T,
    $$
    where $J_X(a)$ is the $n \times r_1$ Jacobian matrix and $J_Y(b)$ is the $n \times r_2$ Jacobian matrix. Thus
    $$
    \rank \HH_{X|Y}(g) \leq \min\{r_1,r_2\}.
    $$
    and so
    $$
    c(g) = r_1+r_2 \geq 2 \rank \HH_{X|Y}(g)
    $$
    as desired.
\end{proof}

\begin{theorem}\label{thm:framework-R-BSS}
    Let $S \subseteq \R^n\times\R^n$ be a semialgebraic set and let $F \in \R[X,Y]$ satisfy
    \begin{enumerate}
        \item $F$ is irreducible over $\R$
        \item $\partial S \cap V(F)$ has dimension $2n-1$
        \item $F \nmid R_F$
    \end{enumerate}
    Then $S$ has BSS-probabilistic communication complexity at least $2n$.
\end{theorem}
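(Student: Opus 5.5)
We follow the proof of \cref{thm:framework-R} step by step, replacing polynomials by the piecewise-rational descriptions of bounded-time BSS machines. Let $\mathcal{C}=\{C_1,\ldots,C_N\}$ with weights $\mu_i$ be a BSS-probabilistic protocol recognizing $S$, with maximum depth $d$. We first \emph{unfold} each $C_i$ into a finite tree of \emph{polynomial-like} branchings. Every message at a node is computed by a BSS machine of Alice or Bob, and every test by a BSS machine of the referee. We refine the protocol tree by also branching on every internal BSS branch test of Alice, Bob and the referee. Along a fully refined path $\tau$, each message is a fixed rational function $a_j^\tau(X)$ or $b_j^\tau(Y)$ with no poles on the piece. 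The referee's test is a rational function of the transcript, and each BSS branching is the sign of a rational function; by the footnote's trick, this is the sign of a polynomial $A\cdot B$ or the vanishing of $A$. The internal branchings of Alice and Bob are tests on polynomials in $X$ alone or $Y$ alone. These cost nothing in communication, but they do split the input space. The point is that the refined tree has finitely many paths. Each input region is cut out by finitely many polynomial sign conditions, and on each region all messages and tests are fixed rational functions.

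Let $G_i$ be the product, over all nodes of the refined $C_i$, of the polynomial numerators and denominators of the composed test functions, after pruning those that are identically zero. The proof of \cref{claim:containment} goes through verbatim. If $G_i(u)\neq 0$ for all $i$, then every sign condition is strict and locally constant near $u$, and all denominators are nonzero near $u$. Hence each $C_i$ is constant on a ball around $u$, which contradicts $u\in\partial S$. Therefore $\partial S\subseteq\bigcup_i V(G_i)$.

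By \cref{fact:dim-red} and Assumption~2, $F$ divides some $G_i$, and since $F$ is prime it divides a single factor. We must check that this factor is genuinely a referee test on the transcript. If it were an internal test of Alice, it would be a polynomial in $X$ alone. We expect to dispose of this case as follows. Such a polynomial $q(X)$ is a polynomial in $X$ only, and an irreducible $F$ dividing it has a mixed Hessian that is identically zero modulo $F$. That would make $R_F\equiv 0 \pmod F$, contradicting Assumption~3. So we may assume the factor comes from a referee test, or from a denominator or numerator of one. Write the composed rational test function along a path as $T=A/B$. We have $F\mid A$ or $F\mid B$. In the second case we replace $T$ by $1/T$, which is computed with the same messages. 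So we may assume $T=F^m a/b$ with $m\ge 1$, $F\nmid a$ and $F\nmid b$. Then $T^2=F^{2m}a^2/b^2$.

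\Cref{lem:full-rank-criteria-fractional} gives $\rank\HH_{X|Y}(T^2)=n$. Since $T^2=Q(a^\tau(X),b^\tau(Y))$ with rational $Q$ using at most $d$ messages, \cref{lem:comm-matrix-rat} yields $d\ge c(T^2)\ge 2n$.

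The main obstacle is the bookkeeping in the first step. We must show that the refined tree still has its communication depth bounded by the number of messages. We must also show that, on each path, the composed test is a single rational function of the messages, with no poles, that agrees with the branching semantics. Finally, the internal branching of Alice and Bob must not introduce polynomials that mix the variables $X$ and $Y$. This holds because those branchings are local to one party. It is also the reason the divisibility argument must single out a referee test.
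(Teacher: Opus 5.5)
Your proposal is correct and follows the paper's proof essentially step for step. Both arguments refine each protocol into traces through the outer tree and all BSS machines, place $\partial S$ inside the union of the zero sets of the branch-test numerators, use the dimension fact to force $F$ to divide a single test, rule out private tests of Alice or Bob via $F\nmid R_F$, and conclude with the fractional full-rank criterion and the rational mixed-Hessian bound applied to the square of that transcript test. The differences are cosmetic. You also put the reduced denominators into $G_i$ and handle $F\mid B$ by passing to $1/T$, whereas the paper keeps only numerators since denominators are automatically nonzero along realized traces. When excluding a private test, the cleanest reason is that $F\mid q(X)$ forces $F\in\R[X]$, hence $\nabla_YF=0$ and $R_F=0$ outright; the vanishing of the mixed Hessian modulo $F$ alone would not give $F\mid R_F$ when $n=1$.
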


This proof is a lift of the proof of \cref{thm:framework-R}, so while it can mostly stand on its own, it is essentially simply a modification of \cref{thm:framework-R}.

\begin{proof}
    Throughout this proof we only consider paths through trees which are realizable; i.e. paths for which there exists some input that follows the given path. Furthermore, throughout this proof, all fractions are in reduced form.

    Let $\mathcal{C}$ be a probabilistic communication protocol recognizing $S$.

    Note that a protocol $C$ consists of a computation tree of computation trees: Indeed, we have an outer tree which branches based on the results of BSS machines which themselves are computation trees.

    Thus we define a \emph{trace}, $\tau$, of a protocol $C$ to consist of
    \begin{itemize}
        \item A path through the outer tree.
        \item A path through every communication BSS machine encountered on that outer path. This could be either Alice's machine or Bob's machine.
        \item A path through every transcript-testing BSS machine encountered on that outer path. This is the referee's BSS machine.
    \end{itemize}
    and we denote by $\operatorname{Tr}(C)$ the set of traces of $C$.

    Along such a $\tau$, the communicated values are rational functions and all testing functions are conducted by rational functions. Therefore, we see that any branch test occurring along $\tau$
    has the form
    $$
        G_{C,\tau,i}(X,Y) = \frac{N_{C,\tau,i}(X,Y)}{D_{C,\tau,i}(X,Y)}
    $$
    where $i$ indexes the branch tests.

    Then, we define (assuming without loss of generality that $N_{C,\tau,i} \not\equiv 0$):
    $$
    P_{C,\tau} = \prod_i N_{C,\tau,i}
    $$
    which is a finite product, since every path is finite. Furthermore, the set
    $$
    \mathcal{P} = \{P_{C,\tau} \mid C \in \mathcal{C}, \tau \in \operatorname{Tr}(C)\}
    $$
    is finite since there are only finitely many protocols in $\mathcal{C}$ and since we consider only \emph{bounded-time} BSS machines, there are only finitely many traces, $\tau$, in $C$.

    \begin{claim}\label{claim:subset-bss-case}
        $$
        \partial S \subseteq \bigcup_{C \in \mathcal{C}} \bigcup_{\tau \in \operatorname{Tr}(C)} V(P_{C,\tau}) =: \mathcal{V}
        $$
    \end{claim}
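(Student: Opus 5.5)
We argue by contradiction, mirroring \cref{claim:containment}. Suppose $u\in\partial S$ but $P_{C,\tau}(u)\neq 0$ for every $C\in\mathcal{C}$ and every trace $\tau\in\operatorname{Tr}(C)$. Fix a protocol $C$ and let $\tau_u$ be the trace actually followed by $u$. Along $\tau_u$ every intermediate quantity is a rational function without poles on the corresponding piece, since division by zero is disallowed. In particular the denominators $D_{C,\tau_u,i}$ are nonzero at $u$.

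\textbf{Step 1: $u$ has a neighbourhood following the same trace.} Each branch test on $\tau_u$ evaluated at $u$ equals $N_{C,\tau_u,i}(u)/D_{C,\tau_u,i}(u)$. By hypothesis the numerator is nonzero, so every test is strict ($>0$ or $<0$) at $u$. The functions $N_{C,\tau_u,i}$ and $D_{C,\tau_u,i}$ are finitely many polynomials that are nonzero at $u$. By continuity there is a ball $B_u(\delta_C)$ on which all of them keep their signs. Every point of this ball therefore takes the same branch at every test along $\tau_u$, including tests inside Alice's and Bob's message machines and the referee's machines. Hence every point of the ball follows exactly $\tau_u$ and reaches the same outer leaf, so $C$ outputs the same verdict on the whole ball.

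One subtlety needs care. Tests inside Alice's machine depend only on $X$, but we view them as functions of $(X,Y)$; this is harmless. Also, the realizable trace followed by a nearby point is determined inductively: at each node the same branch is taken, so the next computation is identical. Because the numerators are nonzero at $u$, we never need the degenerate equality branches. The inductive agreement of traces is the step I expect to need the most care, since the outer tree's later nodes depend on earlier communicated values, which are themselves outputs of traced BSS machines.

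\textbf{Step 2: conclude.} Let $\delta_0=\min_{C\in\mathcal{C}}\delta_C>0$, which is positive since $\mathcal{C}$ is finite. Then, exactly as in \cref{eq:ball-around-u}, $\Pr[\mathcal{C}\text{ accepts }z]=\Pr[\mathcal{C}\text{ accepts }u]$ for every $z\in B_u(\delta_0)$. Since $u\in\partial S$, the ball contains some $z_{\mathrm{in}}\in S$ and some $z_{\mathrm{out}}\notin S$. This forces $2/3\le\Pr[\mathcal{C}\text{ accepts }z_{\mathrm{in}}]=\Pr[\mathcal{C}\text{ accepts }z_{\mathrm{out}}]\le 1/3$, a contradiction. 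Hence $\partial S\subseteq\mathcal{V}$.
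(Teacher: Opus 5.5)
Your proof is correct and follows the paper's argument. Assume $u\in\partial S\setminus\mathcal{V}$. Then in each $C$ the numerators and reduced denominators along the trace followed by $u$ are nonzero at $u$, so a small ball around $u$ follows the same traces. The acceptance probability is therefore constant on the intersection of these finitely many balls, which contradicts $u$ being a boundary point.

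Your inductive justification that nearby points follow the same trace is more careful than the paper's. One small point: tests with $N_{C,\tau,i}\equiv 0$ are excluded from $P_{C,\tau}$, so they are not strict at $u$; they take the $=0$ branch identically, which does not affect the argument.
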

    \begin{claimproof}
        Suppose, for contradiction, that there exists $u \in \partial S$ such that $u \notin \mathcal{V}$. So $P_{C, \tau}(u) \neq 0$ for all $P_{C,\tau} \in \mathcal{P}$.%\footnote{We also assume that $u$ is not a poly of any of the rational functions. Since the set of poles is finite this has no effect on the argument.}\magnus{I don't know if i like this footnote (i myself have) added. This is to address Critical proof error, Claim 8.7}

        For each $C \in \mathcal{C}$ we denote by $\tau_C$ the unique trace followed by $u$. Since we disallow division by zero in BSS machines, every denominator along the trace encountered is nonzero. Moreover, since $u \notin \mathcal{V}$ every numerator encountered is also nonzero. Since the nonzero set of these numerators and denominators is an open set it follows that there exists an open neighborhood $B_C$ around $u$. Moreover, by continuity the signs of the numerators and denominators are unchanged on $B_C$.
        Thus every element of $B_C$ follows the trace, $\tau_C$, and outputs the same value as $u$. Hence, the \emph{finite} intersection of open neighborhoods,
        $$
        B_u = \bigcap_{C \in \mathcal{C}} B_C,
        $$
        is itself an open neighborhood of $u$ on which the protocol, $\mathcal{C}$, has constant output.

        More precisely this observation shows that, on $B_u$, we have
        $$
        \Pr[\mathcal{C} \text{ accepts } z] =  \Pr[\mathcal{C} \text{ accepts } u]
        $$
        and so is constant on $B_u$.

        But, $u$ is on the boundary of $S$ and so $B_u$ must contain two points $z_{\text{in}} \in S$ and $z_{\text{out}} \notin S$ and by definition we must have $\Pr[\mathcal{C} \text{ accepts } z_{\text{in}}] \geq 2/3$ and $\Pr[\mathcal{C} \text{ accepts } z_{\text{out}}] \leq 1/3$. This gives us the desired contradiction:
        $$
        2/3 \leq \Pr[\mathcal{C} \text{ accepts } z_{\text{in}}] =  \Pr[\mathcal{C} \text{ accepts } z_{\text{out}}] \leq 1/3
        $$
        since $z_{\text{in}},z_{\text{out}} \in B_u$ and $\Pr[\mathcal{C} \text{ accepts } z]$ is constant on $B_u$.
    \end{claimproof}

    Assume now for contradiction that $F\nmid P_{C,\tau}$ for all $C$ and $\tau$. By \Cref{claim:subset-bss-case} and by \Cref{fact:dim-red} we see that
    $$
    \dim(\partial S \cap V(F)) \leq  \dim\left(\bigcup_{C \in \mathcal{C}} \bigcup_{\tau \in \operatorname{Tr}(C)} (V(P_{C,\tau}) \cap V(F))\right) \leq 2n-2
    $$
    which contradicts assumption 2.

    Thus, $F$ divides some $P_{C_0,\tau_0}$ and since $F$ is irreducible (in particular prime) $F$ must divide some numerator, $N_{C_0,\tau_0,0}$.

    \begin{claim}\label{claim:transcript-origin}
        $N_{C_0,\tau_0,0}$ comes from a transcript test. (In particular not from any private computation done by Alice or Bob).
    \end{claim}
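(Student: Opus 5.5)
The plan is a degree argument: a numerator coming from a private test of Alice (or Bob) lives in $\R[X]$ (or $\R[Y]$), while $F$ must genuinely involve both blocks of variables. So $F$ cannot divide such a numerator, and the only remaining source is a transcript test performed by the referee.

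\textbf{Step 1: $F$ depends on both $X$ and $Y$.} This comes from hypothesis~3 of \cref{thm:framework-R-BSS}. Suppose $F\in\R[X]$. Then $\nabla_YF=0$, so the last row of the matrix defining $R_F$ is identically zero and $R_F=0$. Since $F\mid 0$, this contradicts $F\nmid R_F$. Symmetrically, if $F\in\R[Y]$ then $\nabla_XF=0$, the last column vanishes, and again $R_F=0$. Hence $\deg_Y F\geq1$ and $\deg_X F\geq1$.

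\textbf{Step 2: classify where the numerators come from.} By the definition of a trace, each branch test along $\tau_0$ lies in one of three kinds of machines:
\begin{itemize}
\item Alice's communication machine, whose inputs are only the registers holding $X$ (and fixed constants);
\item Bob's communication machine, whose inputs are only $Y$;
\item the referee's transcript-testing machine.
\end{itemize}
Along a fixed path of Alice's machine, every register holds a rational function in $\R(X)$. So each branch test there is some $G=A(X)/B(X)$. Written in reduced form over $\R[X,Y]$, it stays a fraction of polynomials in $X$ alone, because $\gcd$ computations do not introduce new variables. Its numerator $N$ therefore lies in $\R[X]\setminus\{0\}$, and likewise Bob's numerators lie in $\R[Y]\setminus\{0\}$.

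\textbf{Step 3: $F$ divides none of these private numerators.} Suppose $N\in\R[X]$ and $N=F\cdot H$ in $\R[X,Y]$. Since $\R[X,Y]$ is a domain, degrees in the $Y$-variables add, so
\[
0=\deg_YN=\deg_YF+\deg_YH\geq1,
\]
which is a contradiction. The same argument with $\deg_X$ rules out Bob's numerators. Hence $N_{C_0,\tau_0,0}$ must come from a referee test on the transcript, which proves the claim.

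\textbf{Main obstacle.} The only delicate point is Step~2: making sure that the numerators $N_{C,\tau,i}$ collected into $P_{C,\tau}$ from private machines really are polynomials in one party's variables only. This rests on the convention that Alice's BSS machine never sees $Y$, and that reduced forms are taken consistently; the paper states that all fractions are reduced, which handles this. The degree argument itself is routine.
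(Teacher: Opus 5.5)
Your proposal is correct and follows essentially the same route as the paper: a numerator from a private test of Alice lies in $\R[X]$ (or, for Bob, in $\R[Y]$), so $F$ dividing it would make $F$ independent of the other block of variables, forcing $\nabla_YF=0$ (or $\nabla_XF=0$) and hence $R_F=0$, which contradicts $F\nmid R_F$. You only reorder these steps and make explicit two things the paper leaves implicit: the degree-additivity argument behind ``$F\mid A(X)$ implies $F\in\R[X]$'', and the symmetric case of Bob's tests.
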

    \begin{claimproof}
        Assume it came from a private computation, say from Alice. Then $N_{C_0,\tau_0,0}$ would only depend on $X$:
        $$
        N_{C_0,\tau_0,0}(X,Y) = A(X).
        $$
        But then $F(X,Y) \mid A(X)$ and so $F$ would only depend on $X$ and be independent of $Y$. This would imply that $\nabla_YF = 0$ and in particular $R_F = 0$ which means $F \mid R_F$. This contradicts $F \nmid R_F$ and so $N_{C_0,\tau_0,0}$ must come from a transcript test.
    \end{claimproof}

    Let $Q(X,Y)$ be the rational function such that (written in reduced representation)
    $$
    Q(X,Y) = \frac{N_{C_0,\tau_0,0}(X,Y)}{D_{C_0,\tau_0,0}(X,Y)}
    $$
    What we have shown so far is that by \Cref{claim:transcript-origin}, $Q(X,Y)$ is the result of a transcript test and so
    $$
    Q(X,Y) = \hat{Q}(a_1(X),...,a_{r_A}(X),b_1(Y),...,b_{r_B}(Y)) = \frac{N_{C_0,\tau_0,0}(X,Y)}{D_{C_0,\tau_0,0}(X,Y)}
    $$
    with
    $$
    F \mid N_{C_0,\tau_0,0}
    $$
    and since $Q$ was written in reduced form we have $\gcd(N_{C_0,\tau_0,0}, D_{C_0,\tau_0,0}) = 1$ so we also have
    $$
    F \nmid D_{C_0,\tau_0,0}
    $$

    We may therefore write $N_{C_0,\tau_0,0} = F^m \hat{N}$ with $m \geq 1$ and $F \nmid \hat{N}$. This, in turn, gives
    $$
    Q^2 = F^{2m}\left(\frac{\hat{N}}{D_{C_0,\tau_0,0}}\right)^2
    $$
    with $2m \geq 2$ and $F \nmid \hat{N}^2$ and $F \nmid D_{C_0,\tau_0,0}^2$. Thus, we now apply the fractional full-rank criterion, \cref{lem:full-rank-criteria-fractional}, to get that
    $$
    c(Q^2) \geq 2 \rank(\HH_{X|Y}(Q^2)) = 2n
    $$
    where the first inequality is from \cref{lem:comm-matrix-rat}. Finally, by the same observation in the proof of \cref{thm:framework-R} we get
    $$
    c(Q) \geq c(Q^2) \geq  2n
    $$
    as desired.
\end{proof}

\begin{remark}
    One can define \emph{complex} bounded-time BSS machines, that is, BSS machines when the underlying field $\F$ is $\C$,  wherein we branch only on $=$ or $\neq$. With this model one should be able to get a BSS analogue of the complex framework \cref{thm:framework-C}.

    Indeed, if we consider all the rational functions, $T_1,...,T_m$ along the trace in following $\neq0$ and take their product to be $G_k$. Then, in reduced form, we would have that
    $$
    G_k = \frac{N_k}{D_k}
    $$
    and, by the same argument as \Cref{clm:reject-path} and \Cref{claim:F-div-Gk}, we may conclude that
    $$
    F \mid N_k \qquad \text{ and } \qquad F \nmid D_k.
    $$
    Furthermore, by the same argument as in \Cref{claim:transcript-origin} it must be the case that a factor of $N_k$ comes from the numerator of a transcript test.

    The proof then follows exactly as in \cref{thm:framework-R-BSS} to conclude that
    $$
    c(G_k) \geq c(G_k^2) \geq 2n,
    $$
    i.e. the depth of the protocol, which is at least $c(Q)$,  is at least $2n$.
\end{remark}
\section*{Acknowledgements}
\addcontentsline{toc}{section}{Acknowledgements}

We thank the organizers and participants of the Bellairs Complexity Workshop 2026 in Barbados and the WAVE Workshop 2025 in Copenhagen for the stimulating environments and discussions that helped shape this work.

\printbibliography[heading=bibintoc]

\appendix

\section{Proof of the mixed Hessian rank bound}
\label{app:comm-matrix-proof}

\getkeytheorem{commmatrixlemma}
\begin{proof}

    Let $i,j\in \{1, \cdots, n\}$. We use the same notations as in  \Cref{def:comm-complexity}. In the rest of the proof, all the functions are at least of regularity $\mathcal{C}^2$ (as we are dealing with polynomials), so Schwarz's theorem can be applied (and the partial derivatives can be swapped).
    We compute $\HH_{X|Y}$:
    \begin{align*}
        \frac{\partial g}{\partial X_i}(X,Y) &= \frac{\partial Q(a_1(X), \cdots, a_{r_1}(X), b_1(Y), \cdots, b_{r_2}(Y))}{\partial X_i}(X,Y)\\
        \text{(chain rule)\quad} &= \sum_{p=1}^{r_1} \frac{\partial Q(a_1(X), \cdots, a_{r_1}(X), b_1(Y), \cdots, b_{r_2}(Y))}{\partial a_p(X)} \frac{\partial a_p(X)}{\partial X_i}
    \end{align*}
    Hence,
    \begin{align*}
        \frac{\partial^2 g}{\partial X_i \partial Y_j}(X,Y) &= \sum_{p=1}^{r_1} \frac{\partial^2 Q(a_1(X), \cdots, a_{r_1}(X), b_1(Y), \cdots, b_{r_2}(Y))}{\partial a_p(X) \partial Y_j} \frac{\partial a_p(X)}{\partial X_i}\\
        & \quad \quad + \frac{\partial Q(a_1(X), \cdots, a_{r_1}(X), b_1(Y), \cdots, b_{r_2}(Y))}{\partial a_p(X)}\underbrace{\frac{\partial^2 a_p(X) }{\partial X_i \partial Y_j}}_{= 0}\\
        \text{(chain rule)\quad}&= \sum_{p=1}^{r_1} \sum_{q=1}^{r_2} \left(\frac{\partial b_q(Y)}{\partial Y_j}\right)\left(\frac{\partial^2 Q(a_1(X), \cdots, a_{r_1}(X), b_1(Y), \cdots, b_{r_2}(Y)) }{\partial a_p(X) \partial b_q(Y)}\right)\left(\frac{\partial a_p(X)}{\partial X_i}\right)\\
    \end{align*}

    Thus, $$ \HH_{X|Y}(g) =  \begin{pmatrix}
            \partial a_1(X)/ \partial X \\
            \colon \\
            \partial a_{r_1}(X)/ \partial X
\end{pmatrix} \HH_{a|b}(Q)
\begin{pmatrix}
            \partial b_1(Y)/ \partial Y \\
            \colon \\
            \partial b_{r_2}(Y)/ \partial Y
\end{pmatrix}^T
$$

where $a=(a_1, \cdots, a_{r_1})$ and $b=(b_1, \cdots, b_{r_2})$.

Thus, $\rank(\HH_{X|Y}(g)) \leq \min(r_1, r_2)$ and $r_1 + r_2 \geq 2 \min(r_1, r_2) \geq 2\rank(\HH_{X|Y}(g)) $. We conclude.

\end{proof}
\end{document}